\documentclass[%
 reprint,
superscriptaddress,
nofootinbib,
 amsmath,amssymb,
 aps,
pra,
]{revtex4-2}

\usepackage{graphicx}
\usepackage{dcolumn}
\usepackage{bm}
\usepackage{hyperref}

\usepackage{amsmath}
\usepackage{amssymb}
\usepackage{bm}
\usepackage{stackrel}
\usepackage{tikz-cd}
\usepackage{tikz}
\usepackage[cm]{fullpage}
\usepackage{amsthm}
\usepackage{mathtools}
\usepackage{enumitem}
\usepackage[normalem]{ulem}
\usepackage{hyperref}
\usepackage[capitalize]{cleveref}
\usepackage{appendix}
\usepackage{comment}
\usepackage{subcaption}
\usepackage{centernot}
\usepackage{braket}
\usepackage{algorithm}
\usepackage{algpseudocode}
\usepackage{varwidth}
\usepackage{ulem}
\usepackage{apptools}
\AtAppendix{\counterwithin{Lemma}{section}}
\AtAppendix{\counterwithin{Theorem}{section}}
\AtAppendix{\counterwithin{Proposition}{section}}
\AtAppendix{\counterwithin{Corollary}{section}}

\newtheorem{Theorem}{Theorem}
\newtheorem{Observation}{Observation}
\newtheorem{Lemma}{Lemma}
\newtheorem{Proposition}{Proposition}
\newtheorem{Corollary}{Corollary}

\newtheorem{Result}{Result}

\theoremstyle{definition}

\begin{document}

\preprint{APS/123-QED}

\title{Necessary and Sufficient Conditions for Universal Gates with Pauli Strings and Beyond}

\author{Isaac D. Smith}
\affiliation{ICFO-Institut de Ciencies Fotoniques, The Barcelona Institute of Science and Technology, Av. Carl Friedrich Gauss, 08860 Castelldefels, Barcelona, Spain}

\author{Hans J. Briegel}
\affiliation{University of Innsbruck, Department of Theoretical Physics, Technikerstr. 21A, Innsbruck A-6020, Austria}

\author{Hendrik Poulsen Nautrup}
\email{hendrik.poulsen-nautrup@uibk.ac.at}
\affiliation{University of Innsbruck, Department of Theoretical Physics, Technikerstr. 21A, Innsbruck A-6020, Austria}

\date{\today}

\begin{abstract} 
Any quantum computation can be represented as a sequence of unitary evolutions generated by a finite set of Hamiltonians. For the case where this set consists of only products of Pauli operators, known as Pauli strings, we provide a necessary and sufficient condition for it to generate $\mathfrak{su}(2^n)$, i.e., to be universal for quantum computation on $n$ qubits. When combining Pauli strings with a general Hamiltonian, we show a sufficient (and in certain circumstances even necessary) condition for universality based on the Pauli-basis expansion of the Hamiltonian. As an application of these results, we prove two corollaries: (i) a necessary and sufficient condition for the universality of a general Hamiltonian given arbitrary single-qubit control on all qubits, and (ii) the universality of an XYZ Heisenberg Hamiltonian with local control of just two adjacent qubits.
\end{abstract}

\maketitle

\section{Introduction}

Suppose you have a quantum system at hand, and a number of ways of controlling said system. A fundamental question to ask is, what exactly can this system be used for? In the context of quantum computation, this question can be rephrased as, which {\em computations} can this system be made to perform?

From the perspective of quantum control theory \cite{huang1983controllability,d2007introduction,albertini2001notions,albertini2002lie,albertini2021subspace,Ramakrishna,Schirmer_01}, which has substantial overlap with both theoretical and experimental approaches to quantum computing \cite{Ram_96}, the above question is typically phrased in terms of the Hamiltonians describing the controllable evolutions of the system. As the time evolution generated by a Hamiltonian corresponds to unitary evolution on the state space of the quantum system, the above question can be further refined to, which {\em unitaries} can be implemented by the set of Hamiltonian controls that I have?

If, for a given system and set of Hamiltonians, the answer to the above question is ``all of them'', then the system and Hamiltonians are considered to be {\em universal} \cite{d2007introduction,albertini2001notions}. In the context of quantum computing, this translates to being a general purpose computer able to perform any possible computation and is thus a desirable property. However, demonstrating that a set of Hamiltonians suffices for universality is typically a non-trivial endeavor. 

In this work, we establish several criteria that are necessary and sufficient for certain sets of Hamiltonians to satisfy in order to be universal. These criteria are, moreover, readily checkable and are of practical use in several realistic settings (see, e.g., the corollaries presented in \Cref{sec:formal_results}). Specifically, we consider quantum systems consisting of $n$ qubits and sets of Hamiltonians containing Pauli strings, that is, $n$-fold tensor products of single-qubit Pauli operators \cite{nielsen2010quantum}. Concretely, we establish several sufficient and also necessary and sufficient criteria for universality for sets of Hamiltonians comprised of: (i) Pauli strings only, (ii) Pauli strings and a single other Hamiltonian, and (iii) Hamiltonians with local control on differing numbers of qubits. These results are stated informally in the following subsection.

Why focus on Pauli strings? There are several practical and pragmatic reasons for doing so. On the practical side, Pauli strings naturally appear in various contexts of theoretical and experimental quantum computation, for example in measurement-based quantum computation \cite{Raussendorf2003,briegel2009measurement} and in quantum computation using trapped ions \cite{debnath2016demonstration}. Pragmatically, the set of Pauli strings exhibit several favorable mathematical properties: they form a (real) basis for the set of all $n$-qubit Hermitian operators, and any pair of them either commute or anti-commute. These latter properties allow for analytical progress to be made on the question of demonstrating universality and form the foundation for the criteria established in this work. For these and other reasons the study of Pauli strings has received significant attention recently (see, e.g., Refs~\cite{aguilar2024classificationpauliliealgebras,smith2025optimally,cuypers2026dynamicallie,gargiulo2026paulistringsquantumdynamics}).

\subsection{Results} \label{subsec:results}

In quantum control theory in general, and in this work in particular, methods based on Lie algebra theory \cite{hall2013lie,knapp2002liegroups} are commonly used \cite{albertini2002lie,jurdjevic1972control,schirmer2002identification}. However, as these methods are often technical and since one of the aims of this work is to provide readily accessible and applicable criteria for checking universality, we outline the key results in an informal manner here, leaving the corresponding formal statements to \Cref{sec:formal_results}. 

A core guiding principle in this work, stems from the following observation:

\textbf{Observation:} \textit{A set of Hamiltonians $\mathbb{H}$ is a universal generating set if and only if it can generate any other universal generating set $\mathbb{H}'$}.

Accordingly, while it may be difficult to demonstrate directly that a set of Hamiltonians of interest is a universal generating set, it may be possible to reduce to problem to showing that a simpler set of Hamiltonians is universal.

In this spirit, the simpler sets of Hamiltonians that we consider here will typically be sets comprising entirely of Pauli strings. Pauli strings exhibit several favorable features: the set of all such strings form a (real) basis for the space of $n$-qubit Hamiltonians, they can be mapped to vector spaces over binary fields \cite{nielsen2010quantum}, and each pair either commutes or anti-commutes. Using this latter fact, it is possible to consider a graph where each vertex is associated to a Pauli string in the set under consideration and where an edge exists between vertices if the corresponding Pauli strings anti-commute (see \Cref{sec:formal_results} for a formal definition). Our first result can then be (informally) stated as follows:
\begin{Result}[See \Cref{thm:nec_suf_pauli}] \label{res:Pauli_strings_only} A set of $n$-qubit Pauli strings $\mathbb{P}$ ($n \ge 2$) is universal if and only if 
\begin{enumerate}[label=(\roman*)]
    \item the set of Pauli strings obtained from $\mathbb{P}$ by taking iterated commutators contains a subset that is universal on $2\le k < n$ qubits,
    \item taking products of elements of $\mathbb{P}$ produces all Pauli strings on $n$ qubits, and
    \item the anti-commutation graph of $\mathbb{P}$ is connected.
\end{enumerate}
\end{Result}
The second and third criteria can be directly checked from the set $\mathbb{P}$ under consideration, with the latter being a property of the anti-commutation graph. The first criterion also aligns with the guiding principle above: while it can require some effort to establish in general, in many cases it is straightforward, such as in the simplest case where $k=2$. Furthermore, recent results from Ref.~\cite{cuypers2026dynamicallie} may provide efficient methods for checking this criterion from the anti-commutation graph as well (see also Ref.~\cite{gargiulo2026paulistringsquantumdynamics}).

Now suppose that, instead of only having Pauli strings available, we have a set of Pauli strings $\mathbb{Q}$ and also a more general Hamiltonian $H$. It is now no longer the case that all elements of the set of control Hamiltonians either commute or anti-commute, so the notion of anti-commutation graph does not directly extend to this case. Nevertheless, it is possible to associate a graph to the joint set of Pauli strings and $H$ in a productive way. 

As stated above, the set of Pauli strings form a real basis for the set of Hamiltonians, and so $H$ can be written as a real linear combination of Pauli strings. Accordingly, it is possible to consider the anti-commutation graph of the set of Pauli strings comprised of $\mathbb{Q}$ as well as those appearing (with a non-zero coefficient) in the decomposition of $H$. Furthermore, through suitable application of the controls in $\mathbb{Q}$ and of $H$, it is possible to, in effect, enact an {\em overall} control corresponding to a Pauli string in the latter set. {\em Which} Pauli strings fall into this category can be determined using a graphical operation - called a {\em unique neighbor expansion} below, see \Cref{sec:formal_results} - acting on the anti-commutation graph built from $\mathbb{Q}$ and $H$. This is the content of the next result:
\begin{Result}[See \Cref{lem:isolation}] \label{res:isolation} The unitary $e^{-itP}$ generated by a Pauli string $P$ can be implemented if $P$ corresponds to a vertex in the (iterated) unique neighbor expansion of the anti-commutation graph of $\mathbb{Q}$ and $H$.
\end{Result}
While the exact definition of the unique neighbor expansion operation is postponed until \Cref{sec:formal_results}, there are two key points related to the above result worth highlighting here. First, since the result is again phrased in graphical terms, it remains a directly checkable criterion as per the aim of producing accessible universality criteria that is central to this work. Second, it can allow a question of universality of Pauli strings and a general Hamiltonian to be reduced to a question of universality of the set of Pauli strings produced by unique neighbor expansion, to which the first result above can be applied. That is,
\begin{Result}[See \Cref{thm:suf_ham}] \label{res:universality_H} If a set of Pauli strings $\mathbb{P}$ produced by (iterated) unique neighbor expansion of the set $\mathbb{Q}$ and a Hamiltonian $H$ is universal, then $\mathbb{Q} \cup \{H\}$ is universal.
\end{Result}
In certain circumstances, the above sufficiency statement is also necessary.

Finally, we demonstrate the utility of the above results by considering two scenarios of interest. The first consists in a set of general Hamiltonian controls $\mathbb{H}$ (with no constraints on being Pauli strings) that contains a subset $\mathbb{H}_{\text{s.q.}}$ that can implement any single qubit unitary operation on any of the $n$ qubits but no entangling operations, and a single additional Hamiltonian $H$ which can produce entanglement. That is, all local control is possible and a single choice of non-local control is available. In such a scenario, we have the following result:
\begin{Result}[See \Cref{cor:local_control}] \label{res:local_control} The set of Hamiltonians $\mathbb{H}_{\text{s.q.}} \cup \{H\}$ is universal if and only if the set of Pauli strings appearing in the decomposition of $H$ with non-zero coefficients:
\begin{enumerate}[label=(\roman*)]
    \item contains at least one element with even support, and
    \item has a connected anti-commutation graph.
\end{enumerate}
\end{Result}
The second scenario of interest consists in a relaxation on the requirements on $\mathbb{H}_{\text{s.q.}}$ in the above scenario, but for a specific, yet practically relevant, choice of $H$. The relaxation on $\mathbb{H}_{\text{s.q.}}$ is that any single qubit unitary ( i.e., arbitrary local control) need only be implementable on a subset of qubits, while the choice of $H$ is the nearest-neighbor anisotropic Heisenberg, or $XYZ$, spin-chain Hamiltonian, a standard exchange-interaction model in quantum magnetism with historical origins in the theory of ferromagnetism \cite{heisenberg1928ferromagnetismus,auerbach1994quantummagnetism}, i.e.,
\begin{align}
    H = \sum_{j=1}^{n-1} J_{x}X_{j}X_{j+1} + J_{y}Y_{j}Y_{j+1} + J_{z}Z_{j}Z_{j+1},
\end{align}
where the scalar values $J_{x}$, $J_{y}$ and $J_{z}$ are real and non-zero, and where $X$, $Y$ and $Z$ denote the Pauli operators. In this case, we have the following:
\begin{Result}[See \Cref{cor:heisenberg}] \label{res:Heisenberg} Suppose that $\mathbb{H}_{\text{s.q.}}$ generates arbitrary local control on the first $k$ qubits and let $H$ be as above. Then $\mathbb{H}_{\text{s.q.}} \cup \{H\}$ is universal if and only if $k\geq2$. 
\end{Result}

\subsection{Discussion and outlook}\label{sec:discussion}

This work gives a broad perspective on the universality of dynamical Lie algebras generated by multi-qubit Pauli strings as well as more general Hamiltonians. On the one hand, \Cref{res:Pauli_strings_only} gives a necessary and sufficient criterion for universality in the case where all Hamiltonians are Pauli-strings. On the other hand, \Cref{res:isolation} and \Cref{res:universality_H} show that, in some cases, questions of universality for more general Hamiltonians can be reduced to the former via the isolation of Pauli terms in the Hamiltonian expansion. From this perspective, \Cref{res:local_control} and \Cref{res:Heisenberg} are not isolated corollaries, but rather instances of the same underlying mechanism. In particular, previously known universality statements such as these, (they were discussed in Ref.~\cite{bremner2004fungible} and Ref.~\cite{zeier2011symmetry}, respectively) arise here in a unified framework.

As the case of Pauli strings in \Cref{res:Pauli_strings_only} forms the basis for all the universality results presented here, it should also be viewed in the context of recent, parallel developments. In Ref.~\cite{aguilar2024classificationpauliliealgebras}, a characterization of Lie algebras was given in terms of a standard form defined with respect to a notion of contraction of the anti-commutation graph of the corresponding set of Pauli strings. This method can be used to determine whether a set of Pauli strings $\mathbb{P}$ generates the full Lie algebra $\mathfrak{su}(2^n)$ or a sub-algebra thereof, but requires the set $\mathbb{P}$ to be minimal and also requires some amount of processing to produce the standard form of the anti-commutation graph. In Ref.~\cite{smith2025optimally}, only those sets $\mathbb{P}$ that generate the full Lie algebra $\mathfrak{su}(2^n)$ were considered, with a focus on demonstrating tight lower bounds on the size of universal sets $\mathbb{P}$ as well as on the efficiency of such sets for generating certain classes of unitary dynamics. Several results of this latter work are precursors to the results established here.

More recently, the setting of \Cref{res:Pauli_strings_only} has been treated in Ref.~\cite{cuypers2026dynamicallie}, which establishes a complete and graphical characterization of all dynamical Lie (sub)-algebras generated by Pauli strings using the anti-commutation graph of the set $\mathbb{P}$ directly. In particular, Ref.~\cite{cuypers2026dynamicallie} establishes, among other results, that a set $\mathbb{P}$ generates $\mathfrak{su}(2^n)$ if and only if  (i) its anti-commutation graph contains as a subgraph one of a finite set of ``forbidden'' graphs on $6$ vertices, as well as two additional criteria similar to (ii) and (iii) in \cref{res:Pauli_strings_only}. These results are established using tools arising from the study of quadratic spaces over binary fields and also lead to efficient algorithmic methods for checking universality.

Ultimately, however, sets of Pauli strings represent a restricted class of all possible sets of Hamiltonians generating unitary dynamics. It is thus prudent to investigate whether and how the results and techniques established in the Pauli-string setting generalize to other Hamiltonians. Here, we use the fact that any $n$-qubit Hamiltonian can be expanded in the Pauli string basis as a means for doing so. In this sense, the present results connect the recent theory of the universality of Pauli string generating sets to more general questions in Hamiltonian simulation and quantum control.

Conceptually, if a generating set consists of both Pauli strings and more general Hamiltonians, it may be possible to use the former to ``isolate'' the Pauli strings that appear in the expansions of the latter in the Pauli string basis. These isolated Pauli strings, along with the original Pauli strings included in the generating set, form a set of Pauli strings to which the results of Pauli string universality can be applied. In essence, this is the content of \Cref{res:isolation} and \Cref{res:universality_H}.

However, as it stands, Result~\ref{res:universality_H} is a sufficient criterion for universality but not a necessary one (in general). It essentially only relies on commutation and neglects linear combinations, and so it cannot be a necessary one without additional modifications. It remains, therefore, an open question how to extend the present criterion to capture all Pauli strings that can in principle be isolated. 

Let us elaborate. Given Hamiltonians $H_1,\dots,H_m$, one would like criteria, stated directly in terms of their Pauli expansions, that determine which Pauli string terms can be isolated so that the above Pauli string results can be applied. To see why the criteria related to \Cref{res:isolation}, that is, in terms of unique neighbor expansion, is insufficient in general, consider the following two examples. First, consider the following 1-qubit example:
\begin{align*}
iH_1^{(1)}&=\alpha\, iX,\\
iH_2^{(1)}&=\beta_1\, iY+\beta_2\, iZ,
\end{align*}
with $\alpha,\beta_1,\beta_2\in\mathbb{R}^*$. Although \Cref{res:isolation} suggests that no term in $iH_2^{(1)}$ can be isolated, in fact, all terms can be isolated by making use of linear combinations. This demonstrates that any criteria satisfying the above desiderata will likely need to involve more than just the anti-commutation graph of the set of all Pauli strings appearing in the set of Hamiltonians.

As a second example, consider the following 2-qubit example:
\begin{align*}
iH_1^{(2)}&=\alpha\, iIX,\\
iH_2^{(2)}&=\beta_1\, iIY+\beta_2\, iXZ+\beta_3\, iYZ+\beta_4\, iZZ,
\end{align*}
with $\alpha,\beta_1,\ldots,\beta_4\in\mathbb{R}^*$. In this case, although the set $\{iIX,iIY,iXZ,iYZ,iZZ\}$ arising as the set of all Pauli-strings appearing in the Hamiltonians above is universal on two qubits, the pair $\{iH_1^{(2)},iH_2^{(2)}\}$ is not (which is proven in \cref{app:counterexample}). Thus, any sharper criterion must retain information about how Pauli terms are grouped into the available control Hamiltonians, and make use of linear combinations. It would be interesting to identify a refinement of the present graph-theoretic picture that captures precisely this additional structure, and see how this refinement can be combined with the results presented here and also those presented in, e.g., Refs.~\cite{cuypers2026dynamicallie,gargiulo2026paulistringsquantumdynamics}.

Finally, it is natural to ask to what extent the present picture extends beyond qubits. For the setting of \Cref{res:local_control}, such an extension is already suggested by the results of Ref.~\cite{bremner2005simulating}, which show that, with arbitrary local unitary control, the obstruction associated with odd-body couplings is specific to the qubit case. This suggests that the parity condition appearing in the present qubit analysis is genuinely a two-level phenomenon. Extending the graph-theoretic and isolation-based methods developed here to generalized Pauli/Weyl operators on qudits, and to restricted local control in the spirit of \Cref{res:Heisenberg}, appears to be a promising direction for future work.

\section{Background and Definitions} \label{sec:background}

In the remainder of this manuscript, we provide the required background content for stating and proving the results from \Cref{subsec:results} in a formal manner. The present section is devoted to the former, while the next section, \Cref{sec:formal_results}, deals with the latter. Concretely, we provide here the required notation and background definitions from Lie algebras, the study of Pauli strings within quantum information theory, and graph theory relevant for the results and proofs in \Cref{sec:formal_results}.

In this work, we exclusively deal with quantum systems consisting of $n$ qubits, with state space given by $\mathcal{H} \cong \bigotimes_{j=1}^{n} \mathbb{C}^{2}$. The set of anti-Hermitian operators acting on $\mathcal{H}$ forms a real vector space and, for brevity, we will identify our Lie algebra of interest, denoted $\mathfrak{su}(2^{n})$, with its defining representation acting on $\mathcal{H}$. That is, we define the {\bf real Lie algebra $\mathfrak{su}(2^{n})$} to be the $(4^{n}-1)$-dimensional real vector space of traceless, skew-Hermitian operators equipped with a Lie bracket $[\cdot, \cdot]: \mathfrak{su}(2^{n}) \times \mathfrak{su}(2^{n}) \rightarrow \mathfrak{su}(2^{n})$ given by the matrix commutator, i.e., for all $iA, iB \in \mathfrak{su}(2^{n})$\footnote{A remark regarding notation. Since $\mathfrak{su}(2^{n})$ consists of {\em skew}-Hermitian operators, we will explicitly represent the imaginary unit $i$ where appropriate. This applies to individual operators, such as $iA$ and $iB$ considered here, as well as sets of operators, such as $i\mathbb{H}$. In the former case, $A$ and $B$ can be considered as Hermitian operators, and in the latter case, $\mathbb{H}$ can be considered as a set of Hermitian operators ($i\mathbb{H}$ is then understood to be the set obtained from $\mathbb{H}$ by applying $i$ to each element).},
\begin{align}
[iA, iB] := (iA)(iB) - (iB)(iA).
\end{align} 
For each $iA \in \mathfrak{su}(2^{n})$, the {\bf adjoint map} associated to $iA$ can be defined as $\textrm{ad}_{iA}: \mathfrak{su}(2^{n}) \rightarrow \mathfrak{su}(2^{n})$ via $\textrm{ad}_{iA}(iB) := \frac{1}{2}[iA, iB]$. The factor of a half is included for convenience when $iA$ and $iB$ are Pauli strings (see below).

Throughout this work, the sets of Hamiltonians whose universality are under consideration will be denoted using math-bold font, such as $i\mathbb{H} \subseteq \mathfrak{su}(2^{n})$ for general skew-Hermitian operators and $i\mathbb{P}, i\mathbb{Q} \subseteq \mathfrak{su}(2^{n})$ for Pauli strings. For a given set $i\mathbb{H}$, an important related set is the set of all skew-Hermitian operators that result from iteratively applying the adjoint map to $i\mathbb{H}$. Let us define $i\mathbb{H}_{\textrm{ad}^{(0)}}:= i\mathbb{H}$ and, for all $r \ge 1$, 
\begin{align}
i\mathbb{H}_{\textrm{ad}^{(r)}} := \{ \textrm{ad}_{iA}(iB) | iA \in i\mathbb{H}, iB \in i\mathbb{H}_{\textrm{ad}^{(r-1)}} \}.
\end{align}
Let us also define
\begin{align}
\langle i\mathbb{H}\rangle_{[\cdot,\cdot]}:=\bigcup_{r=0}^{\infty} i\mathbb{H}_{\textrm{ad}^{(r)}}.
\end{align}
With this notation, the {\bf Lie closure} of $i\mathbb{H}$, denoted by $\langle i\mathbb{H} \rangle_{\textrm{Lie}}$, is the real vector space of linear combination of elements of $\langle i\mathbb{H}\rangle_{[\cdot,\cdot]}$, i.e.,
\begin{align}
\langle i\mathbb{H} \rangle_{\textrm{Lie}} := \textrm{span}_{\mathbb{R}} \left( \langle i\mathbb{H}\rangle_{[\cdot,\cdot]} \right).
\end{align}
A set $i\mathbb{H} \subseteq \mathfrak{su}(2^{n})$ is a {\bf universal generating set} for $\mathfrak{su}(2^{n})$ if and only if its Lie closure is $\mathfrak{su}(2^{n})$, i.e.,
\begin{align}
\langle i\mathbb{H} \rangle_{\textrm{Lie}} =\mathfrak{su}(2^{n}).
\end{align}
Let us now state more formally the observation made in \Cref{subsec:results}:
\begin{Observation} \label{prop:univ_sets_in_Lie_closure_univ_sets} Let $i\mathbb{H'} \subseteq \mathfrak{su}(2^{n})$ be a universal generating set for $\mathfrak{su}(2^{n})$. Then $i\mathbb{H} \subseteq \mathfrak{su}(2^{n})$ is a universal generating set for $\mathfrak{su}(2^{n})$ if and only if 
\begin{align}
i\mathbb{H'} \subseteq \langle i\mathbb{H} \rangle_{\textrm{Lie}}.
\end{align}
\end{Observation}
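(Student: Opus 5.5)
The plan is to prove the two directions separately. The only tool needed is the fact that the Lie closure of a set is the smallest real Lie subalgebra of $\mathfrak{su}(2^{n})$ containing that set. So the first step will be to record this explicitly. The span $\langle i\mathbb{H}\rangle_{\textrm{Lie}}$ of all iterated adjoint images is closed under brackets: any bracket of two iterated commutators can be rewritten, by repeated use of the Jacobi identity (induction on nesting depth), as a linear combination of left-normed iterated commutators $\textrm{ad}_{iA_1}\cdots\textrm{ad}_{iA_r}(iB)$ with all $iA_j, iB \in i\mathbb{H}$. Conversely, any Lie subalgebra containing $i\mathbb{H}$ contains every $i\mathbb{H}_{\textrm{ad}^{(r)}}$ by induction on $r$, and hence contains their span.

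The forward direction is immediate. If $i\mathbb{H}$ is universal, then $\langle i\mathbb{H}\rangle_{\textrm{Lie}}=\mathfrak{su}(2^{n})\supseteq i\mathbb{H}'$.

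For the reverse direction, suppose $i\mathbb{H}'\subseteq\langle i\mathbb{H}\rangle_{\textrm{Lie}}$. Since $\langle i\mathbb{H}\rangle_{\textrm{Lie}}$ is a Lie subalgebra containing $i\mathbb{H}'$, minimality gives
\begin{align}
\mathfrak{su}(2^{n})=\langle i\mathbb{H}'\rangle_{\textrm{Lie}}\subseteq\langle i\mathbb{H}\rangle_{\textrm{Lie}}\subseteq\mathfrak{su}(2^{n}),
\end{align}
so equality holds and $i\mathbb{H}$ is universal.

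The only step needing care is the closure-under-brackets claim. The paper defines the closure through left-normed iterated adjoints only, so we must check that this span is genuinely a subalgebra. I expect this to be the main (though standard) obstacle. I would handle it by showing that for fixed $iX \in \langle i\mathbb{H}\rangle_{[\cdot,\cdot]}$ of depth $r$, the map $\textrm{ad}_{iX}$ preserves the span. Write $iX=\textrm{ad}_{iA}(iY)$ with $iY$ of depth $r-1$; Jacobi gives $\textrm{ad}_{iX}=\tfrac12[\textrm{ad}_{iA},\textrm{ad}_{iY}]$. Both $\textrm{ad}_{iA}$ and, by the induction hypothesis, $\textrm{ad}_{iY}$ preserve the span, so their commutator does too. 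The factor $\tfrac12$ in the adjoint map is harmless throughout.
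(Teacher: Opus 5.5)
Your proof is correct and follows the paper's argument. The forward direction is immediate, and the reverse direction rests on $\langle i\mathbb{H}'\rangle_{\textrm{Lie}}\subseteq\langle i\mathbb{H}\rangle_{\textrm{Lie}}\subseteq\mathfrak{su}(2^{n})$; the paper simply asserts the first inclusion, whereas you justify it by showing via Jacobi that the span of left-normed commutators is closed under brackets, plus a minimality argument. One cosmetic slip: with the normalized adjoint one actually gets $\textrm{ad}_{\textrm{ad}_{iA}(iY)}=[\textrm{ad}_{iA},\textrm{ad}_{iY}]$ with no factor $\tfrac12$, which, as you note, does not affect the span-preservation argument.
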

The proof of this observation is straightforward: For any $i\mathbb{H} \subseteq \mathfrak{su}(2^{n})$, it must be the case that $\langle i\mathbb{H} \rangle_{\textrm{Lie}} \subseteq \mathfrak{su}(2^n)$, and for any $i\mathbb{H'} \subseteq \mathfrak{su}(2^n)$ for which $i\mathbb{H'} \subseteq \langle i\mathbb{H} \rangle_{\textrm{Lie}}$, it must also be the case that $\langle i\mathbb{H'}\rangle_{\textrm{Lie}} \subseteq \langle i\mathbb{H} \rangle_{\textrm{Lie}}$. Thus, if $\langle i\mathbb{H} \rangle_{\textrm{Lie}} = \mathfrak{su}(2^n)$, then $i\mathbb{H'} \subseteq \langle i\mathbb{H} \rangle_{\textrm{Lie}}$ holds directly, and if $\langle i\mathbb{H'}\rangle_{\textrm{Lie}} = \mathfrak{su}(2^n)$, then $\mathfrak{su}(2^n) \subseteq \langle i\mathbb{H} \rangle_{\textrm{Lie}} \subseteq \mathfrak{su}(2^n)$, ensuring that $i\mathbb{H}$ is a universal generating set. Nevertheless, this observation allows a question regarding the universality of a set $i\mathbb{H}$ to be reduced to a question of whether $\langle i\mathbb{H} \rangle_{\textrm{Lie}}$ contains a set known to be universal. The results established in the next section provide one manner for doing so.

As $\mathfrak{su}(2^{n})$ is a real vector space, it is guaranteed to have a basis, and a particularly useful choice for our purposes is comprised of Pauli strings. Let $I$ denote the $2 \times 2$ identity matrix and let $X, Y, Z$ denote the Pauli-$X$, Pauli-$Y$ and Pauli-$Z$ operators respectively. An $n$-qubit {\bf Pauli string} is an operator $iP$ acting on $\mathcal{H}$ given by
\begin{align}
    iP = i\bigotimes_{j=1}^{n}p_{j} 
\end{align}
where, for each $j$, $p_{j} \in \{I,X,Y,Z\}$. We denote the set of all $n$-qubit Pauli strings by 
\begin{align}
i\mathcal{P}_{n} := \left\{i\bigotimes_{j=1}^{n}p_{j} | p_{j} \in \{I,X,Y,Z\}, j = 1, \dots,n \right\}
\end{align}
and define $i\mathcal{P}_{n}^{*} := i\mathcal{P}_{n} \setminus \{iI \otimes I \otimes \dots \otimes I\}$\footnote{A further remark regarding notation. We will always denote elements of $i\mathcal{P}_{n}^{*}$ with capital letters, i.e., as $iP$ or $iA_{k}$. All tensor factors of these elements, which are elements of $\{I,X,Y,Z\}$, will be represented with lower case letters, such as $p_{j}$ or $a_{j}$. This will allow us to distinguish between, say, an element $iA_{k} \in i\mathcal{P}_{n}^{*}$ which is an operator on $n$-qubits, with its tensor factors $a_{j}$, as sometimes the indices $j$ and $k$ will run over the same set.}. Note that, since the Pauli operators $X,Y,Z$ are Hermitian and traceless, $i\mathcal{P}_{n}^{*} \subset \mathfrak{su}(2^{n})$. Also, since either $AB = BA$ or $AB = -BA$ for all $A, B \in\{I,X,Y,Z\}$, it follows that, for any $iP, iQ \in i\mathcal{P}_{n}$, either they commute, i.e., $(iP)(iQ) = (iQ)(iP)$, or they {\bf anti-commute}, i.e., $(iP)(iQ)=-(iQ)(iP)$. In the latter case, we have that
\begin{align}
    \textrm{ad}_{iP}(iQ) = (iP)(iQ) = \pm iR,
\end{align}
where $iR$ is also a Pauli string. It is here where the rationale for including the factor of a half in the definition of the adjoint map becomes clear: it allows any set $i\mathbb{P} \subseteq i\mathcal{P}_{n}$ to essentially map to $i\mathcal{P}_{n}$ under the adjoint map without having to keep track of real coefficients other than $\pm 1$. To make this more precise, let us introduce the following notion of proportionality. Let $iP \in i\mathcal{P}_{n}$ and $iH \in \mathfrak{su}(2^n)$. We say that $iP$ and $iH$ are {\bf proportional}, which we denote by $iH \propto iP$, if either $iH = iP$ or $iH = -iP$. Similarly, for sets $i\mathbb{P} \subseteq i\mathcal{P}_{n}^*$ and $i\mathbb{H} \subseteq \mathfrak{su}(2^n)$, we write $i\mathbb{H} \propto i\mathbb{P}$ if, for each $iH \in i\mathbb{H}$ there exists an element $iP \in i\mathbb{P}$ such that $iH \propto iP$. 

Consider a set $i\mathbb{P} \subseteq i\mathcal{P}_{n}^{*} \subseteq \mathfrak{su}(2^n)$. From earlier, we know that $i\mathbb{P}$ is a universal generating set for $\mathfrak{su}(2^n)$ if and only if $\langle i\mathbb{P} \rangle_{\textrm{Lie}} = \mathfrak{su}(2^n)$. However, we also know that $\mathfrak{su}(2^n) = \textrm{span}_{\mathbb{R}} i\mathcal{P}_{n}^{*}$, that $\langle i\mathbb{P} \rangle_{\textrm{Lie}} = \textrm{span}_{\mathbb{R}} \langle i\mathbb{P} \rangle_{[\cdot,\cdot]}$, and, from the above paragraph, that $\langle i\mathbb{P} \rangle_{[\cdot,\cdot]} \subseteq i\mathcal{P}_{n}^{*}$. Putting these facts together, we arrive at one of the reasons why demonstrating the universality of sets of Pauli strings remains tractable:
\begin{Proposition}[Universality of Pauli strings~\cite{smith2025optimally}]\label{prop:universal_pauli_commutator}
    Let $i\mathbb{P}\subseteq i\mathcal{P}_{n}^{*}$. Then, 
    \begin{align}
        &\langle i\mathbb{P}\rangle_{\mathrm{Lie}}=\mathfrak{su}(2^n)\nonumber 
        \Leftrightarrow \langle i\mathbb{P}\rangle_{[\cdot,\cdot]} \propto i\mathcal{P}_n^*.
    \end{align}
\end{Proposition}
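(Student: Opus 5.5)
The plan is to prove the two implications separately. Both rest on three facts already collected before the statement: (a) $i\mathcal{P}_n^*$ is a real basis of $\mathfrak{su}(2^n)$; (b) $\langle i\mathbb{P}\rangle_{\mathrm{Lie}}=\mathrm{span}_{\mathbb{R}}\langle i\mathbb{P}\rangle_{[\cdot,\cdot]}$; (c) every element of $\langle i\mathbb{P}\rangle_{[\cdot,\cdot]}$ is proportional to some element of $i\mathcal{P}_n^*$. I read the right-hand side as saying that every Pauli string $iP\in i\mathcal{P}_n^*$ is, up to sign, obtained as an element of $\langle i\mathbb{P}\rangle_{[\cdot,\cdot]}$, i.e.\ the map ``up to sign'' from $\langle i\mathbb{P}\rangle_{[\cdot,\cdot]}$ onto $i\mathcal{P}_n^*$ is surjective (by (c) it is automatically well-defined into $i\mathcal{P}_n^*$).

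Before either direction, I need to confirm (c) by induction on $r$, with one correction to the wording. Elements of $i\mathbb{P}_{\mathrm{ad}^{(0)}}$ are Pauli strings. If $iB\propto iQ$ and $iA\in i\mathbb{P}$, then $\mathrm{ad}_{iA}(iB)$ is either $0$ (when $iA$ and $iQ$ commute) or $\pm(iA)(iQ)$, which is $\pm$ a Pauli string up to the phase bookkeeping already noted in the paper. The product is not the identity, since anticommuting strings are distinct. So the set $\langle i\mathbb{P}\rangle_{[\cdot,\cdot]}$ as literally defined may contain $0$. I would treat $0$ as harmless: discard it, since it does not affect spans.

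For ($\Leftarrow$): if every $iP\in i\mathcal{P}_n^*$ has $\pm iP\in\langle i\mathbb{P}\rangle_{[\cdot,\cdot]}$, then $i\mathcal{P}_n^*\subseteq\mathrm{span}_{\mathbb{R}}\langle i\mathbb{P}\rangle_{[\cdot,\cdot]}=\langle i\mathbb{P}\rangle_{\mathrm{Lie}}$. By (a), $\mathfrak{su}(2^n)\subseteq\langle i\mathbb{P}\rangle_{\mathrm{Lie}}$, and the reverse inclusion is trivial.

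For ($\Rightarrow$): suppose some $iP_0\in i\mathcal{P}_n^*$ is not proportional to any element of $\langle i\mathbb{P}\rangle_{[\cdot,\cdot]}$. By (c), the nonzero elements of $\langle i\mathbb{P}\rangle_{[\cdot,\cdot]}$ are then each $\pm$ an element of $i\mathcal{P}_n^*\setminus\{iP_0\}$. Hence their real span lies in $\mathrm{span}_{\mathbb{R}}(i\mathcal{P}_n^*\setminus\{iP_0\})$, which by linear independence (a) does not contain $iP_0$. So $\langle i\mathbb{P}\rangle_{\mathrm{Lie}}\neq\mathfrak{su}(2^n)$, a contradiction.

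The only real obstacle is the bookkeeping in (c): checking that the factor $\tfrac12$ makes $\tfrac12[iA,iQ]=(iA)(iQ)$ when the strings anticommute, and that this product is real-proportional to a Pauli string in $i\mathcal{P}_n^*$ rather than carrying an extra factor of $i$. The check goes as follows. Write $(iA)(iQ)=-AQ$. Since $A$ and $Q$ are Hermitian and anticommute, $AQ$ is anti-Hermitian. A single-qubit product of Paulis is a phase in $\{\pm1,\pm i\}$ times a Pauli, so $AQ$ is such a phase times a Pauli string $R$; because $AQ$ is anti-Hermitian, the phase must be $\pm i$. Hence $-AQ=\pm iR$, which is proportional to $iR$ as required. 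Everything else follows from linear independence.
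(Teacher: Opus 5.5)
Your proof is correct and follows the same route as the paper, which obtains the proposition by combining exactly your facts (a)--(c): $i\mathcal{P}_n^*$ is a real basis of $\mathfrak{su}(2^n)$, $\langle i\mathbb{P}\rangle_{\mathrm{Lie}}=\mathrm{span}_{\mathbb{R}}\langle i\mathbb{P}\rangle_{[\cdot,\cdot]}$, and iterated brackets of Pauli strings are $0$ or $\pm$ a Pauli string. Your version is more careful than the paper's sketch, since you discard the zero element, read $\propto$ as surjectivity up to sign (the reading under which the statement has content), and spell out the linear-independence argument for ($\Rightarrow$).
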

That is, to demonstrate the universality of a set $i\mathbb{P}\subseteq i\mathcal{P}_{n}^{*}$, one needs only to consider how $i\mathbb{P}$ behaves under the adjoint map associated to elements therein. In their essence, many of the results considered in this work boil down to showing the right-hand side of the above proposition for a suitable choice of $i\mathbb{P}$.

Several further concepts related to Pauli strings will be needed below. For an arbitrary $iH \in \mathfrak{su}(2^n)$, let us define the {\bf Pauli set} of $iH$, denoted $i\mathbb{P}(iH)$, to be the elements of $i\mathcal{P}_{n}^{*}$ appearing in the decomposition of $iH$ in the Pauli string basis with a non-zero coefficient. That is, if 
\begin{align}
    iH = \sum_{iP \in i\mathcal{P}_{n}^{*}} \alpha_{iP}iP,
\end{align}
then 
\begin{align}
    i\mathbb{P}(iH) := \{iP \in i\mathcal{P}_{n}^{*} |\alpha_{iP} \neq 0 \}.
\end{align}
Letting $[n] := \{1,\dots,n\}$, we define the {\bf support} of a Pauli string $iP = i\bigotimes_{j=1}^{n}p_{j} \in i\mathcal{P}_{n}^{*}$, denoted \textrm{supp}(iP), via
\begin{align}
\textrm{supp}(iP) := \{j \in [n]| p_{j} \neq I \}.
\end{align}
For a set of Pauli strings $i\mathbb{P} \subseteq i\mathcal{P}_{n}^{*}$, the support of $i\mathbb{P}$ is given by the union of the supports of its constituents:
\begin{align}
\textrm{supp}(i\mathbb{P}) := \bigcup_{iP \in i\mathbb{P}} \textrm{supp}(iP).
\end{align}
Let $K \subseteq [n]$ denote a subset of qubit labels. We define the {\bf restriction} of a Pauli string $iP \in i\mathcal{P}_{n}^{*}$ to $K$, denoted $iP_{|K}$, to be the $|K|$-qubit Pauli string that remains after removing the tensor factors not in $K$, i.e.,
\begin{align}
iP_{|K}:=i\bigotimes_{j \in K} p_{j}.
\end{align}
As with the notion of support above, the notion of restriction to $K$ can be extended to a set of Pauli strings $i\mathbb{P}$:
\begin{align}
i\mathbb{P}_{|K}:=\{iP_{|K} | iP \in i\mathbb{P} \} \setminus \left\{i \bigotimes_{j \in K} I \right\}.
\end{align}

In addition to the set $\langle i \mathbb{P}\rangle_{[\cdot, \cdot]}$, it is useful to define the following set of operators acting on $\mathcal{H}$:
\begin{align}
    \langle i\mathbb{P}\rangle_{\times} := \{\pm 1, \pm i\} \times \bigcup_{r=1}^{\infty}\left\{\prod_{j=1}^{r} iP_{j} | iP_{j} \in i\mathbb{P}\right\}.
\end{align}
We call a set $i\mathbb{P}\subseteq i\mathcal{P}_n^*$ {\bf product universal} for $i\mathcal{P}_n$ if and only if
    \begin{align}
        i\mathcal{P}_n\subseteq\langle i\mathbb{P}\rangle_{\times}.
    \end{align}
Note that, in general, $\langle i\mathbb{P}\rangle_{\times}\nsubseteq\mathfrak{su}(2^n)$, since products of skew-Hermitian Pauli strings need not be skew-Hermitian.

Finally, we also require some concepts from graph theory \cite{diestel2025graphTheory}. A {\bf graph} $G$ is a pair of set $(V, E)$ where $V$ is a set of {\bf vertices} and $E$ is a set of unordered pairs of vertices called {\bf edges}.  The graph $G=(V,E)$ {\bf connected} if, for any distinct vertices $v,w\in V$, there exists a sequence of vertices $u_{1}, \dots, u_{r} \in V$ such that $u_{1} = v$ and $u_{r} = w$, and a sequence of edges $e_{1}, \dots, e_{r-1}\in E$ such that, for all $j = 1, \dots, r-1$, $u_{j}, u_{j+1} \in e_{j}$. For $v\in V$, the {\bf neighborhood} of $v$ in $G$ is the set of vertices
    \begin{align}
        N_{G}(v):=\{w\in V|\{v,w\}\in E\}.
    \end{align}
Let $W \subseteq V$ be a subset of vertices. A vertex $v \in V \setminus W$ is said to have a {\bf unique neighborhood in $W$} if $N_{G}(v) \cap W \neq \emptyset$ and for all other $u \in V \setminus W$ such that $u \neq v$, $N_{G}(u) \cap W \neq N_{G}(v) \cap W$. Let $N_{G}^{\textrm{uniq}}(W) \subseteq V \setminus W$ denote the set of all vertices with a unique neighborhood in $W$. The {\bf expansion of $W$ by its unique neighbors in $V$} is the set $\mathcal{V}_{G}(W)$ defined via 
    \begin{align}
        \mathcal{V}_{G}(W)&:= W \cup N_{G}^{\textrm{uniq}}(W).
    \end{align}
For what follows, it is also convenient to introduce notation for the {\bf $r$-fold expansion of $W$ by its unique neighbors in $V$}, i.e.,
    \begin{align}
        \mathcal{V}^{(r)}_G(W) := \underbrace{\mathcal{V}_G \circ \cdots \circ \mathcal{V}_G}_{r \textrm{ times}}(W).
    \end{align}

Given a set of Pauli strings $i\mathbb{P} \subseteq i\mathcal{P}_{n}^{*}$, we can define the {\bf anti-commutation graph} of $i\mathbb{P}$, denoted $G(i\mathbb{P}):=(V(i\mathbb{P}),E(i\mathbb{P}))$, where the set of vertices $V(i\mathbb{P})$ are identified with the set of Pauli strings $i\mathbb{P}$ and where the set of edges is given by
\begin{align}
    E(i\mathbb{P}):=\{\{iP_1,iP_2\}|iP_1,iP_2\in V(i\mathbb{P}), [iP_1,iP_2]\neq 0\}.
\end{align}

\section{Results, Formally Stated} \label{sec:formal_results}

In this section, we formally state the results presented in \Cref{subsec:results}. The proofs of these statements are to be found in the corresponding appendices, as indicated below each statement.

To begin with, let us consider \Cref{res:Pauli_strings_only} pertaining to the universality of sets of Pauli strings, which makes use of the notions of product universality and anti-commutation graphs presented in the preceding section. 
\begin{Theorem}[Pauli string universality]\label{thm:nec_suf_pauli}
Let $i\mathbb{P}\subseteq i\mathcal{P}_n^*$ with $n \ge 3$. Then $\langle i\mathbb{P}\rangle_{\textrm{Lie}}=\mathfrak{su}(2^n)$ if and only if all of the following hold:
    \begin{enumerate}
        \item \label{item:thm_1} there exists a set $i\mathbb{Q}\subseteq \langle i\mathbb{P}\rangle_{[\cdot,\cdot]}$ such that $\langle i\mathbb{Q}_{|\mathrm{supp}(i\mathbb{Q})}\rangle_{[\cdot,\cdot]}\propto i\mathcal{P}_k^*$ for $2\leq k< n$ where $k:=|\mathrm{supp}(i\mathbb{Q})|$.
        \item \label{item:thm_2}$i\mathcal{P}_{n-k} \subseteq \langle i\mathbb{P}_{|[n]\setminus\mathrm{supp}(i\mathbb{Q})}\rangle_\times$.
        \item \label{item:thm_3}$G(i\mathbb{P})$ is connected.
    \end{enumerate}
\end{Theorem}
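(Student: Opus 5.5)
The plan is to prove the two directions separately. Throughout, write $K:=\mathrm{supp}(i\mathbb{Q})$ and $\bar K:=[n]\setminus K$. Let $\mathcal{L}\subseteq i\mathcal{P}_n^*$ be the set of Pauli strings proportional to elements of $\langle i\mathbb{P}\rangle_{[\cdot,\cdot]}$. By Proposition~\ref{prop:universal_pauli_commutator}, universality is equivalent to $\mathcal{L}=i\mathcal{P}_n^*$. One fact is used repeatedly: if $iP$ and $iQ$ anti-commute, then $\textrm{ad}_{iP}(iQ)\propto$ the Pauli string proportional to their product. Hence every element of $\langle i\mathbb{P}\rangle_{[\cdot,\cdot]}$ is, up to phase, a product of elements of $i\mathbb{P}$, that is, $\langle i\mathbb{P}\rangle_{[\cdot,\cdot]}\subseteq\langle i\mathbb{P}\rangle_\times$.

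\emph{Necessity.} Assume $\mathcal{L}=i\mathcal{P}_n^*$.
For item~\ref{item:thm_1}, take $i\mathbb{Q}$ to be all strings supported on qubits $\{1,2\}$. These lie in $\mathcal{L}$ up to sign, and their restriction to $\{1,2\}$ is $i\mathcal{P}_2^*$, which is closed under brackets up to sign. This uses $n\ge3$ to get $k=2<n$.
For item~\ref{item:thm_2}, every Pauli string lies in $\langle i\mathbb{P}\rangle_\times$ by the inclusion above. Restricting a product to $\bar K$ gives the product of the restrictions, so every string on $\bar K$ lies in $\langle i\mathbb{P}_{|\bar K}\rangle_\times$; factors that restrict to the identity are simply dropped.
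For item~\ref{item:thm_3}, suppose $G(i\mathbb{P})$ splits into two nonempty parts $A$ and $B$ with no edges between them. Every element of $A$ commutes with every element of $B$. Then every nested bracket lies in $\langle A\rangle_{\rm Lie}$ or in $\langle B\rangle_{\rm Lie}$, and these two subalgebras commute with each other. But $\mathfrak{su}(2^n)$ is simple and cannot be the sum of two nonzero commuting ideals, a contradiction.

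\emph{Sufficiency.} First, by item~\ref{item:thm_1}, $\mathcal{L}$ contains every string of the form $a\otimes I_{\bar K}$ with $a\in i\mathcal{P}_k^*$. (Closure of $i\mathbb{Q}$ under brackets lifts back from its restriction, since all elements of $i\mathbb{Q}$ are the identity on $\bar K$.) Next define
\[
T:=\{r\in i\mathcal{P}_{n-k}:\ a\otimes r\in\mathcal{L}\ \text{for all } a\in i\mathcal{P}_k^*\},
\]
so that $I\in T$. The steps, in order, are as follows.

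(a) If $s\otimes r\in\mathcal{L}$ with $s\neq I$, then $r\in T$. Bracketing with $a\otimes I$ for $a$ anti-commuting with $s$ produces $as\otimes r$. Since $k\ge2$, iterating this reaches every nonidentity $a'$ on $K$; this is a small check in $i\mathcal{P}_k$.

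(b) $T$ is closed under products up to phase. Given $r,r'\in T$, pick anti-commuting $a,b$ on $K$. Then $\textrm{ad}_{a\otimes r}(b\otimes r')\propto ab\otimes rr'$ if $r,r'$ commute. If $r,r'$ anti-commute, use commuting $a,b$ with $a\neq b$ instead, so that the product is again nonidentity on $K$. In either case (a) then gives $rr'\in T$.

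(c) Every restriction $iP_{|\bar K}$ with $iP\in i\mathbb{P}$ lies in $T$. If $iP_{|K}\neq I$, this is (a). Strings with $iP_{|K}=I$ are the main obstacle, since they commute with all of $i\mathbb{Q}$. Here connectivity is used. Some element of $i\mathbb{P}$ has nontrivial $K$-part, since elements of $i\mathbb{Q}$ are products of elements of $i\mathbb{P}$. Induct along a path in $G(i\mathbb{P})$ from such a vertex. Suppose $I\otimes r$ is adjacent to a string whose $\bar K$-part $r'$ is already in $T$ (so $r,r'$ anti-commute). Bracketing $I\otimes r$ with $a\otimes r'$ gives $a\otimes rr'$, hence $rr'\in T$ by (a). Then $r\propto (rr')r'\in T$ by (b). The subtle point is that the neighbor on the path might itself have trivial $K$-part; the induction is set up so that its $\bar K$-part was already shown to lie in $T$, which is all that is needed.

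(d) By (b), (c) and item~\ref{item:thm_2}, $T=i\mathcal{P}_{n-k}$. Hence $\mathcal{L}$ contains every $a\otimes r$ with $a\neq I$.

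(e) Finally, take $I\otimes r$ with $r\neq I$. Choose $r_1$ anti-commuting with $r$ and set $r_2\propto r_1r$, so that $r_1$ and $r_2$ anti-commute. Then $\textrm{ad}_{a\otimes r_1}(a\otimes r_2)\propto I\otimes r$.

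This gives $\mathcal{L}=i\mathcal{P}_n^*$, and Proposition~\ref{prop:universal_pauli_commutator} concludes. I expect step (c), handling $K$-trivial generators via connectivity, together with the phase and commutation bookkeeping in (b), to be the main obstacle.
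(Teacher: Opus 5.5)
Your proof is correct. The necessity direction is essentially the paper's. For item 3 you finish with "a simple algebra is not a sum of two nonzero commuting ideals" instead of the paper's explicit check that a generator of the other component lies outside $\langle i\mathbb{P}_1\rangle_{\mathrm{Lie}}$, but the idea is the same.

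Your sufficiency argument takes a genuinely different route. The paper does not argue directly. It invokes Theorem~1 of the earlier work (\Cref{thm:smith2025optimally}) as a black box and spends its effort building a set $i\mathbb{S}$ that meets that theorem's hypotheses:
\begin{enumerate}
\item It roots a shortest-path spanning tree of $G(i\mathbb{P}\cup i\mathbb{Q})$ at some $iQ\in i\mathbb{Q}$.
\item It takes nested commutators along the tree paths, which are non-vanishing by \Cref{lem:path_adjoint_app}.
\item It shows each of these anti-commutes with $iQ$, and hence meets $K$.
\item It uses the tree structure to recover each generator as a product of two elements of $i\mathbb{S}$, so that product universality on $\bar K$ transfers to $i\mathbb{S}$.
\end{enumerate}

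Your steps (a), (b), (d), (e) are in effect a self-contained proof of that cited theorem, organised through the fibre set $T$. Your step (c) replaces the tree machinery with a one-edge propagation. Once $r'\in T$, every $a\otimes r'$ is available, so you never need a long nested commutator to be nonzero. Neither shortest paths nor \Cref{lem:path_adjoint_app} are needed.

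The two routes buy different things. The paper's gives an explicit generating set tied to the synthesis algorithm of the earlier work. Yours is independent of that result and shows exactly where $k\ge 2$ enters: only in the anti-commuting case of (b), which needs distinct commuting non-identity $a,b$ on $K$. Step (a) in fact holds for any $k\ge 1$, since the anti-commutation graph of $\mathcal{P}_k^*$ has diameter at most $2$.

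One point you should state explicitly. You bracket arbitrary elements of $\mathcal{L}$, such as $a\otimes I$, whereas $\langle i\mathbb{P}\rangle_{[\cdot,\cdot]}$ contains only left-normed brackets with a generator on the left. Closure of $\mathcal{L}$ under brackets holds for two reasons:
\begin{itemize}
\item $\mathrm{span}_{\mathbb{R}}\mathcal{L}=\langle i\mathbb{P}\rangle_{\mathrm{Lie}}$ is a Lie algebra, by the standard Jacobi-identity induction on left-normed brackets.
\item By linear independence, this span contains a Pauli string only if that string already lies in $\mathcal{L}$.
\end{itemize}
The paper uses the same fact implicitly.
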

The proof of this theorem is provided in \cref{app:proof_thm:nec_suf_pauli}.
Importantly, the proof of sufficiency is constructively tied to Theorem~1 in Ref.~\cite{smith2025optimally}. Therefore, one can also use the algorithm there to efficiently generate any Pauli string from a universal set (up to a possible overhead given by the construction of the set $i\mathbb{S}$ in Eq.~\ref{eq:spanning-tree-set_app}). Furthermore, recent work by Cuypers (see Ref.~\cite{cuypers2026dynamicallie}) has demonstrated necessary and sufficient conditions for the universality of sets of Pauli strings $i\mathbb{P}$ based on the inclusion of any of a set of $32$ distinguished subgraphs in $G(i\mathbb{P})$ (the reader is referred to Ref.~\cite{cuypers2026dynamicallie} for the details). In particular, this leads to an algorithmic approach to checking universality in this context. One possible avenue for future work could be to see if these approaches can be fruitfully combined to, e.g., search for optimally efficient generating sets of Pauli strings.

In the remainder of this section, we consider several other generating sets with more general Hamiltonians, whose universality can be shown by leveraging \Cref{thm:nec_suf_pauli}. To aid us in doing so, we have the following:
\begin{Lemma}[Pauli isolation]\label{lem:isolation}
    Let $i\mathbb{Q}\subseteq i\mathcal{P}_n^*$ and let $iH \in \mathfrak{su}(2^n)$ be given by
\begin{align*}
iH=\sum_{iP\in i\mathcal{P}_n^*}\alpha_{iP}\, iP,
\end{align*}
with all coefficients $\alpha_{iP}$ being real. Let 
\begin{align}
    i\mathbb{P}(iH):=\{iP\in i\mathcal{P}_n^*\mid \alpha_{iP}\neq 0\}
\end{align}
and consider the anti-commutation graph 
\begin{align}
    G:=G\!\big(i\mathbb{Q}\cup i\mathbb{P}(iH)\big).
\end{align}
For $iP\in i\mathcal{P}_n^*$, if $iP\in \mathcal{V}_{G}(i\mathbb{Q})$, then $iP\in \langle i\mathbb{Q}\cup\{iH\}\rangle_{\mathrm{Lie}}$.
\end{Lemma}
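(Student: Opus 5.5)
The plan is to isolate the term $\alpha_{iP}\,iP$ from $iH$ by applying a suitable real-linear combination of iterated adjoint maps $\textrm{ad}_{iQ}$, with $iQ \in i\mathbb{Q}$, to $iH$. These maps act diagonally on the Pauli basis and filter Pauli strings according to their neighbourhood in $i\mathbb{Q}$. If $iP \in i\mathbb{Q}$ there is nothing to prove, so assume $iP \in \mathcal{V}_G(i\mathbb{Q}) \setminus i\mathbb{Q}$. Then $iP \in N^{\textrm{uniq}}_G(i\mathbb{Q})$, and since $V(G) \setminus i\mathbb{Q} \subseteq i\mathbb{P}(iH)$, we have $iP \in i\mathbb{P}(iH)$, so $\alpha_{iP} \neq 0$. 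Write $S := N_G(iP) \cap i\mathbb{Q} \neq \emptyset$.

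\textbf{Step 1: the filters.} For $iQ, iR \in i\mathcal{P}_n^*$, $\textrm{ad}_{iQ}(iR) = 0$ if they commute, and $\textrm{ad}_{iQ}(iR) = (iQ)(iR)$ if they anti-commute. In the latter case $(iQ)(iR)$ again anti-commutes with $iQ$, so
\[
\textrm{ad}_{iQ}^2(iR) = (iQ)^2(iR) = -iR .
\]
Hence $T_{iQ} := -\textrm{ad}_{iQ}^2$ is the real-linear map on $\mathfrak{su}(2^n)$ that fixes the Pauli strings anti-commuting with $iQ$ and kills the others. Likewise $\mathrm{id} - T_{iQ}$ fixes the commuting ones and kills the anti-commuting ones. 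All of these maps are diagonal in the Pauli basis, so they commute with each other. Define
\[
\Pi := \prod_{iQ \in S} T_{iQ} \prod_{iQ \in i\mathbb{Q} \setminus S} \left(\mathrm{id} - T_{iQ}\right).
\]
On a Pauli string $iR$, $\Pi$ acts as the identity if the set of elements of $i\mathbb{Q}$ anti-commuting with $iR$ is exactly $S$, and as zero otherwise.

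\textbf{Step 2: $\Pi$ preserves the Lie closure.} $\langle i\mathbb{Q} \cup \{iH\} \rangle_{\textrm{Lie}}$ is a real vector space that contains $i\mathbb{Q}$ and is closed under $\textrm{ad}_{iQ}$ for every $iQ \in i\mathbb{Q}$. This is immediate from the definition via $i\mathbb{H}_{\textrm{ad}^{(r)}}$ together with linearity of $\textrm{ad}_{iQ}$. Any real polynomial in these adjoint maps, including the identity, therefore maps the closure into itself, and in particular so does $\Pi$.

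\textbf{Step 3: removing the $i\mathbb{Q}$-terms and applying $\Pi$.} Set
\[
iH' := iH - \sum_{iR \in i\mathbb{Q}} \alpha_{iR}\, iR .
\]
This lies in the closure, because each $iR \in i\mathbb{Q}$ does, and its Pauli set is $i\mathbb{P}(iH) \setminus i\mathbb{Q} = V(G) \setminus i\mathbb{Q}$. For $iR$ in this set, the anti-commuting elements of $i\mathbb{Q}$ are exactly $N_G(iR) \cap i\mathbb{Q}$. By uniqueness, this set equals $S$ only for $iR = iP$. Therefore
\[
\Pi(iH') = \alpha_{iP}\, iP ,
\]
and dividing by $\alpha_{iP} \neq 0$ gives $iP \in \langle i\mathbb{Q} \cup \{iH\} \rangle_{\textrm{Lie}}$.

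The main obstacle is the bookkeeping in Step 3. Terms of $iH$ that themselves lie in $i\mathbb{Q}$ are not covered by the uniqueness hypothesis, which only compares vertices outside $i\mathbb{Q}$. Such a term could have neighbourhood exactly $S$ in $i\mathbb{Q}$ and survive $\Pi$, which is why these terms must be subtracted first; this subtraction is the one point needing care. The sign and normalisation conventions of $\textrm{ad}$ are handled by the explicit computation of $\textrm{ad}_{iQ}^2$ in Step 1.
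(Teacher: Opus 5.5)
Your proof is correct. It uses the same mechanism as the paper: $\textrm{ad}_{iQ}^2$ acts as $-1$ on Pauli strings that anti-commute with $iQ$ and as $0$ on those that commute, and the Lie closure is a real vector space stable under each $\textrm{ad}_{iQ}$ with $iQ\in i\mathbb{Q}$.

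The difference is how sharply you filter, and here your version is the more rigorous one.

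\begin{itemize}
\item The paper applies only $\textrm{ad}_{iQ_1}^2\circ\cdots\circ\textrm{ad}_{iQ_m}^2$, with $\{iQ_1,\dots,iQ_m\}=S$, directly to $iH$. This keeps every term whose anti-commuting partners in $i\mathbb{Q}$ merely \emph{contain} $S$.
\item The paper then asserts that uniqueness gives $\bigl(\bigcap_{iQ\in S}N_G(iQ)\bigr)\cap i\mathbb{P}(iH)=\{iP\}$. That does not follow from the definition of unique neighbourhood, which only excludes other vertices outside $i\mathbb{Q}$ with \emph{exactly} the same neighbourhood.
\item The paper's argument also ignores terms of $iH$ that lie in $i\mathbb{Q}$.
\end{itemize}

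Two examples show both gaps.

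\begin{itemize}
\item Take $i\mathbb{Q}=\{iZ\otimes I,\, iI\otimes Z\}$ and $iH=a\,iX\otimes I+b\,iX\otimes X$. The vertex $iX\otimes I$ has a unique neighbourhood $\{iZ\otimes I\}$, but the paper's recipe returns $\textrm{ad}_{iZ\otimes I}^2(iH)=-iH$. Your $\Pi=T_{iZ\otimes I}(\mathrm{id}-T_{iI\otimes Z})$ returns $a\,iX\otimes I$.
\item Take $i\mathbb{Q}=\{iZ\otimes I,\, iX\otimes X\}$ with the same $iH$. The term $b\,iX\otimes X$ has the same commutation pattern with $i\mathbb{Q}$ as $iX\otimes I$. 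It therefore survives both the paper's recipe and your $\Pi$ unless it is subtracted first, exactly as you anticipated.
\end{itemize}

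Your complementary factors $\mathrm{id}-T_{iQ}$ for $iQ\in i\mathbb{Q}\setminus S$, together with the preliminary subtraction, are what make the argument work under the stated definition. The only cost is that the isolating operator is a real polynomial in the adjoint maps rather than a single composition. That is harmless, as your Step 2 correctly notes.
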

The proof of this lemma is given in \cref{app:proof_lem:isolation}. In its essence, this lemma allows, in certain circumstances, questions of the universality of set $i\mathbb{H}$ involving both Hamiltonian and Pauli string terms to be reduced to questions of universality of sets of Pauli strings only\footnote{Note that this isolation lemma can also be interpreted as a character-weighted Pauli twirl~\cite{viola2003robust}, generalized to the commutant.}. By applying this lemma iteratively, we can arrive at the following result:

\begin{Theorem}[Hamiltonian universality]\label{thm:suf_ham}
Let $i\mathbb{Q}\subseteq i\mathcal{P}_n^*$ and let $iH \in \mathfrak{su}(2^n)$ be given by
\begin{align*}
iH=\sum_{iP\in i\mathcal{P}_n^*}\alpha_{iP}\, iP,
\end{align*}
with all coefficients $\alpha_{iP}$ being real. Let 
\begin{align}
    i\mathbb{P}(iH):=\{iP\in i\mathcal{P}_n^*\mid \alpha_{iP}\neq 0\}
\end{align}
and consider the anti-commutation graph 
\begin{align}
    G:=G\!\big(i\mathbb{Q}\cup i\mathbb{P}(iH)\big).
\end{align}

If, for some $r \in \mathbb{N}$, $\langle\mathcal{V}_G^{(r)}(i\mathbb{Q})\rangle_{\textrm{Lie}}=\mathfrak{su}(2^n)$ then $\langle i\mathbb{Q}\cup \{iH\}\rangle_{\textrm{Lie}}=\mathfrak{su}(2^n)$. 

If $\mathcal{V}_G^{(r)}(i\mathbb{Q}) = i\mathbb{Q}\cup i\mathbb{P}(iH)$, then the reverse implication also holds: if $\langle i\mathbb{Q}\cup \{iH\}\rangle_{\textrm{Lie}}=\mathfrak{su}(2^n)$ then $\langle\mathcal{V}_G^{(r)}(i\mathbb{Q})\rangle_{\textrm{Lie}}=\mathfrak{su}(2^n)$.
\end{Theorem}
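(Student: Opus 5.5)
We prove the forward direction by induction on $r$, using \Cref{lem:isolation} repeatedly. The claim to establish is
\[
\mathcal{V}_G^{(s)}(i\mathbb{Q})\subseteq\langle i\mathbb{Q}\cup\{iH\}\rangle_{\mathrm{Lie}}\quad\text{for all } s\le r .
\]
The base case $s=0$ is trivial. For the inductive step, set $i\mathbb{Q}_s:=\mathcal{V}_G^{(s)}(i\mathbb{Q})$; this is a set of Pauli strings contained in $i\mathbb{Q}\cup i\mathbb{P}(iH)$.

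We apply \Cref{lem:isolation} with $i\mathbb{Q}_s$ in place of $i\mathbb{Q}$. Because $i\mathbb{Q}_s\subseteq i\mathbb{Q}\cup i\mathbb{P}(iH)$, the anti-commutation graph $G(i\mathbb{Q}_s\cup i\mathbb{P}(iH))$ is an induced subgraph of $G$. One point needs checking. When the lemma is applied inside the smaller graph, the vertices of $i\mathbb{Q}\setminus i\mathbb{Q}_s$ are absent, so the uniqueness condition is tested against fewer competitors. Since $i\mathbb{Q}\subseteq i\mathbb{Q}_s$, however, this set is empty and the two graphs coincide. Hence $\mathcal{V}_G(i\mathbb{Q}_s)$ is exactly the set the lemma produces.

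The lemma then gives $\mathcal{V}_G(i\mathbb{Q}_s)\subseteq\langle i\mathbb{Q}_s\cup\{iH\}\rangle_{\mathrm{Lie}}$. By the inductive hypothesis, $i\mathbb{Q}_s\cup\{iH\}\subseteq\langle i\mathbb{Q}\cup\{iH\}\rangle_{\mathrm{Lie}}$, and a Lie closure is closed under taking Lie closures of its own subsets. Therefore $\langle i\mathbb{Q}_s\cup\{iH\}\rangle_{\mathrm{Lie}}\subseteq\langle i\mathbb{Q}\cup\{iH\}\rangle_{\mathrm{Lie}}$, which completes the induction.

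With $s=r$ we obtain $\langle\mathcal{V}_G^{(r)}(i\mathbb{Q})\rangle_{\mathrm{Lie}}\subseteq\langle i\mathbb{Q}\cup\{iH\}\rangle_{\mathrm{Lie}}$. If the left-hand side equals $\mathfrak{su}(2^n)$, so does the right-hand side, since it is contained in $\mathfrak{su}(2^n)$; alternatively, this is \Cref{prop:univ_sets_in_Lie_closure_univ_sets} with $i\mathbb{H}'=\mathcal{V}_G^{(r)}(i\mathbb{Q})$.

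For the reverse direction, assume $\mathcal{V}_G^{(r)}(i\mathbb{Q})=i\mathbb{Q}\cup i\mathbb{P}(iH)$. Since $iH=\sum\alpha_{iP}\,iP$ is a real linear combination of elements of $i\mathbb{P}(iH)$, we have
\[
iH\in\operatorname{span}_{\mathbb{R}}\big(i\mathbb{P}(iH)\big)\subseteq\big\langle\mathcal{V}_G^{(r)}(i\mathbb{Q})\big\rangle_{\mathrm{Lie}} .
\]
Together with $i\mathbb{Q}\subseteq\mathcal{V}_G^{(r)}(i\mathbb{Q})$, this gives $i\mathbb{Q}\cup\{iH\}\subseteq\langle\mathcal{V}_G^{(r)}(i\mathbb{Q})\rangle_{\mathrm{Lie}}$. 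If $i\mathbb{Q}\cup\{iH\}$ is universal, \Cref{prop:univ_sets_in_Lie_closure_univ_sets} applied with $i\mathbb{H}'=i\mathbb{Q}\cup\{iH\}$ shows that $\mathcal{V}_G^{(r)}(i\mathbb{Q})$ is universal.

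The only delicate step is confirming that iterating \Cref{lem:isolation} is legitimate: at stage $s$ the "controlled" set $i\mathbb{Q}_s$ contains Pauli strings obtained from $iH$ rather than given directly. The lemma is stated for an arbitrary Pauli set $i\mathbb{Q}$ together with $iH$. The graph used at stage $s$ is $G(i\mathbb{Q}_s\cup i\mathbb{P}(iH))=G$, since $i\mathbb{Q}\subseteq i\mathbb{Q}_s\subseteq i\mathbb{Q}\cup i\mathbb{P}(iH)$. So the iteration matches the definition of $\mathcal{V}_G^{(r)}$ exactly, and the containment of Lie closures passes the conclusion back to the original generating set.
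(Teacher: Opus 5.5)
Your proof is correct and matches the paper's argument: induction on the number of unique-neighbor expansions using \Cref{lem:isolation}, where $G(\mathcal{V}_G^{(s)}(i\mathbb{Q})\cup i\mathbb{P}(iH))=G$ because $i\mathbb{Q}\subseteq\mathcal{V}_G^{(s)}(i\mathbb{Q})\subseteq i\mathbb{Q}\cup i\mathbb{P}(iH)$, and, for the converse, the observation that $iH\in\operatorname{span}_{\mathbb{R}}\,i\mathbb{P}(iH)$. Your inductive step is stated slightly more carefully than the paper's: the paper writes $\langle\mathcal{V}_G^{(t)}(i\mathbb{Q})\rangle_{\mathrm{Lie}}$ where $\langle\mathcal{V}_G^{(t)}(i\mathbb{Q})\cup\{iH\}\rangle_{\mathrm{Lie}}$ is meant, and you write the latter.
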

The proof of this theorem is given in \cref{app:proof_thm:suf_ham}.

Using the above results, it is possible to derive further necessary and sufficient conditions in certain situations where the set of controls $i\mathbb{H}$ contains local control on some or all of the qubits. These results have been considered previously in the literature; see Ref.~\cite{bremner2004fungible} and Ref.~\cite{zeier2011symmetry} respectively. Here, we show that these results are consequences of the common framework encapsulated by the results presented above. 

Let us first consider the case where local control is present on all $n$ qubits. Let us define $i\mathbb{P}_{\text{s.q.}}$ to be the set of all unit support Pauli strings containing either a Pauli-$X$ term or a Pauli-$Z$ term, i.e., 
\begin{align}
i\mathbb{P}_{\text{s.q.}} := \{iX_{\{j\}}, iZ_{\{j\}} : j \in [n]\},
\end{align}
where $iX_{\{j\}}$ denotes the Pauli string $iP \in i\mathcal{P}_{n}^*$ such that $p_{j} = X$ and $p_{l} = I$ for all $l \neq j$, and similarly for $iZ_{\{j\}}$. Let us also denote by $\bigotimes_{j \in [n]}\mathfrak{su}(2)$ the Lie algebra generated by $i\mathbb{P}_{\text{s.q.}}$, which corresponds to being able to locally control each qubit arbitrarily. We have the following result, which was previously established in Ref.~\cite{bremner2004fungible}, which provides the necessary and sufficient conditions for an additional Hamiltonian to promote all local control to universal generating set:
\begin{Corollary} \label{cor:local_control}  Let $i\mathbb{H}_{\text{s.q.}}\subseteq \mathfrak{su}(2^{n})$ be such that $\langle i\mathbb{H}_{\text{s.q.}} \rangle_{\textrm{Lie}} = \bigotimes_{j \in [n]}\mathfrak{su}(2)$. Let $iH\in\mathfrak{su}(2^n)$ and define $i\mathbb{H} := i\mathbb{H}_{\text{s.q.}} \cup \{iH\}$. The set $i\mathbb{H}$ is a universal generating set for $\mathfrak{su}(2^{n})$ if and only if the following hold:
\begin{enumerate}
    \item \label{item:cor_lc_1} there exists an element $iP \in i\mathbb{P}(iH)$ such that $|\textrm{supp}(iP)|$ is even,
    \item \label{item:cor_lc_3}  the anti-commutation graph $G(i\mathbb{P}(iH)\cup i\mathbb{P}_{\text{s.q.}})$ is connected.
\end{enumerate} 
\end{Corollary}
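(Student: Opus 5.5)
The plan is to reduce the corollary to \Cref{thm:suf_ham} with $i\mathbb{Q}=i\mathbb{P}_{\text{s.q.}}$, and then apply \Cref{thm:nec_suf_pauli} to the resulting set of Pauli strings, plus a separate parity argument for the even-support condition.

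\emph{Step 1 (reduction to Pauli strings).} Since $\langle i\mathbb{H}_{\text{s.q.}}\rangle_{\mathrm{Lie}}=\langle i\mathbb{P}_{\text{s.q.}}\rangle_{\mathrm{Lie}}$, the two sets generate each other, and \Cref{prop:univ_sets_in_Lie_closure_univ_sets} lets me replace $i\mathbb{H}$ by $i\mathbb{P}_{\text{s.q.}}\cup\{iH\}$.

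\emph{Step 2 (one round of unique neighbor expansion).} Let $G=G(i\mathbb{P}_{\text{s.q.}}\cup i\mathbb{P}(iH))$. For a string $iP$, its neighborhood in $i\mathbb{P}_{\text{s.q.}}$ is
\begin{align*}
\{iX_{\{j\}}: p_j\in\{Y,Z\}\}\cup\{iZ_{\{j\}}: p_j\in\{X,Y\}\}.
\end{align*}
This set is nonempty for every $iP\neq iI$. It also determines each $p_j$ uniquely, so distinct strings have distinct neighborhoods. Hence $\mathcal{V}_G(i\mathbb{P}_{\text{s.q.}})=i\mathbb{P}_{\text{s.q.}}\cup i\mathbb{P}(iH)$ with $r=1$, and both directions of \Cref{thm:suf_ham} apply. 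Therefore $i\mathbb{H}$ is universal if and only if the Pauli set $i\mathbb{P}:=i\mathbb{P}_{\text{s.q.}}\cup i\mathbb{P}(iH)$ is universal.

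\emph{Step 3 (necessity).} Condition~\ref{item:cor_lc_3} is exactly condition~\ref{item:thm_3} of \Cref{thm:nec_suf_pauli}; for $n=2$ I would use the elementary fact that a disconnected anti-commutation graph splits $\langle i\mathbb{P}\rangle_{\mathrm{Lie}}$ into commuting pieces. For condition~\ref{item:cor_lc_1}, I use the Lie algebra automorphism
\begin{align*}
\theta(A):=-Y^{\otimes n}A^{T}Y^{\otimes n}.
\end{align*}
Since $Yp^TY=-p$ for each $p\in\{X,Y,Z\}$, one gets $\theta(iP)=(-1)^{|\mathrm{supp}(iP)|+1}iP$. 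So every odd-support string is fixed by $\theta$. If all of $i\mathbb{P}(iH)$ has odd support, then $i\mathbb{P}$ lies in the fixed-point subalgebra of $\theta$, which is proper (it excludes, e.g., $iX\otimes X\otimes I\cdots$). Hence universality fails.

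\emph{Step 4 (sufficiency via \Cref{thm:nec_suf_pauli}, $n\ge 3$).} Condition~\ref{item:thm_3} is given, and condition~\ref{item:thm_2} holds trivially because single-qubit $X,Z$ on every qubit generate all Pauli strings under products. The main obstacle is condition~\ref{item:thm_1}: producing a universal subset on $k$ qubits with $2\le k<n$.

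Take $iP\in i\mathbb{P}(iH)$ with even support $S$, $|S|=s$. Commuting $iP$ with single-qubit strings on $S$ rotates each tensor factor, so all $3^s$ strings with support exactly $S$ lie in $\langle i\mathbb{P}\rangle_{[\cdot,\cdot]}$. If $s\ge 4$, I would reduce the support by two as follows:
\begin{enumerate}
\item Take $A,B$ with support $S$ differing exactly on three positions $\{1,2,3\}$. Then $[A,B]\propto R$, where $R$ is supported on $\{1,2,3\}$.
\item Take $C$ with support $S$ that agrees with $R$ on positions $1,2$ and differs at position $3$. Then $C$ and $R$ anticommute, and $RC$ is supported on $S\setminus\{1,2\}$, which again has even size.
\end{enumerate}
Iterating gives a weight-two string. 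Together with the local generators on those two qubits, it forms a set universal on $k=2<n$ qubits; this is the standard two-qubit fact, checkable directly via \Cref{prop:universal_pauli_commutator}. The case $n=2$ follows from the same two-qubit fact directly.
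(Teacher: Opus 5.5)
Your proof is correct. Steps 1--2 match the paper's reduction: the paper uses \Cref{lem:uniq_nhbd_all_sq_control} and \Cref{lem:isolation} to obtain $\langle i\mathbb{H}\rangle_{\textrm{Lie}}=\langle i\mathbb{P}_{\text{s.q.}}\cup i\mathbb{P}(iH)\rangle_{\textrm{Lie}}$, and it reads off connectivity and product universality from \Cref{thm:nec_suf_pauli} exactly as you do. Where you differ is the even-support condition, in both directions.

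\emph{Necessity.} The paper uses \Cref{lem:odd_support} to show that every element of $\langle i\mathbb{P}'\rangle_{[\cdot,\cdot]}$ has odd support. It then argues that no subset can satisfy the first condition of \Cref{thm:nec_suf_pauli}. Your automorphism $\theta$ encodes the same parity invariant, since its fixed-point algebra is exactly the span of odd-support strings. It gives non-universality in one line, and it even applies directly to $i\mathbb{H}$ without the isolation step.

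\emph{Sufficiency.} The paper cites \Cref{prop:even_weight} to declare $\{iP\}$ together with the local controls on $\textrm{supp}(iP)$ universal on $k=|\textrm{supp}(iP)|$ qubits. You instead lower the support by two at a time until you reach a weight-two string, and then use only the elementary two-qubit fact. This makes your argument more self-contained. It also covers two cases the paper's argument skips:
\begin{itemize}
\item If $|\textrm{supp}(iP)|=n$, the paper's choice gives $k=n$, which violates the requirement $k<n$ in the first condition of \Cref{thm:nec_suf_pauli}. Universality then does follow directly from \Cref{prop:even_weight}, but the paper does not say so.
\item The case $n=2$, for which \Cref{thm:nec_suf_pauli} is not stated, is not treated in the paper.
\end{itemize}

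One step deserves to be made explicit. Your strings $R$ and $RC$ arise as brackets of two elements of $\langle i\mathbb{P}\rangle_{[\cdot,\cdot]}$, not as right-nested commutators with generators on the left. They are still $\propto$ elements of $\langle i\mathbb{P}\rangle_{[\cdot,\cdot]}$. This holds because $\langle i\mathbb{P}\rangle_{\textrm{Lie}}$ is closed under brackets and is spanned by signed Pauli strings, which are linearly independent.

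Both proofs tacitly assume $n\ge 2$. For $n=1$ the statement fails, because local control is already universal while no string has even support.
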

The proof of this corollary is given in \Cref{app:cor_local_control} and essentially leverages \Cref{thm:suf_ham} to reduce the problem to a case where \Cref{thm:nec_suf_pauli} can be applied.

For the case where only a subset of qubits can be controlled locally, let us define
\begin{align}
i\mathbb{P}_{\text{s.q.}}^{(k)} := \{iX_{\{j\}}, iZ_{\{j\}} : j \in [k]\},
\end{align}
using the same notation as above (in particular, the strings $iX_{\{j\}}$ and $iZ_{\{j\}}$ are still Pauli strings on $n$ qubits). Let us also define $\bigotimes_{j \in [k]}\mathfrak{su}(2) := \langle i\mathbb{P}_{\text{s.q.}}^{(k)}\rangle_{\textrm{Lie}}$. Using the results above, we obtain the following corollary, which is a result previously considered in Ref.~\cite{zeier2011symmetry}. Below, we are denoting by $iX_{j}X_{j+1}$ the Pauli string containing Pauli-$X$ terms and locations $j$ and $j+1$ with identities elsewhere, and similarly for $iY_{j}Y_{j+1}$ and $iZ_{j}Z_{j+1}$.
\begin{Corollary}\label{cor:heisenberg}
    Let
    \begin{align*}
        iH=i\sum_{j=1}^{n-1} J_xX_jX_{j+1}+J_yY_jY_{j+1}+J_zZ_jZ_{j+1}
    \end{align*}
    for $J_x,J_y,J_z\in\mathbb{R}^*$ and let $i\mathbb{H}^{(k)}_{\text{s.q.}}\subseteq \mathfrak{su}(2^{n})$ be such that $\langle i\mathbb{H}^{(k)}_{\text{s.q.}} \rangle_{\textrm{Lie}} =\bigotimes_{j \in [k]}\mathfrak{su}(2)$ for some $k \in [n]$. Then, $\langle i\mathbb{H}^{(k)}_{\text{s.q.}} \cup \{iH\}\rangle_{\mathrm{Lie}}=\mathfrak{su}(2^n)$ if and only if $k \ge 2$.
\end{Corollary}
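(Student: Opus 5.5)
The plan is to reduce both directions to statements about the single Pauli set
\begin{align*}
i\mathbb{P}_H := i\mathbb{P}(iH) = \{iX_jX_{j+1},\, iY_jY_{j+1},\, iZ_jZ_{j+1} : j\in[n-1]\},
\end{align*}
using the unique neighbor expansion of \Cref{thm:suf_ham}. First, by \Cref{prop:univ_sets_in_Lie_closure_univ_sets}, we may replace $i\mathbb{H}^{(k)}_{\text{s.q.}}$ by $i\mathbb{Q}:=i\mathbb{P}^{(k)}_{\text{s.q.}}$, since both generate $\bigotimes_{j\in[k]}\mathfrak{su}(2)$. Set $G:=G(i\mathbb{Q}\cup i\mathbb{P}_H)$. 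I will show by induction on $r$ that, for every $k\ge 1$, the iterated expansion $\mathcal{V}^{(r)}_G(i\mathbb{Q})$ eventually equals $i\mathbb{Q}\cup i\mathbb{P}_H$. This matters because \Cref{thm:suf_ham} then gives an \emph{equivalence}: $i\mathbb{Q}\cup\{iH\}$ is universal if and only if the Pauli set $i\mathbb{Q}\cup i\mathbb{P}_H$ is universal.

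For the induction, take $k=1$, so $i\mathbb{Q}=\{iX_{\{1\}},iZ_{\{1\}}\}$. The three bond-$(1,2)$ terms have neighborhoods $\{iZ_{\{1\}}\}$, $\{iX_{\{1\}},iZ_{\{1\}}\}$ and $\{iX_{\{1\}}\}$ in $i\mathbb{Q}$. These are nonempty and pairwise distinct, while all other terms have empty neighborhood in $i\mathbb{Q}$. Hence the first expansion adds exactly the bond-$(1,2)$ terms. In general, suppose $W$ already contains the bond-$(j-1,j)$ terms. Then the bond-$(j,j+1)$ terms $iX_jX_{j+1}$, $iY_jY_{j+1}$, $iZ_jZ_{j+1}$ have neighborhoods $\{iY_{j-1}Y_j, iZ_{j-1}Z_j\}$, $\{iX_{j-1}X_j, iZ_{j-1}Z_j\}$ and $\{iX_{j-1}X_j, iY_{j-1}Y_j\}$ in $W$. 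These are again distinct, and terms on farther bonds still have empty neighborhoods. So after $r=n-1$ steps the whole chain is isolated. For $k\ge2$ the same argument works, and the extra controls only enlarge the neighborhoods.

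\emph{Sufficiency ($k\ge2$).} It now suffices to show that $i\mathbb{P}:=i\mathbb{P}^{(2)}_{\text{s.q.}}\cup i\mathbb{P}_H$ is universal, since larger $k$ only adds generators. For $n=2$ I check this directly via \Cref{prop:universal_pauli_commutator}: commutators of $iX_1X_2$ with the four single-qubit strings produce all two-qubit strings. For $n\ge3$ I apply \Cref{thm:nec_suf_pauli}, checking its three conditions in turn.
\begin{enumerate}
\item For condition~\ref{item:thm_1}, take $i\mathbb{Q}'=\{iX_{\{1\}},iZ_{\{1\}},iX_{\{2\}},iZ_{\{2\}},iX_1X_2\}$ with $k=2$. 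Its restriction to qubits $\{1,2\}$ is universal by the $n=2$ computation.
\item For condition~\ref{item:thm_2}, restricting $i\mathbb{P}$ to $[n]\setminus\{1,2\}$ gives $iX_3,iY_3,iZ_3$ from the bond-$(2,3)$ terms, together with the bond terms on qubits $3,\dots,n$. Multiplying $X_jX_{j+1}$ by $X_j$ (and similarly for $Y$ and $Z$) walks single-qubit Paulis down the chain, so products yield all of $i\mathcal{P}_{n-2}$.
\item For condition~\ref{item:thm_3}, connectivity of $G(i\mathbb{P})$ is immediate, because consecutive bonds share a qubit on which $X$, $Y$, $Z$ mutually anticommute.
\end{enumerate}

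\emph{Necessity ($k=1$).} By the equivalence above, it remains to show that $i\mathbb{P}_1:=\{iX_{\{1\}},iZ_{\{1\}}\}\cup i\mathbb{P}_H$ is \emph{not} universal. This is the step I expect to be the main obstacle. Conditions~\ref{item:thm_2} and \ref{item:thm_3} of \Cref{thm:nec_suf_pauli} appear to hold here; for instance $iX_1X_2\cdot iX_{\{1\}}\propto iX_{\{2\}}$. So the failure must come from condition~\ref{item:thm_1}, or more concretely from $\langle i\mathbb{P}_1\rangle_{[\cdot,\cdot]}$ being a proper subset of $i\mathcal{P}_n^*$. The first thing I would try is to exhibit an invariant structure: a Pauli string $S$ with $S P^{T} S^{-1}=-P$ for every generator $P$, which would place the algebra inside $\mathfrak{so}$ or $\mathfrak{sp}$. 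A quick check suggests that the simplest such $S$ (with $s_1=Y$) fails on the $YY$ terms. My fallback is to compute $\langle i\mathbb{P}_1\rangle_{[\cdot,\cdot]}$ explicitly for $n=2,3$, identify a proper Pauli subset closed under commutation, guess its description (for example, a linear or quadratic constraint on the binary symplectic vectors), and then verify inductively along the chain that the generators satisfy it and that the constraint is preserved under anticommuting products. If no such obstruction exists, the $k=1$ claim would need to be re-examined.
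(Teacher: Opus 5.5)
Your sufficiency argument is correct and follows the paper's route: isolate the bond terms by iterated unique-neighbor expansion (Theorem~\ref{thm:suf_ham}), then check the three conditions of Theorem~\ref{thm:nec_suf_pauli}. The only difference is that the paper seeds condition~1 with $iZ_1Z_2$ rather than your $iX_1X_2$.

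The genuine gap is the $k=1$ direction, where you give no argument. None can be supplied, because the claim is false. Apply your own sufficiency check to $i\mathbb{P}_1$ with the seed set $\{iX_{\{1\}},iZ_{\{1\}},iX_1X_2,iY_1Y_2,iZ_1Z_2\}$.
\begin{itemize}
\item Nested adjoints by $iX_{\{1\}}$ and $iZ_{\{1\}}$ turn the three bond terms into all nine strings $i\sigma_1\tau_2$ with $\sigma,\tau\in\{X,Y,Z\}$.
\item Then $\textrm{ad}_{iX_1X_2}(iX_1Y_2)\propto iZ_2$, $\textrm{ad}_{iY_1Y_2}(iY_1Z_2)\propto iX_2$, $\textrm{ad}_{iZ_1Z_2}(iZ_1X_2)\propto iY_2$ and $\textrm{ad}_{iX_{\{1\}}}(iZ_{\{1\}})\propto iY_{\{1\}}$.
\end{itemize}
So all of $i\mathcal{P}_2^*$ is reached, and condition~1 holds with $k=2$. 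Conditions~2 and~3 hold verbatim as in your $k\ge2$ check, since the restriction to $\{3,\dots,n\}$ is the same set. Hence a single fully controlled end qubit already gives $\mathfrak{su}(2^n)$; for $n=2$ the seed computation is the whole proof. The invariant $S$ you were looking for does not exist, and your observation that it breaks on the $YY$ terms is exactly the symptom.

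The paper's necessity proof hides this behind an incorrect identity. It asserts $iY_1Y_2=\textrm{ad}_{iZ_1}\circ\textrm{ad}_{iZ_1Z_2}\circ\textrm{ad}_{iZ_1}(iX_1X_2)$. This would place the algebra inside that of the $2n$-element set $\{iX_1,iZ_1\}\cup\{iX_jX_{j+1},iZ_jZ_{j+1}\}_{j\in[n-1]}$, which the $2n+1$ lower bound of Ref.~\cite{smith2025optimally} then excludes. But $\textrm{ad}_{iZ_1}(iX_1X_2)\propto iY_1X_2$ commutes with $iZ_1Z_2$, so the right-hand side is $0$.

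The reduction does not merely have a bad formula; its conclusion is false. For $n=2$, the span of $i\sigma_1$, $i\sigma_1X_2$, $i\sigma_1Z_2$ ($\sigma\in\{X,Y,Z\}$) and $iY_2$ is a $10$-dimensional subalgebra. It contains the four generators but not $iY_1Y_2$. For general $n$, exchange $Y$ and $Z$: the $2n$-element set becomes an end-controlled $XY$ chain, whose algebra lies in the free-fermion $\mathfrak{so}(2n+1)$. In that frame the $Y_jY_{j+1}$ terms are quartic in Majorana operators, so they are not in it.

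So the ``only if'' part of the corollary fails. The correct conclusion is that $\langle i\mathbb{H}^{(k)}_{\text{s.q.}}\cup\{iH\}\rangle_{\mathrm{Lie}}=\mathfrak{su}(2^n)$ for every $k\in[n]$, including $k=1$.
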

The proof of this corollary is given in \Cref{app:cor_heisenberg} and again proceeds by reducing the situation to a case where \Cref{thm:nec_suf_pauli} can be applied.

\section{Acknowledgments}
This project was funded in whole or in part by the Austrian Science Fund (FWF) [SFB BeyondC F7102, DOI: 10.55776/F71; WIT9503323, DOI: 10.55776/WIT9503323]. For open access purposes, the authors have applied a CC BY public copyright license to any author-accepted manuscript version arising from this submission.
This work was also supported by the Austrian Research Promotion Agency (FFG) and the European Union via NextGeneration EU under Contract Number FO999921407 (HDcode)
This work was funded by the European Union (ERC, QuantAI, Project No. 101055129). 
Views and opinions expressed are however those of the author(s) only and do not necessarily reflect those of the European Union or the European Research Council. Neither the European Union nor the granting authority can be held responsible for them. I.D.S acknowledges financial support through the Government of Spain (Severo Ochoa CEX2019-000910-S and FUNQIP), Fundació Cellex, Fundació Mir-Puig, Generalitat de Catalunya (CERCA program), and the AXA Chair in Quantum Information Science.

\bibliography{bib}

\appendix
\onecolumngrid
\newpage

\section{Proof of Theorem~\ref{thm:nec_suf_pauli}}\label{app:proof_thm:nec_suf_pauli}

In this section, we proof \Cref{thm:nec_suf_pauli}, which we restate below for convenience. Prior to doing so, we require further concepts and results relating to Lie algebra theory and graph theory from the literature, which we briefly introduce. Thereafter, we establish several lemmas required for the proof of the theorem, before turning to the proof of the theorem itself.

From graph theory \cite{diestel2025graphTheory}, we require the following concepts in addition to those presented in \Cref{sec:background}. Let $G=(V,E)$ be a graph. Two vertices $v_1,v_2$ are {\bf adjacent} if there exist $e\in E$ such that $e=\{v_1,v_2\}$. A {\bf path} between two vertices $v,w\in V$ is a sequence of vertices $\gamma_G(v,w):=(u_1,...,u_r)\in V^{r}$ such that $u_1=v$, $u_r=w$ and $u_i$ is adjacent to $u_{i+1}$ for $i=1,...,r-1$. We define $l(\gamma_G(v,w)):=r-1$ as the {\bf length} of a path $\gamma_G(v,w)$ (i.e., the number of edges it contains). A {\bf shortest path} $\gamma^*_G(v,w)$ between two vertices $v,w\in V$ is a path between $v$ and $w$ with the smallest length, i.e., $\gamma^*_G(v,w) = \textrm{argmin}_{\gamma_G(v,w)} l(\gamma_G(v,w))$. Note that shortest paths are not necessarily unique. The {\bf graph distance} between two vertices, denoted $d_G(v,w)$, is the length of the shortest paths, i.e. $d_G(v,w) = l(\gamma^*_G(v,w))$.

Suppose that $G$ is also connected. The graph $G$ is a {\bf tree} if, for every pair of vertices $v,w\in V$, the path $\gamma_G(v,w)$ is unique. A {\bf spanning tree} of $G$, denoted $T_G$, is a subgraph of $G$ that is a tree and contains all vertices of $G$. For a fixed $v \ in V$, a spanning tree $T_G$ is a {\bf shortest-path spanning tree} of $G$ rooted at $q$ if, for all $w \in V$,
        \begin{align}
            l(\gamma_{T_G}(v,w))=d_G(v,w).
        \end{align}
Note that shortest-path trees are also not unique. They are, however, guaranteed to exist:
\begin{Proposition}[Existence of shortest-path spanning trees (\cite{erickson2019algorithms}, Chapter~8)] \label{prop:exist_spanning_tree}
    Let $G=(V,E)$ be a connected graph and $v\in V$ fixed. There exists a shortest-path spanning tree $T_G$ of $G$ rooted at $v$.
\end{Proposition}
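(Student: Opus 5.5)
The natural route is breadth-first search from the root $v$. Because $G$ is connected, every vertex $w \in V$ has finite distance $d_G(v,w)$. So we can partition $V$ into layers
\begin{align}
L_j := \{w \in V \mid d_G(v,w) = j\}, \qquad j = 0,1,\dots,D,
\end{align}
where $D := \max_{w} d_G(v,w)$ is finite (we assume $V$ finite, as for all graphs considered in this paper). Here $L_0 = \{v\}$. The key observation is that for every $w \in L_j$ with $j \ge 1$ there is a neighbour $u \in L_{j-1}$. To see this, take any shortest path $(u_1,\dots,u_{j+1})$ from $v$ to $w$. Its penultimate vertex $u_j$ has $d_G(v,u_j) \le j-1$, since the prefix is a path of length $j-1$. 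It also has $d_G(v,u_j) \ge j-1$, since otherwise $w$ would be at distance less than $j$. So $u_j \in L_{j-1}$.

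For each $w \neq v$, choose one such neighbour and call it the parent $\pi(w) \in L_{d_G(v,w)-1}$. Let $T_G$ have vertex set $V$ and edge set $\{\{w,\pi(w)\} \mid w \neq v\}$. Every edge of $T_G$ is an edge of $G$, so $T_G$ is a subgraph containing all vertices.

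Next we show $T_G$ is a tree in the sense used above, i.e.\ connected with a unique path between any two vertices.

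\textbf{Connectivity.} Iterating $\pi$ from any $w$ strictly decreases the layer index, so it reaches $v$ after exactly $d_G(v,w)$ steps. This gives a path $(w,\pi(w),\pi^2(w),\dots,v)$ in $T_G$.

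\textbf{Uniqueness of paths.} $T_G$ has $|V|$ vertices and $|V|-1$ edges, and it is connected, so it is acyclic by the standard edge count. Alternatively, one can argue directly: a cycle would contain a vertex of maximal layer, and that vertex has two cycle-neighbours in strictly lower layers. This is impossible, because each vertex is joined in $T_G$ to exactly one lower-layer vertex, its parent, and all its other tree-edges go to its children in the next higher layer. Acyclicity then gives uniqueness: two distinct paths between the same endpoints would together contain a cycle. We should interpret ``the path'' as a simple path here, since the paper's definition implicitly intends that.

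\textbf{Distances are preserved.} The unique path in $T_G$ from $v$ to $w$ is the parent chain reversed, and it has length exactly $d_G(v,w)$. So $l(\gamma_{T_G}(v,w)) = d_G(v,w)$ for all $w$, which is the defining property of a shortest-path spanning tree rooted at $v$.

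The only step needing genuine care is the acyclicity and uniqueness argument, together with pinning down that ``path'' means a simple path; the rest is bookkeeping with the layers.
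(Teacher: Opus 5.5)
Your proof is correct. The layer-by-layer parent-pointer construction is the breadth-first-search argument behind the result the paper cites (Erickson, Chapter~8), and the paper gives no proof of its own beyond that citation. Your caveat that ``path'' must be read as \emph{simple} path is also well taken: under the paper's literal definition, which allows repeated vertices, no connected graph with more than one vertex would have unique paths, so none could be a tree.
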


In the proof of \Cref{thm:nec_suf_pauli}, and through much of the rest of this manuscript, we consider the anti-commutation graph $G(i\mathbb{P})$ of a set of Pauli strings $i\mathbb{P} \subseteq i\mathcal{P}_{n}^*$. The following lemma relating shortest paths in $G(i\mathbb{P})$ to composed adjoint maps is useful later:
\begin{Lemma}\label{lem:path_adjoint_app}
        Let $i\mathbb{P}\subseteq i\mathcal{P}_n^*$ be such that $G:=G(i\mathbb{P})$ is connected. 
        Let $\gamma^*_G(iQ,iP):=(iP_1,...,iP_r)\in V^{r}$ with $iP_1=iQ$ and $iP_r=iP$ be a shortest path in $G$ between vertices $iQ\neq iP$.

        Then,
        \begin{align}
        ad_{iQ}\circ ad_{iP_2}\circ\cdots\circ ad_{iP_{r-1}}(iP)\neq 0.
        \end{align}
    \end{Lemma}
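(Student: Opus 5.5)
We want to show $\mathrm{ad}_{iQ}\circ \mathrm{ad}_{iP_2}\circ\cdots\circ \mathrm{ad}_{iP_{r-1}}(iP)\neq 0$ along a shortest path $(iP_1,\dots,iP_r)$. The key tool is that for Pauli strings, $\mathrm{ad}_{iA}(iB)$ is either $0$ (if they commute) or $\pm (iA)(iB)$ (if they anti-commute). Hence the composed map is nonzero as long as, at each stage, the next Pauli string anti-commutes with the Pauli string produced so far. The plan is to track the intermediate operators explicitly, proceeding by induction from the right.

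Define $iR_r := iP_r$ and, for $j=r-1,\dots,1$, $iR_j := \mathrm{ad}_{iP_j}(iR_{j+1})$. We claim that every $iR_j$ is nonzero and proportional to the product $(iP_j)(iP_{j+1})\cdots(iP_r)$. This suffices, since the quantity in the statement is exactly $iR_1$, and any product of Pauli strings is nonzero (each is unitary up to phase).

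Suppose inductively that $iR_{j+1}\propto (iP_{j+1})\cdots(iP_r)$. Then $iR_j\neq 0$ exactly when $iP_j$ anti-commutes with the product $(iP_{j+1})\cdots(iP_r)$. Commuting $iP_j$ past each factor $iP_m$ contributes a sign, which is $-1$ precisely when $iP_j$ and $iP_m$ anti-commute, i.e.\ when they are adjacent in $G$. So the total sign is $(-1)^{N_j}$, where $N_j$ counts the vertices among $iP_{j+1},\dots,iP_r$ adjacent to $iP_j$.

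The main step is to use shortest-path minimality to pin down $N_j$. If $iP_j$ were adjacent to some $iP_m$ with $m\ge j+2$, then $(iP_1,\dots,iP_j,iP_m,\dots,iP_r)$ would be a strictly shorter path between $iQ$ and $iP$, a contradiction. Hence the only neighbor of $iP_j$ among the later vertices is $iP_{j+1}$, so $N_j=1$. Therefore $iP_j$ anti-commutes with the product, $iR_j = (iP_j)\,iR_{j+1}\neq 0$, and $iR_j\propto (iP_j)\cdots(iP_r)$, which completes the induction. The base case $j=r-1$ holds because $iP_{r-1}$ and $iP_r$ are adjacent.

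The only care needed is in the commutation sign count: we use that $(iA)(iB)=\pm(iB)(iA)$ termwise, and that all vertices of a shortest path are distinct. The latter also ensures that $iP_j$ does not appear among the later factors, so no extra commuting factor arises.
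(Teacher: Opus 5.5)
Your proof is correct and follows essentially the same route as the paper. Both arguments reduce nonvanishing to $iP_j$ anti-commuting with the ordered product of the later path vertices, and both use shortest-path minimality to rule out any edge from $iP_j$ to some $iP_m$ with $m\ge j+2$, which makes the anti-commutation count odd. The only difference is presentation: the paper argues by contradiction (an even count would force such a shortcut), while your right-to-left induction makes explicit that each intermediate term equals the product, a fact the paper uses implicitly.
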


    \begin{proof}
    Observe that, for any $iP,iQ\in i\mathcal{P}_n^*$, the adjoint map $\textrm{ad}_{iP}(iQ)$ is either $0$, if $iP$ and $iQ$ commute, or equal to their product $(iP)(iQ)$, if they anti-commute. Accordingly,
    \begin{align}\label{eq:commutator_product}
         ad_{iQ}\circ ad_{iP_2}\circ\cdots\circ ad_{iP_{r-1}}(iP)=
         \begin{cases}
            \prod_{k=1}^r iP_k\\
             0 
         \end{cases}
    \end{align}
    where the product is taken to be ordered.

    Let $iP, iQ \in i\mathbb{P}$ and let $\gamma^*_G(iQ,iP):=(iP_1,...,iP_r)\in V^{r}$ be a shortest path in $G$ between $iQ =: iP_{1}$ and $iP =: iP_{r}$. Since $(iP_1,...,iP_r)$ is a path, it must be that $\{iP_{j}, iP_{j+1}\}$ is an edge in $G$ for all $j \in [r-1]$, which in turn means that $\textrm{ad}_{iP_{j}}(iP_{j+1}) \neq 0$.
    
    Suppose, for a contradiction, that $ad_{iQ}\circ ad_{iP_2}\circ\cdots\circ ad_{iP_{r-1}}(iP)= 0$. Then, there exists some $j \in [r-1]$ such that $[iP_j, \prod_{j<l\leq r}iP_l]=0$. If $j=r-1$, this means that $[iP_{r-1}, iP_{r}] = 0$, contradicting the assumption that $\{iP_{r-1}, iP_{r}\}$ is an edge in $G$. If $j<r-1$, it must be that $iP_{j}$ anti-commutes with $k = 0 \bmod 2$ elements of $\{ iP_{j+1}, \dots, iP_{r}\}$. As stated above, $\textrm{ad}_{iP_{j}}(iP_{j+1}) \neq 0$, it must be that $k > 1$ and hence there must exist some $l \in \{j+1, \dots, r\}$ be such that $iP_{j}$ and $iP_{l}$ anti-commute. It follows that $\{iP_{j}, iP_{l}\}$ is an edge in $G$, and thus $(iP_{1},\dots, iP_{j},iP_{l} iP_{l+1}, \dots, iP_{r})$ is a path in $G$ between $iQ$ and $iP$. However, $(iP_{1},\dots, iP_{j},iP_{l} iP_{l+1}, \dots, iP_{r})$ is strictly shorter than $\gamma^*_G(iQ,iP)$, contradicting the assumption that the latter is a shortest path.
    \end{proof}

Let us also establish the following lemma:
\begin{Lemma} \label{lem:last_anti_commute} Let $r\ge 2$ and let $iA, iB_{1},\dots,iB_{r} \in i\mathcal{P}_{n}^*$ be such that 
\begin{align}
   0 \neq iA \propto \textrm{ad}_{iB_{1}} \circ \dots \circ \textrm{ad}_{iB_{r-1}}(iB_{r}).
\end{align}
Then $\textrm{ad}_{iB_{1}}(iA) \neq 0$.
\end{Lemma}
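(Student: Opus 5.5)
Since the composition is nonzero, I would first write $iC := \textrm{ad}_{iB_{2}} \circ \dots \circ \textrm{ad}_{iB_{r-1}}(iB_{r})$, with the convention $iC = iB_{2}$ when $r=2$. Every nested adjoint of Pauli strings is either $0$ or $\pm$ the ordered product (as in \cref{eq:commutator_product}), and the outer expression is nonzero, so $iC \neq 0$. Hence $iC \propto iC'$ for some Pauli string $iC'$, obtained as a product of $iB_{2},\dots,iB_{r}$, with a sign in $\{\pm1\}$ and a possible factor of $i$ that is absorbed because nonzero nested commutators of elements of $\mathfrak{su}(2^n)$ stay in $\mathfrak{su}(2^n)$. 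The hypothesis then gives $0 \neq \textrm{ad}_{iB_{1}}(iC')$, so $iB_{1}$ and $iC'$ anti-commute, and $iA \propto (iB_{1})(iC')$.

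The key step is to show that $iB_{1}$ anti-commutes with $(iB_{1})(iC')$. Write $(iB_{1})(iC')$ and use that $iB_{1}$ commutes with itself:
\begin{align}
(iB_{1})\,(iB_{1})(iC') = (iB_{1})(iB_{1})(iC') = -(iB_{1})(iC')(iB_{1}),
\end{align}
where the second equality uses the anti-commutation of $iB_{1}$ and $iC'$. Thus $iB_{1}$ anti-commutes with $(iB_{1})(iC')$, and therefore with $iA$, since $iA$ equals this product up to sign. For anti-commuting Pauli strings the adjoint map equals the product, which is an invertible operator times a unitary and hence nonzero. So $\textrm{ad}_{iB_{1}}(iA) = (iB_{1})(iA) \neq 0$; in fact it is $\propto iC'$.

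I expect the only real obstacle to be bookkeeping. One must make sure $iA$ is genuinely a Pauli string and not a multiple by $\pm i$ or $\pm 1$ in the wrong normalization. This is handled by the convention $\textrm{ad}_{iP}(iQ) = (iP)(iQ)$ for anti-commuting strings, together with the fact that $(iP)(iQ) = \pm iR$ as recorded in \Cref{sec:background}. Commutation relations are also insensitive to scalar factors, so the proportionality signs play no role.
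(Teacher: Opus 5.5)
Your proof is correct and follows essentially the same route as the paper. Both arguments rest on the same observation: $iB_{1}$ commutes with itself and anti-commutes with the nonzero product of $iB_{2},\dots,iB_{r}$, so it anti-commutes with $iA$. The paper phrases this as a contradiction (if $\textrm{ad}_{iB_{1}}(iA)=0$ then $[iB_{1},\prod_{j\ge 2} iB_{j}]=0$, which forces the nested adjoint, and hence $iA$, to vanish), whereas you argue it directly; the two are equivalent.
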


\begin{proof} Let $r$ and $iA, iB_{1},\dots,iB_{r}$ be as in the statement of the lemma. Suppose that $\textrm{ad}_{iB_{1}}(iA) = 0$. Since $iA \propto \textrm{ad}_{iB_{1}} \circ \dots \circ \textrm{ad}_{iB_{r-1}}(iB_{r}) = (iB_{1})\prod_{j=2}^{r}(iB_{j})$ and $iB_{1}$ commutes with itself, for $\textrm{ad}_{iB_{1}}(iA) = 0$ it must be the case that 
\begin{align}
    [iB_{1}, \prod_{j=2}^{r}(iB_{j})] = 0.
\end{align}
However, this would entail that $\textrm{ad}_{iB_{1}} \circ \dots \circ \textrm{ad}_{iB_{r-1}}(iB_{r}) = 0$ and hence also $iA = 0$, forming a contradiction.
\end{proof}

From Lie algebra theory, we require the following concepts in addition to those presented in \Cref{sec:background}. Let $\mathfrak{b}$ denote a Lie algebra. An {\bf ideal} of $\mathfrak{b}$, denoted $\mathfrak{a}$, is a linear subspace of $\mathfrak{b}$ such that, for all $a \in \mathfrak{a}$ and all $b \in \mathfrak{b}$, $[a,b] \in \mathfrak{a}$. The ideal $\mathfrak{a}$ is a {\bf proper ideal} of $\mathfrak{b}$ if $\mathfrak{a}$ is a strict subspace of $\mathfrak{b}$. A Lie algebra $\mathfrak{b}$ is {\bf non-Abelian} if there exists some $a,b \in \mathfrak{b}$ such that $[a,b] \neq 0$ and $\mathfrak{b}$ is {\bf simple} if it is non-Abelian and contains no non-zero proper ideals. The following is a known result from the literature: 
\begin{Proposition}[$\mathfrak{su}(N)$ is simple~\cite{knapp2002liegroups}]\label{prop:no_ideals}
    The Lie algebra $\mathfrak{su}(N)$ is simple for $N\geq 2$.
\end{Proposition}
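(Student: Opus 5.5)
We prove simplicity of $\mathfrak{su}(N)$ by passing to the complexification $\mathfrak{sl}(N,\mathbb{C})$, where matrix units make the ideal structure transparent, and then descend back to the real form. Non-Abelianness is immediate: for $N\ge 2$, embed the $2\times 2$ block spanned by the first two basis vectors. The skew-Hermitian traceless matrices $iX\oplus 0$ and $iY\oplus 0$ satisfy $[iX\oplus 0,iY\oplus 0]=-2iZ\oplus 0\neq 0$.

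For the absence of ideals, let $\mathfrak{a}\subseteq\mathfrak{su}(N)$ be a non-zero ideal.

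\textbf{Complexification.} Every complex $N\times N$ traceless matrix decomposes uniquely as $A+iB$ with $A,B\in\mathfrak{su}(N)$. Hence $\mathfrak{sl}(N,\mathbb{C})=\mathfrak{su}(N)\oplus i\,\mathfrak{su}(N)$ as real vector spaces, and $\mathfrak{su}(N)\cap i\,\mathfrak{su}(N)=0$. Set $\mathfrak{a}_{\mathbb{C}}:=\mathfrak{a}+i\mathfrak{a}$. By complex bilinearity of the commutator, $\mathfrak{a}_{\mathbb{C}}$ is a non-zero ideal of $\mathfrak{sl}(N,\mathbb{C})$. It also satisfies $\mathfrak{a}_{\mathbb{C}}\cap\mathfrak{su}(N)=\mathfrak{a}$, by uniqueness of the decomposition. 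So it suffices to show that $\mathfrak{sl}(N,\mathbb{C})$ has no non-zero proper ideals: then $\mathfrak{a}_{\mathbb{C}}=\mathfrak{sl}(N,\mathbb{C})$ forces $\mathfrak{a}=\mathfrak{su}(N)$.

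\textbf{Extracting a matrix unit.} Let $0\neq x\in\mathfrak{a}_{\mathbb{C}}$. We aim to show that some matrix unit $E_{ij}$ with $i\neq j$ lies in $\mathfrak{a}_{\mathbb{C}}$.

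If $x$ is diagonal, it is non-zero and traceless, so it has two distinct diagonal entries $x_i\neq x_j$. Then $[x,E_{ij}]=(x_i-x_j)E_{ij}$ lies in the ideal, which gives $E_{ij}$.

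If $x$ has an off-diagonal entry $x_{ij}\neq 0$, choose a traceless diagonal $h$ with generic entries $h_1,\dots,h_N$. We need the differences $h_k-h_l$ for $k\neq l$ to be pairwise distinct and non-zero; this avoids finitely many hyperplanes, so such $h$ exists. Each $E_{kl}$ is an eigenvector of $\mathrm{ad}_h$ with eigenvalue $h_k-h_l$, and diagonal matrices have eigenvalue $0$. By Lagrange interpolation there is a polynomial $p$ with $p(h_i-h_j)=1$ and $p$ vanishing at all other eigenvalues, including $0$. Then $p(\mathrm{ad}_h)(x)=x_{ij}E_{ij}$. This uses only brackets and linear combinations, so it stays in $\mathfrak{a}_{\mathbb{C}}$.

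\textbf{Generation.} From a single $E_{ij}\in\mathfrak{a}_{\mathbb{C}}$ we obtain all matrix units:
\begin{align*}
[E_{ki},E_{ij}]&=E_{kj}\quad (k\neq j),\\
[E_{ij},E_{jl}]&=E_{il}\quad (i\neq l).
\end{align*}
Iterating these identities yields every off-diagonal $E_{kl}$. Then $[E_{kl},E_{lk}]=E_{kk}-E_{ll}$ yields all of these differences, which span the traceless diagonal matrices. Hence $\mathfrak{a}_{\mathbb{C}}=\mathfrak{sl}(N,\mathbb{C})$, which completes the argument.

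The main obstacle is the extraction step: producing a single matrix unit from an arbitrary non-zero element of the ideal. The interpolation trick with a generic diagonal $h$ handles it cleanly. The descent from the complexification back to $\mathfrak{su}(N)$ is bookkeeping, but it must be stated carefully, namely that $\mathfrak{a}_{\mathbb{C}}\cap\mathfrak{su}(N)=\mathfrak{a}$.
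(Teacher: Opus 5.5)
The paper gives no proof of this proposition; it only cites Knapp, where simplicity of $\mathfrak{su}(N)$ is standard structure theory. The usual route is that $\mathfrak{su}(N)$ is a real form of the simple complex algebra $\mathfrak{sl}(N,\mathbb{C})$, and a real Lie algebra with simple complexification is simple. Your proposal is a self-contained, elementary version of exactly that chain:
\begin{itemize}
\item The map $\mathfrak{a}\mapsto\mathfrak{a}+i\mathfrak{a}$, together with $\mathfrak{a}_{\mathbb{C}}\cap\mathfrak{su}(N)=\mathfrak{a}$, is the descent step.
\item The matrix-unit argument proves simplicity of $\mathfrak{sl}(N,\mathbb{C})$ by hand.
\end{itemize}
The complexification, the intersection property, the non-Abelian check and the extraction via $p(\mathrm{ad}_h)$ are all correct. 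The hyperplanes you avoid are genuinely proper inside the traceless diagonal space: each functional $h\mapsto h_k-h_l-h_{k'}+h_{l'}$ has coefficient sum zero and is non-zero for $(k,l)\neq(k',l')$. What your route buys over the citation is independence from structure theory. It is also close in spirit to the paper's own Pauli isolation lemma, where $-\mathrm{ad}_{iQ}\circ\mathrm{ad}_{iQ}$ acts as a projector onto anti-commuting Pauli components, just as your $p(\mathrm{ad}_h)$ projects onto a single eigenvector $E_{ij}$.

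One step fails as written: the generation step when $N=2$.
\begin{itemize}
\item With $E_{12}\in\mathfrak{a}_{\mathbb{C}}$, your constraints $k\neq j$ and $i\neq l$ force $k=1$ and $l=2$. Both displayed identities then just return $E_{12}$, and the bracketing partner ($E_{11}$ or $E_{22}$) is not even in $\mathfrak{sl}(2,\mathbb{C})$.
\item The last identity gives $E_{11}-E_{22}$, but $E_{21}$ is never produced, so $\mathfrak{a}_{\mathbb{C}}=\mathfrak{sl}(2,\mathbb{C})$ is not established.
\item The fix is one line. $[E_{21},E_{12}]=E_{22}-E_{11}\in\mathfrak{a}_{\mathbb{C}}$, since only one factor needs to lie in the ideal. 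Then $[E_{11}-E_{22},E_{21}]=-2E_{21}$. Equivalently, rerun your own diagonal case on the non-zero diagonal element $E_{11}-E_{22}$.
\end{itemize}

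For $N\geq 3$ the iteration does work, but you should say how, because it needs a third index:
\begin{itemize}
\item The two identities yield all $E_{kl}$ with $k\neq j$.
\item The remaining $E_{jl}$ come from $[E_{jm},E_{ml}]=E_{jl}$ with $m\notin\{j,l\}$.
\item State the displayed identities only for $k\notin\{i,j\}$ and $l\notin\{i,j\}$, so that the bracketing partner actually lies in $\mathfrak{sl}(N,\mathbb{C})$.
\end{itemize}
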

It follows from this that our Lie algebras of interest, namely $\mathfrak{su}(2^n)$ for $n \ge 2$, is simple.

Finally, we require one further existing result from the literature on universality of Pauli strings, which we state here using the notation established in this manuscript:
\begin{Proposition}[Theorem~1 in Ref.~\cite{smith2025optimally}]\label{thm:smith2025optimally}
    Let $i\mathbb{Q} \subseteq i\mathcal{P}^*_{n}$ be such that $\langle i\mathbb{Q}_{|\mathrm{supp}(i\mathbb{Q})}\rangle_{[\cdot,\cdot]}\propto i\mathcal{P}_k^*$ for $2\leq k\leq n$.
    Let $i\mathbb{S} \subseteq i\mathcal{P}_{n}^{*}$ be such that $\langle i\mathbb{S}_{|[n]\setminus\mathrm{supp}(i\mathbb{Q})}\rangle_\times \supseteq \mathcal{P}_{n-k}$
    and for all $iR\in i\mathbb{S}$ there exists a $j\in \mathrm{supp}(i\mathbb{Q})$ such that $j\in\mathrm{supp}(i R)$. Then $\langle i\mathbb{Q} \cup i\mathbb{S}\rangle_{\mathrm{Lie}}=\mathfrak{su}(2^n)$, i.e., it is a universal generating set for $\mathfrak{su}(2^{n})$.
\end{Proposition}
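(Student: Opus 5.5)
The plan is to show directly that $\langle i\mathbb{Q}\cup i\mathbb{S}\rangle_{[\cdot,\cdot]}\propto i\mathcal{P}_n^*$ and then conclude with \Cref{prop:universal_pauli_commutator}. Write $K:=\mathrm{supp}(i\mathbb{Q})$, $\bar K:=[n]\setminus K$, and let $\mathcal{L}:=\langle i\mathbb{Q}\cup i\mathbb{S}\rangle_{[\cdot,\cdot]}$. Up to reordering qubits, I write each string as $iB\otimes T$ with $B$ acting on $K$ and $T$ acting on $\bar K$.

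\textbf{Step 1 (local part).} Every element of $i\mathbb{Q}$ acts as the identity on $\bar K$, so nested adjoints of them also do. The hypothesis $\langle i\mathbb{Q}_{|K}\rangle_{[\cdot,\cdot]}\propto i\mathcal{P}_k^*$ then gives $iB\otimes I\in\mathcal{L}$ (up to sign) for every $B\neq I$ on $K$.

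\textbf{Step 2 (dressing the elements of $i\mathbb{S}$).} Take $iR=iB_0\otimes T_R\in i\mathbb{S}$. By the support assumption, $B_0\neq I$. For any $A$ anticommuting with $B$,
\begin{align*}
\mathrm{ad}_{iA\otimes I}(iB\otimes T_R)\propto iAB\otimes T_R .
\end{align*}
The anti-commutation graph of $i\mathcal{P}_k^*$ is connected, and any $B'\neq I$ is reached from $B$ by such multiplications. Hence $iB\otimes T_R\in\mathcal{L}$ for all $B\neq I$. Define
\begin{align*}
\mathcal{T}:=\{T : iB\otimes T\in\mathcal{L} \text{ (up to sign and phase) for all } B\neq I\}.
\end{align*}
Then $\mathcal{T}$ contains $I$ and every $T_R$.

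\textbf{Step 3 (closure of $\mathcal{T}$ under products).} Take $T_1,T_2\in\mathcal{T}$ and a target $C\neq I$ on $K$. The commutator of $iB_1\otimes T_1$ and $iB_2\otimes T_2$ is nonzero exactly when the total strings anticommute.
\begin{itemize}
\item If $T_1,T_2$ commute, I choose anticommuting $B_1,B_2$ with $B_1B_2\propto C$.
\item If $T_1,T_2$ anticommute, I choose commuting $B_1,B_2\neq I$ with $B_1B_2\propto C$.
\end{itemize}
In either case the commutator is $\propto iC\otimes T_1T_2$ (after fixing the phase so that it is skew-Hermitian). The second case is where $k\ge 2$ is needed, and I expect it to be the main obstacle. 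When $k\ge2$, a commuting pair $B_1,B_2\ne I$ with prescribed product $C$ exists: for $C$ of weight $\ge2$, split its support into two nonempty parts; if $C$ has weight one, pick $B_1$ commuting with $C$ and with nontrivial support off $\mathrm{supp}(C)$. For $k=1$ no such pair exists, since any two distinct non-identity one-qubit Paulis anticommute.

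Consequently $\mathcal{T}$ is closed under products. By the product-universality hypothesis $\langle i\mathbb{S}_{|\bar K}\rangle_\times\supseteq\mathcal{P}_{n-k}$, it follows that $\mathcal{T}=\mathcal{P}_{n-k}$ (up to phases).

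\textbf{Step 4 (strings trivial on $K$).} For $T\neq I$ on $\bar K$, pick $T_1$ anticommuting with $T$ and set $T_2:=T_1T$ up to phase; then $T_1$ and $T_2$ anticommute. Commuting $iB\otimes T_1$ with $iB\otimes T_2$ yields $\propto iI\otimes T_1T_2\propto iI\otimes T$. Together with Steps 1–3, this puts every element of $i\mathcal{P}_n^*$ in $\mathcal{L}$ up to sign, and \Cref{prop:universal_pauli_commutator} finishes the proof.

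Throughout I would track only membership up to sign (and up to the factor $\pm i$ absorbed into $\langle\cdot\rangle_\times$), which is all that \Cref{prop:universal_pauli_commutator} requires.
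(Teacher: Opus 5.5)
The paper does not prove this proposition. It imports it as Theorem~1 of Ref.~\cite{smith2025optimally} and uses it only as a black box in the sufficiency half of \Cref{thm:nec_suf_pauli}. So there is no in-paper argument to compare with, and your proposal is a self-contained replacement for that citation. As such it is correct.

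Your chain of reasoning is sound. Local universality on $K$ lets you re-dress every $iR\in i\mathbb{S}$ with an arbitrary nontrivial factor on $K$. This makes the set $\mathcal{T}$ of admissible tails closed under products, after which the $K$-factor can be stripped off. The small facts you invoke all hold:
\begin{enumerate}
\item An anticommuting factorization of $C\neq I$ is $B_1$ anticommuting with $C$ together with $B_2\propto B_1C$.
\item A commuting factorization with $B_1,B_2\neq I$ requires some $B_1\notin\{I,C\}$ commuting with $C$, and such $B_1$ exists exactly when $k\ge 2$.
\item The anticommutation graph of $i\mathcal{P}_k^*$ is connected. Any two distinct commuting non-identity strings share an anticommuting partner, by non-degeneracy of the symplectic form.
\end{enumerate}
You are also right that Step~3 is where $k\ge 2$ is indispensable. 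For example, $\{iX_1,iZ_1,iX_1X_2,iX_1Z_2\}$ satisfies every other hypothesis with $k=1$, yet its Lie closure is only $10$-dimensional.

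One technicality needs tidying. With the paper's definition, $\langle\cdot\rangle_{[\cdot,\cdot]}$ is built only from $\mathrm{ad}$ by \emph{generators}. So $\mathrm{ad}_{iA\otimes I}(iB\otimes T_R)$, where $iA\otimes I$ is a derived element, is not literally in your $\mathcal{L}$ by definition. The same goes for the brackets of two derived elements in Steps~3 and~4. The fix is to run the argument in $\mathfrak{g}:=\langle i\mathbb{Q}\cup i\mathbb{S}\rangle_{\mathrm{Lie}}$, which is closed under brackets. By linear independence of Pauli strings, a Pauli string lies in $\mathfrak{g}$ if and only if it is $\pm$ an element of $\mathcal{L}$. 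At the end you then have a full basis of $\mathfrak{su}(2^n)$ inside $\mathfrak{g}$, so you do not even need \Cref{prop:universal_pauli_commutator}.

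Compared with citing the result, your route has three advantages. It makes the paper self-contained, and it is explicitly constructive. It also shows that the argument uses only $\mathfrak{su}(2^k)\otimes I_{\bar K}\subseteq\mathfrak{g}$, so it would apply unchanged if local universality on $K$ came from non-Pauli controls. The citation instead links the statement to the generation algorithm of Ref.~\cite{smith2025optimally}, which the paper mentions after \Cref{thm:nec_suf_pauli}.
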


Let us now turn to \Cref{thm:nec_suf_pauli}, which we restate here:
\begin{Theorem}[Pauli string universality]\label{thm:nec_suf_pauli_app}
    Let $i\mathbb{P}\subseteq i\mathcal{P}_n^*$ with $n \ge 3$. Then $\langle i\mathbb{P}\rangle_{\textrm{Lie}}=\mathfrak{su}(2^n)$ if and only if all of the following hold:
    \begin{enumerate}
        \item \label{item:thm_1} there exists a set $i\mathbb{Q}\subseteq \langle i\mathbb{P}\rangle_{[\cdot,\cdot]}$ such that $\langle i\mathbb{Q}_{|\mathrm{supp}(i\mathbb{Q})}\rangle_{[\cdot,\cdot]}\propto i\mathcal{P}_k^*$ for $2\leq k< n$ where $k:=|\mathrm{supp}(i\mathbb{Q})|$.
        \item \label{item:thm_2}$i\mathcal{P}_{n-k} \subseteq \langle i\mathbb{P}_{|[n]\setminus\mathrm{supp}(i\mathbb{Q})}\rangle_\times $.
        \item \label{item:thm_3}$G(i\mathbb{P})$ is connected.
    \end{enumerate}
\end{Theorem}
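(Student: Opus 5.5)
We split the proof into necessity and sufficiency. For sufficiency, the plan is to reduce to \Cref{thm:smith2025optimally}.

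\textbf{Sufficiency.} Take $i\mathbb{Q}$ from condition~\ref{item:thm_1}; since $i\mathbb{Q}\subseteq\langle i\mathbb{P}\rangle_{[\cdot,\cdot]}$, we have $\langle i\mathbb{Q}\rangle_{\mathrm{Lie}}\subseteq\langle i\mathbb{P}\rangle_{\mathrm{Lie}}$. Write $K:=\mathrm{supp}(i\mathbb{Q})$. We must build a set $i\mathbb{S}\subseteq\langle i\mathbb{P}\rangle_{[\cdot,\cdot]}$ with two properties: (a) every element of $i\mathbb{S}$ has support meeting $K$, and (b) $i\mathbb{S}_{|[n]\setminus K}$ is product universal on $n-k$ qubits. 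Then \Cref{thm:smith2025optimally} gives $\langle i\mathbb{Q}\cup i\mathbb{S}\rangle_{\mathrm{Lie}}=\mathfrak{su}(2^n)$, hence the claim.

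To construct $i\mathbb{S}$, we use connectivity (condition~\ref{item:thm_3}). Fix an element $iQ_0$ of $i\mathbb{Q}$, or more conveniently some $iP_0\in i\mathbb{P}$ whose support meets $K$. Such an element exists because otherwise all of $i\mathbb{P}$ would be supported off $K$, contradicting $i\mathbb{Q}\subseteq\langle i\mathbb{P}\rangle_{[\cdot,\cdot]}$ with $K\neq\emptyset$. By \Cref{prop:exist_spanning_tree}, take a shortest-path spanning tree of $G(i\mathbb{P})$ rooted at $iP_0$. For each $iP\in i\mathbb{P}$, \Cref{lem:path_adjoint_app} gives a nonzero nested commutator along the path, proportional to the ordered product $iP_0\,iP_2\cdots iP$. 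On $[n]\setminus K$ this differs from $iP_{|[n]\setminus K}$ only by products of restrictions of other elements of $i\mathbb{P}$.

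The anticipated cleaner route is to exploit full control on $K$ to strip the $K$-part of any element. Since $\langle i\mathbb{Q}\rangle_{[\cdot,\cdot]}$ contains every Pauli string supported on $K$ (up to sign), for any $iR\in\langle i\mathbb{P}\rangle_{[\cdot,\cdot]}$ we can commute with a suitable string $iA$ supported on $K$ that anticommutes with $iR$. This replaces $iR_{|K}$ by an arbitrary nonidentity Pauli on $K$ while keeping $iR_{|[n]\setminus K}$ fixed. Such an $iA$ exists whenever $iR_{|K}\neq I$, since $k\ge 2$. If instead $iR_{|K}=I$, we first commute $iR$ with a neighbour in the tree whose support meets $K$. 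This neighbour-attaching step uses connectivity and the shortest-path structure. We therefore aim to show that, for every $iP\in i\mathbb{P}$, some element $iS_P\in\langle i\mathbb{P}\rangle_{[\cdot,\cdot]}$ satisfies $iS_P{}_{|K}\neq I$ and $iS_P{}_{|[n]\setminus K}=iP_{|[n]\setminus K}$. Collecting these gives $i\mathbb{S}$, and (b) then follows from condition~\ref{item:thm_2}.

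The main obstacle is exactly this ``attaching'' step for elements whose restriction to $K$ is trivial. The commutator with a neighbour changes the off-$K$ part by a product with that neighbour's restriction, so $iP_{|[n]\setminus K}$ itself is not directly preserved. Our plan is to induct along the spanning tree, ordered by distance from the root. Assume every tree-ancestor $iP'$ already has a representative $iS_{P'}$ with nontrivial $K$-part and off-$K$ part equal to $iP'_{|[n]\setminus K}$. Then take $\mathrm{ad}_{iP}(iS_{P'})$ for the parent $iP'$, first adjusting the $K$-part of $iS_{P'}$ so that it anticommutes with $iP$. The off-$K$ part becomes $iP_{|[n]\setminus K}\,iP'_{|[n]\setminus K}$ and the $K$-part stays nontrivial. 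Since the product group generated by the new restrictions equals that generated by the old ones, product universality is preserved, and we take $i\mathbb{S}$ to be these modified elements.

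\textbf{Necessity.} Suppose $\langle i\mathbb{P}\rangle_{\mathrm{Lie}}=\mathfrak{su}(2^n)$. For condition~\ref{item:thm_3}, if $G(i\mathbb{P})$ split into components $C_1,C_2$, their Lie closures would commute elementwise and sum to the whole algebra. Each would then be an ideal of $\mathfrak{su}(2^n)$, and one of them would be a nonzero proper ideal, contradicting \Cref{prop:no_ideals}. For condition~\ref{item:thm_2}, by \Cref{prop:universal_pauli_commutator} every Pauli string is, up to phase, a product of elements of $i\mathbb{P}$. Restricting such products to any qubit subset is multiplicative, so the condition holds for every $K$. For condition~\ref{item:thm_1}, pick $K=\{1,2\}$ and let $i\mathbb{Q}$ be the set of strings in $\langle i\mathbb{P}\rangle_{[\cdot,\cdot]}$ that are supported exactly within $K$, for example all of them up to sign. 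Its restriction is all of $i\mathcal{P}_2^*$, which is closed under brackets, and $k=2<n$ because $n\ge 3$.
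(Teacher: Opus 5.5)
Your overall plan matches the paper's: necessity via \Cref{prop:universal_pauli_commutator} and simplicity, sufficiency via \Cref{thm:smith2025optimally} with an $i\mathbb{S}$ built from a shortest-path spanning tree. However, the attaching step, which you yourself flag as the main obstacle, is justified by an argument that cannot work.

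\textbf{The gap.} Take $iP$ with $iP_{|K}=I$. Whether $\mathrm{ad}_{iP}(iS_{P'})\neq 0$ is then decided entirely by the off-$K$ parts. So ``first adjusting the $K$-part of $iS_{P'}$ so that it anticommutes with $iP$'' has no effect in exactly the case the step exists for.

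Your inductive hypothesis (off-$K$ part of $iS_{P'}$ equal to $iP'_{|[n]\setminus K}$) would give anticommutation for free from the tree edge. But your own step does not preserve it: the new element has off-$K$ part $iP_{|[n]\setminus K}\,iP'_{|[n]\setminus K}$. In general the off-$K$ part of $iS_{P'}$ is, up to phase, $iP'_{|[n]\setminus K}$ times restrictions of several strict ancestors of $iP'$. If $iP$ anticommutes with an even number of these factors, the commutator vanishes.

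Concretely, let $iP''_{|K}\neq I$ and $iP'_{|K}=iP_{|K}=I$, with $iP$ adjacent to both $iP'$ and $iP''$. Then the off-$K$ part of $iS_{P'}$ is $iP'_{|[n]\setminus K}\,iP''_{|[n]\setminus K}$, which commutes with $iP_{|[n]\setminus K}$.

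\textbf{The repair.} What rules this out is the shortest-path property, which you name but never use. In a shortest-path tree, $iP$ is adjacent in $G(i\mathbb{P})$ to its parent and to no other ancestor. Since $iP_{|K}=I$, these relations transfer verbatim to the restrictions on $[n]\setminus K$. Hence $iP_{|[n]\setminus K}$ anticommutes with exactly one factor of the off-$K$ part of $iS_{P'}$, and the $K$-part of $\mathrm{ad}_{iP}(iS_{P'})$ is $iS_{P'}{}_{|K}\neq I$. This is essentially \Cref{lem:path_adjoint_app}. With this argument and the corrected invariant (off-$K$ part is a product of restrictions along the tree path, starting with the node's own), your construction goes through. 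Your product-group bookkeeping for (b) is then correct.

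\textbf{The paper's route.} The paper avoids the case split by doing what you considered and discarded. It roots the shortest-path tree at some $iQ\in i\mathbb{Q}$ in the enlarged graph $G(i\mathbb{P}\cup i\mathbb{Q})$. That graph is connected because each element of $i\mathbb{Q}$ anticommutes with the outermost element of its defining nested commutator (\Cref{lem:last_anti_commute}). Each nested commutator along a tree path is then nonzero by \Cref{lem:path_adjoint_app} and anticommutes with $iQ$ by \Cref{lem:last_anti_commute}. Its support therefore automatically meets $\mathrm{supp}(iQ)\subseteq K$, so no control on $K$ is needed. Consecutive tree elements multiply to $iP$ up to phase, which gives product universality.

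\textbf{Necessity.} Your items 1 and 2 match the paper. For item 3, your commuting-ideals argument is a valid and shorter alternative to the paper's explicit check that $\langle i\mathbb{P}_1\rangle_{\mathrm{Lie}}$ is a proper ideal. You should add why one of the two ideals is proper: if both equalled $\mathfrak{su}(2^n)$, that algebra would commute with itself, contradicting that it is non-abelian.
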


\begin{proof}
\textbf{Sufficiency:} 

Suppose that \Cref{item:thm_1,item:thm_2,item:thm_3} from the statement of the theorem hold. Note that $i\mathbb{Q}$ satisfies the conditions of \Cref{thm:smith2025optimally}, so to prove sufficiency, we can proceed by demonstrating that there exists a set $i\mathbb{S} \subseteq \langle i\mathbb{P}\rangle_{[\cdot,\cdot]}$ which satisfies the remaining conditions of \Cref{thm:smith2025optimally}.

Since we require that $\langle i\mathbb{Q}_{|\mathrm{supp}(i\mathbb{Q})}\rangle_{[\cdot,\cdot]}\propto i\mathcal{P}_k^*$, i.e., equality of these sets up to $\pm1$ for each element, let us ignore  any potential $-1$ coefficients on the Pauli strings in $i\mathbb{Q}$ for simplicity of notation. Since $i\mathbb{Q}\subseteq \langle i\mathbb{P}\rangle_{[\cdot,\cdot]}$, then for each $iQ \in i\mathbb{Q}$ either (i) $iQ \in i\mathbb{P}$ directly or (ii) there exists an $r \ge 2$ and elements $iP_{1},\dots,iP_{r} \in i\mathbb{P}$ such that 
\begin{align}
    iQ = \textrm{ad}_{iP_{1}}\circ \dots \circ \textrm{ad}_{iP_{r-1}}(iP_{r}).
\end{align}
Let $\tilde{G} := G(i\mathbb{P} \cup i\mathbb{Q})$, i.e., it is the anti-commutation graph of $i\mathbb{P} \cup i\mathbb{Q}$. Note that $G(i\mathbb{P})$ is a subgraph of $\tilde{G}$ and recall that, by assumption, $G(i\mathbb{P})$ is connected. By \Cref{lem:last_anti_commute}, we know that for all non-zero $iQ$ falling into case (ii) above, we have that $[iQ, iP_{1}] \neq 0$, meaning that the edge $\{iQ, iP_{1} \}$ exists in $\tilde{G}$. Thus every $iQ \in i\mathbb{Q}\setminus i\mathbb{P}$ is connected to the subgraph $G(i\mathbb{P})$ by at least one edge, and so $\tilde{G}$ itself is connected.

Let $iQ \in i\mathbb{Q}$ be fixed and let $T_{\tilde{G}}$ denote a shortest-path spanning tree of $\tilde{G}$ rooted at the vertex corresponding to $iQ$, which we know must exist by \Cref{prop:exist_spanning_tree}. For each $iP \in i\mathbb{P}\cup i\mathbb{Q}$ such that $iP \neq iQ$, let $\gamma^*_{\tilde{G}}(iQ,iP):=(iP_1,...,iP_r)\in V^{r}$ with $iP_1=iQ$ and $iP_r=iP$ denote the path between $iQ$ and $iP$ in $T_{\tilde{G}}$, meaning also that it defines a shortest path between vertices $iQ\neq iP$ in $\tilde{G}$. By \Cref{lem:path_adjoint_app}, we know that, for each $iP \in i\mathbb{P}\cup i\mathbb{Q}$ such that $iP \neq iQ$,
\begin{align}
    \textrm{ad}_{iQ} \circ \textrm{ad}_{iP_2}\circ \dots \circ \textrm{ad}_{iP_{r-1}}(iP) \neq 0.
\end{align}
For each $iP \in i\mathbb{P}\cup i\mathbb{Q}$ such that $iP \neq iQ$, let us define $iR_{(iQ,iP)}$ to be the element of $i\mathcal{P}_{n}^*$ such that 
\begin{align}
    iR_{(iQ,iP)} \propto \textrm{ad}_{iQ} \circ \textrm{ad}_{iP_2}\circ \dots \circ \textrm{ad}_{iP_{r-1}}(iP),
\end{align}
and let us define
\begin{align}\label{eq:spanning-tree-set_app}
    i\mathbb{S} := \{iQ\} \cup \left\{iR_{(iQ,iP)} | iP \in i\mathbb{P}\cup i\mathbb{Q}, iP \neq iQ  \right\}.
\end{align}

For $i\mathbb{S}$ to satisfy the conditions of \Cref{thm:smith2025optimally}, we need to show (a) that $i\mathbb{S}$ is product universal on $[n]\setminus \textrm{supp}(i\mathbb{Q})$, i.e., that $\langle i\mathbb{S}_{|[n]\setminus\mathrm{supp}(i\mathbb{Q})}\rangle_\times \supseteq i\mathcal{P}_{n-k}$, and (b) that, for each $iR \in i\mathbb{S}$ there exists a $j \in \textrm{supp}(i\mathbb{Q})$ such that $j \in \textrm{supp}(iR)$. 

By assumption, we have that $\langle i\mathbb{P}_{|[n]\setminus\mathrm{supp}(i\mathbb{Q})}\rangle_\times \supseteq \mathcal{P}_{n-k}$, so to demonstrate (a) it suffices to show that $\langle i\mathbb{P}_{|[n]\setminus\mathrm{supp}(i\mathbb{Q})}\rangle_\times \subseteq \langle i\mathbb{S}_{|[n]\setminus\mathrm{supp}(i\mathbb{Q})}\rangle_\times$. Let $iP \in i\mathbb{P}\setminus \{iQ\}$ and let $\gamma^*_{\tilde{G}}(iQ,iP) = (iP_{1}, \dots, iP_{r})$ be the path between $iQ$ and $iP$ in $T_{\tilde{G}}$. If $r=2$, the path is simply $\gamma^*_{\tilde{G}}(iQ,iP) = (iQ, iP)$, meaning that $iR_{(iQ,iP)} \propto (iQ)(iP)$. Since $(iQ)^2 = - I \otimes \dots \otimes I$ for any Pauli string $iQ$, this means that $(iQ)(iR_{(iQ,iP)}) \propto iP$ and thus $iP \in \langle i\mathbb{S}\rangle_\times$. If $r > 2$, then $\gamma^*_{\tilde{G}}(iQ,iP) = (iQ, ,iP_{2},\dots, iP_{r-1}, iP)$ and 
\begin{align}
    iR_{(iQ,iP)} \propto (iQ)\left(\prod_{j=2}^{r-1} iP_{j}\right)(iP).
\end{align}
Since $T_{\tilde{G}}$ is a tree, it must also be the case that $\gamma^*_{\tilde{G}}(iQ,iP_{r-1}) = (iQ, ,iP_{2},\dots, iP_{r-1})$ and that 
\begin{align}
    iR_{(iQ,iP_{r-1})} \propto \begin{cases} (iQ)(iP_{2}) & \text{ if } r=3, \\
    (iQ)\left(\prod_{j=2}^{r-2} iP_{j}\right)(iP_{r-1}) &\text{ otherwise.} \end{cases}
\end{align}
In either case, since any pair of Pauli strings either commute or anti-commute, we have that
\begin{align}
    \left(iR_{(iQ,iP_{r-1})}\right)\left(iR_{(iQ,iP)}\right) = \alpha iP
\end{align}
for some $\alpha \in \{\pm 1, \pm i\}$. Thus, for all $iP \in i\mathbb{P}$, we have that $iP \in \langle i\mathbb{S}\rangle_\times$ (recall that $iQ$ is trivially in $\langle i\mathbb{S}\rangle_\times$). Furthermore, since for any $A,B \in  \langle i\mathbb{S}\rangle_\times$, we have that $AB \in  \langle i\mathbb{S}\rangle_\times$, it follows that $ \langle i\mathbb{P}\rangle_\times \subseteq  \langle i\mathbb{S}\rangle_\times$. Finally, since, for any $iA, iB \in i\mathcal{P}_{n}^*$ and any $J \subseteq [n]$, we have that $(iA_{|J})(iB_{|J}) = \alpha i (AB)_{|J}$ for some $\alpha \in \{\pm 1, \pm i\}$, it follows that  $\langle i\mathbb{P}_{|[n]\setminus\mathrm{supp}(i\mathbb{Q})}\rangle_\times \subseteq \langle i\mathbb{S}_{|[n]\setminus\mathrm{supp}(i\mathbb{Q})}\rangle_\times$, as required.

To prove (b), let us consider an arbitrary element $iR \in i\mathbb{S}$. By definition, we have that $iR$ is non-zero and either $iR = iQ$ or $iR \propto \textrm{ad}_{iQ} \circ \textrm{ad}_{iP_{2}} \circ \dots \textrm{ad}_{iP_{r-1}}(iP_{r})$ for some $iP_{2}, \dots, iP_{r} \in i\mathbb{P}$. In the first case, $\textrm{supp}(iR) = \textrm{supp}(iQ)$ and hence the fact that $\textrm{supp}(iR) \cap \textrm{supp}(i\mathbb{Q}) \neq \emptyset$ follows directly. For the second case, \Cref{lem:last_anti_commute} ensures that $[iR, iQ] \neq 0$, and thus must have overlapping support (otherwise they would commute).
    
We have thus shown that there exist sets $i\mathbb{Q}, i\mathbb{S} \subseteq \langle i\mathbb{P}\rangle_{\textrm{Lie}}$ for which $\langle i\mathbb{Q} \cup i\mathbb{S}\rangle_{[\cdot, \cdot]} = i\mathcal{P}_{n}^*$, completing the sufficiency direction of the proof. \\

    \textbf{Necessity:} 

    Suppose that $\langle i\mathbb{P}\rangle_{\textrm{Lie}} = \mathfrak{su}(2^n)$. By \Cref{prop:universal_pauli_commutator}, it must be the case that $\langle i\mathbb{P}\rangle_{[\cdot, \cdot]} \propto i\mathcal{P}_{n}^*$. Let us define $i\mathbb{Q}$ via
    \begin{align}
        i\mathbb{Q} := \{ iQ \in \langle i\mathbb{P}\rangle_{[\cdot, \cdot]} | \textrm{supp}(iQ) \subseteq \{1,2\}\}.
    \end{align}
    From the fact that $\langle i\mathbb{P}\rangle_{[\cdot, \cdot]} \propto i\mathcal{P}_{n}^*$, we have that 
    \begin{align}
        i\mathbb{Q} \propto \{ iP \otimes I_{3} \otimes \dots \otimes I_{n}| iP \in i\mathcal{P}_{2}^*\},
    \end{align} 
    from which we get that $\langle i\mathbb{Q}_{|\mathrm{supp}(i\mathbb{Q})}\rangle_{[\cdot,\cdot]}\propto i\mathcal{P}_2^*$ directly. This establishes \Cref{item:thm_1}.

    Let $iR \in i\mathcal{P}_{n-2}^*$. Since $I_{1} \otimes I_{2} \otimes iR \in i\mathcal{P}_{n}^*$ and since  $\langle i\mathbb{P}\rangle_{[\cdot, \cdot]} \propto i\mathcal{P}_{n}^*$, there must exist $iP_{1},\dots,iP_{r} \in i\mathbb{P}$ such that
    \begin{align}
        I_{1} \otimes I_{2} \otimes iR &\propto \textrm{ad}_{iP_1}\circ \cdots \circ \textrm{ad}_{iP_{r-1}}(iP_r) \\
        &= \prod_{j=1}^{r}iP_{j}.
    \end{align}
    It follows that 
    \begin{align}
        iR &= \left(\prod_{j=1}^{r}iP_{j}\right)_{|\{3,\dots,n\}} \\ 
        &\equiv \prod_{j=1}^r i(P_{j})_{|\{3,\dots,n\}}. 
    \end{align}
    Thus, $iR \in \langle i\mathbb{P}_{|\{3,\dots,n\}}\rangle_{\times}$. Since this holds for all $iR \in i\mathcal{P}_{n-2}^*$, we have that $i\mathcal{P}_{n-2}^* \subseteq \langle i\mathbb{P}_{|\{3,\dots,n\}}\rangle_{\times}$, establishing \Cref{item:thm_2}.
    
For the remaining condition, \Cref{item:thm_3}, we consider the contrapositive direction: if $G(i\mathbb{P})$ is {\em not} connected, then $\langle i\mathbb{P}\rangle_{\textrm{Lie}} \neq \mathfrak{su}(2^n)$. Let $i\mathbb{P}_{1}, i\mathbb{P}_{2} \subseteq i\mathcal{P}_{n}^{*}$ be non-empty sets such that $i\mathbb{P} = i\mathbb{P}_{1} \cup i\mathbb{P}_{2}$, $i\mathbb{P}_{1} \cap i\mathbb{P}_{2} = \emptyset$ and for all $iP \in i\mathbb{P}_{1}$ and all $iQ \in i\mathbb{P}_{2}$, $\{iP, iQ\}\notin E(i\mathbb{P})$. That is, if $V_{1}, V_{2} \subset V(i\mathbb{P})$ are the sets of vertices corresponding to $i\mathbb{P}_{1}$ and $i\mathbb{P}_{2}$ respectively, the subgraphs of $G(i\mathbb{P})$ corresponding to $V_{1}$ and $V_{2}$ are disconnected in $G(i\mathbb{P})$. Let $\mathfrak{a} := \langle i\mathbb{P}_{1}\rangle_{\textrm{Lie}}$ denote the Lie algebra associated to $i\mathbb{P}_{1}$. 

We proceed by showing that $\mathfrak{a}$ is a non-zero proper ideal of $\langle i\mathbb{P}\rangle_{\textrm{Lie}}$. By \Cref{prop:no_ideals}, we know that $\mathfrak{su}(2^n)$ is simple and thus has no non-zero proper ideals, and so demonstrating that $\langle i\mathbb{P}\rangle_{\textrm{Lie}}$ {\em does} have a non-zero proper ideal suffices to show that $\langle i\mathbb{P}\rangle_{\textrm{Lie}} \neq \mathfrak{su}(2^n)$.

By definition, $i\mathbb{P}_{1} \subset \mathfrak{a}$ and $i\mathbb{P}_{1} \neq \emptyset$, and so $\mathfrak{a}$ is non-zero. To see that $\mathfrak{a} \subsetneq \langle i\mathbb{P}\rangle_{\textrm{Lie}}$, consider $iQ \in i\mathbb{P}_{2}$. The claim is that $iQ \notin \mathfrak{a}$, which we show by the following reasoning. Suppose that $iQ \in \mathfrak{a}$, meaning that 
\begin{align}
iQ = \sum_{iH \in \langle i\mathbb{P}_{1}\rangle_{[\cdot, \cdot]}} \alpha_{iH} iH,
\end{align}
where the coefficients $\alpha_{iH}$ are real and at least one is non-zero, and where each $iH \propto iP$ for some $iP \in i\mathcal{P}_{n}^{*}$. Since the Pauli strings form a basis, and since $iQ \in i\mathbb{P}_{2} \subseteq i\mathcal{P}_{n}^{*}$ is in particular an element of that basis, is must also be the case that at most one (and hence precisely one) of the $\alpha_{iH}$ is non-zero, as otherwise this would contradict the linear independence of the Pauli string basis. Furthermore, since each $iH \in \langle i\mathbb{P}_{1}\rangle_{[\cdot, \cdot]}$ is proportional to a Pauli string, the single non-zero $\alpha_{iH}$ must be $\pm 1$, and so we either have that $iQ \in i\mathbb{P}_{1}$, which contradicts the assumptions on $i\mathbb{P}_{1}$ and $i\mathbb{P}_{2}$, or there is some $r\ge 2$ and $iP_{1},\dots, iP_{r} \in i\mathbb{P}_{1}$ such that 
\begin{align}
    iQ = \textrm{ad}_{iP_{1}} \circ \dots \circ \textrm{ad}_{iP_{r-1}}(iP_{r}).
\end{align}
By \Cref{lem:last_anti_commute}, it must be that $\textrm{ad}_{iP_{1}}(iQ) \neq 0$, however this would entail that $\{iP_{1}, iQ \}$ is an edge in $G(i\mathbb{P})$, which contradicts the assumptions on disconnectedness made earlier. We thus have that $iQ \notin \mathfrak{a}$, and so $\mathfrak{a}$ is proper.

It remains, then, to show that $\mathfrak{a}$ is an ideal of $\langle i\mathbb{P}\rangle_{\textrm{Lie}}$, that is, for any $iA \in \mathfrak{a}$ and any $iB \in \langle i\mathbb{P}\rangle_{\textrm{Lie}}$, $[iA, iB] \in \mathfrak{a}$. Let us write
\begin{align}
    iB = \sum_{iC \in \langle i\mathbb{P}\rangle_{[\cdot, \cdot]}} \alpha_{iC}iC
\end{align}
with the $\alpha_{iC}$ being real and the terms $iC$ either being elements of $i\mathbb{P}$ or equal to $\textrm{ad}_{iR_{1}} \circ \dots \circ \textrm{ad}_{iR_{r-1}}(iR_{r})$ for some $r \ge 2$ and some $iR_{1}, \dots, iR_{r} \in i\mathbb{P}$. Since the elements of $i\mathbb{P}_{1}$ commute with all the elements of $i\mathbb{P}_{2}$, if the sequence $iR_{1}, \dots, iR_{r} \in i\mathbb{P}$ contains elements of both $i\mathbb{P}_{1}$ and $i\mathbb{P}_{2}$, then $\textrm{ad}_{iR_{1}} \circ \dots \circ \textrm{ad}_{iR_{r-1}}(iR_{r}) = 0$, by the following reasoning. Let $1 \le k < r$ be the largest value such that $iR_{k} \in i\mathbb{P}_{j}$ and $iR_{k+1} \in i\mathbb{P}_{1-j}$ for some $j \in \{0,1\}$. In particular, this means that $iR_{k+1}, \dots, iR_{r} \in i\mathbb{P}_{1-j}$. Since either 
\begin{align}
    \textrm{ad}_{iR_{k+1}} \circ \dots \circ \textrm{ad}_{iR_{r-1}}(iR_{r}) = 0
\end{align}
or 
\begin{align}
    \textrm{ad}_{iR_{k+1}} \circ \dots \circ \textrm{ad}_{iR_{r-1}}(iR_{r}) = \prod_{l=k+1}^{r}iR_{l},
\end{align}
it follows that either 
\begin{align}
    \textrm{ad}_{iR_{k}} \circ \textrm{ad}_{iR_{k+1}} \circ \dots \circ \textrm{ad}_{iR_{r-1}}(iR_{r}) = \textrm{ad}_{iR_{k}}(0)
\end{align}
or  
\begin{align}
    \textrm{ad}_{iR_{k}} \circ \textrm{ad}_{iR_{k+1}} \circ \dots \circ \textrm{ad}_{iR_{r-1}}(iR_{r}) = \textrm{ad}_{iR_{k}}\left(\prod_{l=k+1}^{r}iR_{l}\right).
\end{align}
In either case, $\textrm{ad}_{iR_{k}} \circ \textrm{ad}_{iR_{k+1}} \circ \dots \circ \textrm{ad}_{iR_{r-1}}(iR_{r}) = 0$, with the latter case following from the fact that $iR_{k}$ commutes with each of the $iR_{k+1}, \dots, iR_{r}$. Accordingly, we can rewrite $iB$ as
\begin{align}
    iB = \underbrace{\sum_{iC \in \langle i\mathbb{P}_{1}\rangle_{[\cdot, \cdot]}} \alpha_{iC}iC}_{:= iB_{1}} + \underbrace{\sum_{iD \in \langle i\mathbb{P}_{2}\rangle_{[\cdot, \cdot]}} \alpha_{iD}iD}_{:=iB_{2}}.
\end{align}
With this notation, we have that $[iA, iB] = [iA, iB_{1}] + [iA, iB_{2}]$. Since $iB_{1} \in \mathfrak{a}$, the properties of the Lie bracket ensure that $[iA, iB_{1}] \in \mathfrak{a}$. We conclude the proof by demonstrating that $[iA, iB_{2}]$ is necessarily $0$. 

The $iD \in \langle i\mathbb{P}_{2}\rangle_{[\cdot, \cdot]}$ for which $\alpha_{iD}iD$ is non-zero in the expansion of $iB_{2}$ can be written as $iD = \prod_{l = 1}^{t^{(D)}} iQ_{l}^{(D)}$ for some $t^{(D)} \ge 1$ and some $iQ_{1}^{(D)}, \dots, iQ_{t}^{(D)} \in i\mathbb{P}_{2}$. Similarly, if we write 
\begin{align}
iA:= \sum_{iA' \in \langle i\mathbb{P}_{1}\rangle_{[\cdot, \cdot]}} \beta_{iA'} iA',
\end{align}
then the $iA'$ corresponding to non-zero terms can be written as $iA' = \prod_{l = 1}^{r{(A')}} iP_{l}^{(A')}$ for some $r^{(A')} \ge 1$ and some $iP_{1}^{(A')}, \dots, iP_{r}^{(A')} \in i\mathbb{P}_{1}$. We thus have that
\begin{align}
    [iA, iB_{2}] = \sum_{\substack{iA' \in \langle i\mathbb{P}_{1}\rangle_{[\cdot, \cdot]}, \\ iD \in \langle i\mathbb{P}_{2}\rangle_{[\cdot, \cdot]}}} \alpha_{iD}\beta_{iA'} \left[ \prod_{l = 1}^{r^{(A')}} iP_{l}^{(A')}, \prod_{l = 1}^{t^{(D)}} iQ_{l}^{(D)}\right].
\end{align}
Since all the elements of $i\mathbb{P}_{1}$ commute with all the elements of $i\mathbb{P}_{2}$, we have that all the terms $ \left[ \prod_{l = 1}^{r^{(A')}} iP_{l}^{r^{(A')}}, \prod_{l = 1}^{t^{(D)}} iQ_{l}^{(D)}\right]$ are $0$. This demonstrates that $\mathfrak{a}$ is indeed an ideal of $\langle i\mathbb{P}\rangle_{\textrm{Lie}}$ and completes the proof.
\end{proof}

\section{Proof of Lemma~\ref{lem:isolation}}\label{app:proof_lem:isolation}

For completeness, let us restate the lemma here:

\begin{Lemma}[Pauli isolation]\label{lem:isolation_app}
Let $i\mathbb{Q}\subseteq i\mathcal{P}_n^*$ and let $iH \in \mathfrak{su}(2^n)$ be given by
\begin{align*}
iH=\sum_{iP\in i\mathcal{P}_n^*}\alpha_{iP}\, iP,
\end{align*}
with all coefficients $\alpha_{iP}$ being real. Let 
\begin{align}
    i\mathbb{P}(iH):=\{iP\in i\mathcal{P}_n^*\mid \alpha_{iP}\neq 0\}
\end{align}
and consider the anti-commutation graph 
\begin{align}
    G:=G\!\big(i\mathbb{Q}\cup i\mathbb{P}(iH)\big).
\end{align}
For $iP\in i\mathcal{P}_n^*$, if $iP\in \mathcal{V}_{G}(i\mathbb{Q})$, then $iP\in \langle i\mathbb{Q}\cup\{iH\}\rangle_{\mathrm{Lie}}$.
\end{Lemma}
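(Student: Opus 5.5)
If $iP \in i\mathbb{Q}$ there is nothing to prove. Otherwise $iP \in \mathcal{V}_G(i\mathbb{Q})\setminus i\mathbb{Q} = N^{\mathrm{uniq}}_G(i\mathbb{Q})$, so $iP \in i\mathbb{P}(iH)\setminus i\mathbb{Q}$ and $\alpha_{iP}\neq 0$. Write $S := N_G(iP)\cap i\mathbb{Q}$. The plan is to build a linear "filter" out of adjoint maps of elements of $i\mathbb{Q}$. Applied to $iH$, it should keep exactly those Pauli terms whose neighbourhood in $i\mathbb{Q}$ equals $S$. Uniqueness then leaves only $iP$, up to terms lying in $i\mathbb{Q}$ itself, and those can be subtracted.

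\textbf{Step 1: a projector for each $iQ$.} Fix $iQ\in i\mathbb{Q}$ and consider $\Pi_{iQ} := -\,\mathrm{ad}_{iQ}\circ\mathrm{ad}_{iQ}$ acting on Pauli strings.
\begin{itemize}
\item If $iR$ commutes with $iQ$, then $\mathrm{ad}_{iQ}(iR)=0$.
\item If $iR$ anti-commutes with $iQ$, then $\mathrm{ad}_{iQ}(iR)=(iQ)(iR)$. This again anti-commutes with $iQ$, so $\mathrm{ad}_{iQ}\mathrm{ad}_{iQ}(iR)=(iQ)^2(iR)=-iR$.
\end{itemize}
Hence $\Pi_{iQ}$ is diagonal in the Pauli basis: it is the projector onto strings anti-commuting with $iQ$, and $\mathrm{id}-\Pi_{iQ}$ is the projector onto strings commuting with $iQ$. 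All these projectors are diagonal in the same basis, so they commute.

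\textbf{Step 2: they preserve the Lie closure.} Each map is a real linear combination of the identity and $\mathrm{ad}_{iQ}^2$ with $iQ\in i\mathbb{Q}$. The Lie closure $\mathfrak{g}:=\langle i\mathbb{Q}\cup\{iH\}\rangle_{\mathrm{Lie}}$ is a real subspace closed under brackets with its elements. So each map sends $\mathfrak{g}$ into $\mathfrak{g}$, and so does any composition of them.

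\textbf{Step 3: apply the filter to $iH$.} Define
\begin{align*}
\Phi := \prod_{iQ\in S}\Pi_{iQ}\prod_{iQ\in i\mathbb{Q}\setminus S}\left(\mathrm{id}-\Pi_{iQ}\right),
\end{align*}
where the product is a finite composition. Then
\begin{align*}
\Phi(iH)=\sum_{iR\in i\mathbb{P}(iH),\ N_G(iR)\cap i\mathbb{Q}=S}\alpha_{iR}\, iR \in \mathfrak{g}.
\end{align*}
If $i\mathbb{Q}$ is infinite, it can be replaced by a finite subset, since everything lives in a finite-dimensional space. The surviving terms split as follows.
\begin{itemize}
\item Terms $iR\notin i\mathbb{Q}$: these are vertices of $V\setminus i\mathbb{Q}$ with the same neighbourhood as $iP$. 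By the unique-neighbourhood property, the only such term is $iR=iP$.
\item Terms $iR\in i\mathbb{Q}\cap i\mathbb{P}(iH)$: these may survive, because the definition of uniqueness only compares against vertices outside $W$. However, they lie in $i\mathbb{Q}\subseteq\mathfrak{g}$.
\end{itemize}
Therefore $\alpha_{iP}\,iP = \Phi(iH)-\sum \alpha_{iR}\,iR \in \mathfrak{g}$, and dividing by $\alpha_{iP}\neq0$ gives $iP\in\mathfrak{g}$.

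The step I expect to need the most care is this bookkeeping in Step 3. Specifically:
\begin{itemize}
\item checking that the sign in $\mathrm{ad}_{iQ}^2(iR)=-iR$ holds with the paper's normalisation $\mathrm{ad}=\tfrac12[\cdot,\cdot]$;
\item handling terms of $iH$ that already belong to $i\mathbb{Q}$, which the uniqueness hypothesis does not exclude.
\end{itemize}
The nonemptiness of $S$ is not actually needed for the argument.
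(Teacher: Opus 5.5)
Your proof is correct. Its core mechanism is the same as the paper's: filter $iH$ through $\mathrm{ad}_{iQ}^2$, which acts on a Pauli string as $-1$ if it anti-commutes with $iQ$ and as $0$ otherwise. Your sign is right. The paper's sentence ``$(iQ)^2=iI\otimes\cdots\otimes I$'' is a typo for $-I$, although the display that follows it has the correct sign.

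The real difference is your complementary factors. The paper applies only $\mathrm{ad}_{iQ_1}^2\circ\cdots\circ\mathrm{ad}_{iQ_m}^2$, where $\{iQ_1,\dots,iQ_m\}=N_G(iP)\cap i\mathbb{Q}$. It then asserts that $\bigl(\bigcap_{j}N_G(iQ_j)\bigr)\cap i\mathbb{P}(iH)=\{iP\}$. That identity does not follow from the unique-neighborhood hypothesis, which only excludes other vertices outside $i\mathbb{Q}$ whose neighborhood in $i\mathbb{Q}$ is \emph{exactly} $S$. It still allows two kinds of extra surviving terms:
\begin{enumerate}
\item terms whose neighborhood strictly contains $S$;
\item terms of $iH$ that themselves lie in $i\mathbb{Q}$.
\end{enumerate}

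For example, take $i\mathbb{Q}=\{iXI,iIX\}$ and $iH=iZI+iZZ$. Both $iZI$ and $iZZ$ have unique neighborhoods, namely $\{iXI\}$ and $\{iXI,iIX\}$. Yet the paper's filter for $iZI$ gives $\mathrm{ad}_{iXI}^2(iH)=-(iZI+iZZ)$, which does not isolate $iZI$. Your factors $\mathrm{id}-\Pi_{iQ}=\mathrm{id}+\mathrm{ad}_{iQ}^2$ for $iQ\in i\mathbb{Q}\setminus S$ remove exactly the first kind of term. As you note, they preserve the Lie closure because $iQ$ is a generator and the closure is a real subspace. Your explicit subtraction of surviving elements of $i\mathbb{Q}\cap i\mathbb{P}(iH)$ handles the second kind.

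So your argument is a rigorous version of the paper's and closes a gap in it. The only price is that your filter involves all of $i\mathbb{Q}$, not just the neighbors of $iP$, and uses real linear combinations rather than pure nested commutators; both are harmless. One small remark: $i\mathbb{Q}\subseteq i\mathcal{P}_n^*$ is automatically finite, so your caveat about infinite $i\mathbb{Q}$ is unnecessary.
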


\begin{proof} The anti-commutation graph $G$ has a vertex set which we identify with the set of Pauli strings $i\mathbb{Q}\cup i\mathbb{P}(iH)$. Recalling the definition of unique neighborhood from \Cref{sec:background}, a Pauli string $iP \in i\mathbb{P}(iH)$ has a unique neighborhood in $i\mathbb{Q}$ if $N_{G}(iP) \cap i\mathbb{Q} \neq 0$ and for all $iP' \in i\mathbb{P}(iH)$ such that $iP' \neq iP$, $N_{G}(iP') \cap i\mathbb{Q} \neq N_{G}(iP) \cap i\mathbb{Q}$. Let $N_{G}^{\textrm{uniq}}(i\mathbb{Q}) \subseteq i\mathbb{P}(iH)$ denote the set of all such elements with a unique neighborhood in $i\mathbb{Q}$ and recall that 
\begin{align}
    \mathcal{V}_{G}(i\mathbb{Q}):= i\mathbb{Q} \cup N_{G}^{\textrm{uniq}}(i\mathbb{Q}).
\end{align}
Since $i\mathbb{Q} \subset \langle i\mathbb{Q}\cup\{iH\}\rangle_{\mathrm{Lie}}$, if $iP \in i\mathbb{Q}$ it is directly seen to be an element of $\langle i\mathbb{Q}\cup\{iH\}\rangle_{\mathrm{Lie}}$ also, so let us consider only those $iP$ in $N_{G}^{\textrm{uniq}}(i\mathbb{Q})$ forthwith.

Let $iP \in N_{G}^{\textrm{uniq}}(i\mathbb{Q})$. The proof proceeds by demonstrating that there exists a sequence of elements $iQ_{1}, \dots, iQ_{r} \in N_{G}(iP) \cap i\mathbb{Q}$ such that 
\begin{align}
    \textrm{ad}_{iQ_{1}} \circ \dots \circ \textrm{ad}_{iQ_{r}}(iH) = \pm\alpha_{iP}iP.
\end{align}
Since $\textrm{ad}_{iQ_{1}} \circ \dots \circ \textrm{ad}_{iQ_{r}}(iH) \in \langle i\mathbb{Q}\cup\{iH\}\rangle_{[\cdot, \cdot]}$ and $\alpha_{iP}$ is real and non-zero, it then follows that $iP = (\pm \alpha_{iP})^{-1}\textrm{ad}_{iQ_{1}} \circ \dots \circ \textrm{ad}_{iQ_{r}}(iH) \in \langle i\mathbb{Q}\cup\{iH\}\rangle_{\mathrm{Lie}}$, as required.

Consider $iQ \in i\mathbb{Q}$. Noting that the elements of $i\mathbb{P}(iH)$ that anti-commute with $iQ$ are precisely the elements of $N_{G}(iQ) \cap i\mathbb{P}(iH)$, we get that
\begin{align}
    \textrm{ad}_{iQ}(iH) = \sum_{iP' \in N_{G}(iQ) \cap i\mathbb{P}(iH)} \alpha_{iP'}iQiP'.
\end{align}
Since $(iQ)^2 = i I \otimes \dots \otimes I$, applying $\textrm{ad}_{iQ}$ a second time produces
\begin{align}
    \textrm{ad}_{iQ}\circ \textrm{ad}_{iQ}(iH) = -\sum_{iP' \in N_{G}(iQ) \cap i\mathbb{P}(iH)} \alpha_{iP'}iP'.
\end{align}
Repeating this process for $iR \in i\mathbb{Q}$ such that $iR \neq iQ$, we get that
\begin{align}
    \textrm{ad}_{iR}\circ \textrm{ad}_{iR} &\circ \textrm{ad}_{iQ}\circ \textrm{ad}_{iQ}(iH) 
    = \sum_{iP' \in N_{G}(iR)\cap N_{G}(iQ) \cap i\mathbb{P}(iH)} \alpha_{iP'}iP'.
\end{align}
By the assumption that $iP$ has a unique neighborhood in $i\mathbb{Q}$, we get that
\begin{align}
    \left(\bigcap_{iQ \in N_{G}(iP) \cap i\mathbb{Q}} N_{G}(iQ)\right) \cap i\mathbb{P}(iH) = \{iP\}.
\end{align}
Writing $N_{G}(iP) \cap i\mathbb{Q} = \{iQ_{1}, \dots, iQ_{m}\}$, the above reasoning demonstrates that
\begin{align}
    \textrm{ad}_{iQ_{1}} \circ &\textrm{ad}_{iQ_{1}} \circ \textrm{ad}_{iQ_{2}} \circ \textrm{ad}_{iQ_{2}} 
    \dots \circ \textrm{ad}_{iQ_{m}}\circ \textrm{ad}_{iQ_{m}}(iH) = (-1)^{m}\alpha_{iP}iP,
\end{align}
completing the proof.
\end{proof}

\section{Proof of Theorem~\ref{thm:suf_ham}}\label{app:proof_thm:suf_ham}
For completeness, let us restate the theorem here:
\begin{Theorem}[Hamiltonian universality]\label{thm:suf_ham_app}
   Let $i\mathbb{Q}\subseteq i\mathcal{P}_n^*$ and let $iH \in \mathfrak{su}(2^n)$ be given by
\begin{align*}
iH=\sum_{iP\in i\mathcal{P}_n^*}\alpha_{iP}\, iP,
\end{align*}
with all coefficients $\alpha_{iP}$ being real. Let 
\begin{align}
    i\mathbb{P}(iH):=\{iP\in i\mathcal{P}_n^*\mid \alpha_{iP}\neq 0\}
\end{align}
and consider the anti-commutation graph 
\begin{align}
    G:=G\!\big(i\mathbb{Q}\cup i\mathbb{P}(iH)\big).
\end{align}

If, for some $r \in \mathbb{N}$, $\langle\mathcal{V}_G^{(r)}(i\mathbb{Q})\rangle_{\textrm{Lie}}=\mathfrak{su}(2^n)$ then $\langle i\mathbb{Q}\cup \{iH\}\rangle_{\textrm{Lie}}=\mathfrak{su}(2^n)$. 

If $\mathcal{V}_G^{(r)}(i\mathbb{Q}) = i\mathbb{Q}\cup i\mathbb{P}(iH)$, then the reverse implication also holds: if $\langle i\mathbb{Q}\cup \{iH\}\rangle_{\textrm{Lie}}=\mathfrak{su}(2^n)$ then $\langle\mathcal{V}_G^{(r)}(i\mathbb{Q})\rangle_{\textrm{Lie}}=\mathfrak{su}(2^n)$.
\end{Theorem}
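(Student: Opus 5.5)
The plan is to prove the forward implication by induction on $r$, showing that $\mathcal{V}_G^{(r)}(i\mathbb{Q}) \subseteq \langle i\mathbb{Q}\cup\{iH\}\rangle_{\mathrm{Lie}}$ for every $r$. Once this inclusion holds, \Cref{prop:univ_sets_in_Lie_closure_univ_sets} finishes the argument. Apply it with $i\mathbb{H}' := \mathcal{V}_G^{(r)}(i\mathbb{Q})$, which is universal by hypothesis, and $i\mathbb{H} := i\mathbb{Q}\cup\{iH\}$; the inclusion then gives universality of $i\mathbb{Q}\cup\{iH\}$.

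The base case $r=0$ is trivial, and $r=1$ is exactly \Cref{lem:isolation}. For the inductive step, write $W_r := \mathcal{V}_G^{(r)}(i\mathbb{Q})$, which is a set of Pauli strings inside $i\mathbb{Q}\cup i\mathbb{P}(iH)$. I would apply \Cref{lem:isolation} with $i\mathbb{Q}$ replaced by $W_r$. The relevant anti-commutation graph is $G(W_r\cup i\mathbb{P}(iH))$. Since $W_r\subseteq i\mathbb{Q}\cup i\mathbb{P}(iH)$, this is an induced subgraph of $G$ obtained by deleting the vertices of $i\mathbb{Q}\setminus W_r$, which is empty because $i\mathbb{Q}\subseteq W_r$. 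So the graph is exactly $G$, and the lemma gives $\mathcal{V}_G(W_r)\subseteq\langle W_r\cup\{iH\}\rangle_{\mathrm{Lie}}$. The induction hypothesis gives $W_r\subseteq\langle i\mathbb{Q}\cup\{iH\}\rangle_{\mathrm{Lie}}$, and Lie closure is monotone and idempotent. Hence $\langle W_r\cup\{iH\}\rangle_{\mathrm{Lie}}\subseteq\langle i\mathbb{Q}\cup\{iH\}\rangle_{\mathrm{Lie}}$, which closes the induction.

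One subtlety needs care here. The lemma's proof only isolates vertices in $i\mathbb{P}(iH)\setminus W_r$ with a unique neighborhood. Uniqueness is required only against other elements of $i\mathbb{P}(iH)\setminus W_r$, not against elements already in $W_r$. This is consistent with the definition of $N^{\mathrm{uniq}}_G$, which compares against $V\setminus W$. In the computation, $\mathrm{ad}_{iQ}^2$ for $iQ\in W_r$ kills or sign-flips terms of $iH$ that lie in $W_r$. Those terms could survive the intersection of neighbourhoods alongside $iP$. I would handle this by first subtracting them: since $W_r\subseteq\langle i\mathbb{Q}\cup\{iH\}\rangle_{\mathrm{Lie}}$, the operator $iH' := iH-\sum_{iP'\in W_r\cap i\mathbb{P}(iH)}\alpha_{iP'}iP'$ lies in that Lie algebra. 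Running the lemma's argument on $iH'$ then leaves only terms outside $W_r$, where uniqueness applies. This is the step I expect to be the main obstacle, because the lemma as stated does not literally cover it. I would make the subtraction explicit.

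For the reverse implication, assume $\mathcal{V}_G^{(r)}(i\mathbb{Q}) = i\mathbb{Q}\cup i\mathbb{P}(iH)$. Since $iH$ is a real linear combination of elements of $i\mathbb{P}(iH)$, we have $i\mathbb{Q}\cup\{iH\}\subseteq \mathrm{span}_{\mathbb{R}}(\mathcal{V}_G^{(r)}(i\mathbb{Q}))\subseteq\langle\mathcal{V}_G^{(r)}(i\mathbb{Q})\rangle_{\mathrm{Lie}}$. Monotonicity of Lie closure then gives $\mathfrak{su}(2^n)=\langle i\mathbb{Q}\cup\{iH\}\rangle_{\mathrm{Lie}}\subseteq\langle\mathcal{V}_G^{(r)}(i\mathbb{Q})\rangle_{\mathrm{Lie}}\subseteq\mathfrak{su}(2^n)$. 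Equivalently, this is \Cref{prop:univ_sets_in_Lie_closure_univ_sets} with the roles of the two sets swapped.
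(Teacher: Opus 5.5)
Your proposal is correct and follows the paper's proof essentially step for step. The forward direction is the same induction: you apply \Cref{lem:isolation} with $\mathcal{V}_G^{(t)}(i\mathbb{Q})$ in place of $i\mathbb{Q}$ after observing that $G(\mathcal{V}_G^{(t)}(i\mathbb{Q})\cup i\mathbb{P}(iH))=G$. The converse is the same argument, $iH\in\mathrm{span}_{\mathbb{R}}\, i\mathbb{P}(iH)$ together with monotonicity of the Lie closure.

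Your subtraction of the terms of $iH$ lying in $W_r$ is a harmless precaution rather than a needed step. The lemma as stated has no disjointness hypothesis, and replacing $iH$ by $iH'$ leaves both $G$ and $\mathcal{V}_G(W_r)$ unchanged anyway. However, your claim that after subtraction ``uniqueness applies'' is incomplete if you re-run the lemma's internal computation. Exact uniqueness of the neighbourhood does not eliminate terms $iP''$ with $N_G(iP'')\cap W_r\supsetneq N_G(iP)\cap W_r$. These survive the maps $\mathrm{ad}_{iQ}\circ\mathrm{ad}_{iQ}$ over the neighbours $iQ$ of $iP$. They must also be removed with the projections $\mathrm{id}+\mathrm{ad}_{iQ}\circ\mathrm{ad}_{iQ}$ for each $iQ\in W_r\setminus N_G(iP)$, which is legitimate because the Lie closure is closed under real linear combinations.
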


\begin{proof} Suppose that, for some $r \in \mathbb{N}$, $\langle\mathcal{V}_G^{(r)}(i\mathbb{Q})\rangle_{\text{Lie}}=\mathfrak{su}(2^n)$. Since $\langle i\mathbb{Q}\cup \{iH\}\rangle_{\textrm{Lie}} \subseteq \mathfrak{su}(2^n)$, to demonstrate that $\langle i\mathbb{Q}\cup \{iH\}\rangle_{\textrm{Lie}}=\mathfrak{su}(2^n)$ it suffices to show that $\langle\mathcal{V}_G^{(r)}(i\mathbb{Q})\rangle_{\textrm{Lie}}  \subseteq \langle i\mathbb{Q}\cup \{iH\}\rangle_{\textrm{Lie}}$. We do so by showing that $\langle\mathcal{V}_G^{(r)}(i\mathbb{Q})\rangle_{\textrm{Lie}}  \subseteq \langle i\mathbb{Q}\cup \{iH\}\rangle_{\textrm{Lie}}$ in fact holds for all $r\in \mathbb{N}$, via an inductive argument.

By \Cref{lem:isolation}, we get that $\mathcal{V}_{G}(i\mathbb{Q}) \subset \langle i\mathbb{Q}\cup \{iH\}\rangle_{\textrm{Lie}}$, from which it follows that $\langle\mathcal{V}_G(i\mathbb{Q})\rangle_{\textrm{Lie}}  \subseteq \langle i\mathbb{Q}\cup \{iH\}\rangle_{\textrm{Lie}}$.

Now suppose that, for $t \in \mathbb{N}$, $\langle\mathcal{V}_G^{(t)}(i\mathbb{Q})\rangle_{\textrm{Lie}}  \subseteq \langle i\mathbb{Q}\cup \{iH\}\rangle_{\textrm{Lie}}$. Noting that $\mathcal{V}_G^{(t)}(i\mathbb{Q}) \subseteq i\mathbb{Q} \cup i\mathbb{P}(iH)$, it follows that $\mathcal{V}_G^{(t)}(i\mathbb{Q}) \cup i\mathbb{P}(iH)$ and hence also that the anti-commutation graph $G(\mathcal{V}_G^{(t)}(i\mathbb{Q}) \cup i\mathbb{P}(iH))$ is the same as the anti-commutation graph $G(i\mathbb{Q} \cup i\mathbb{P}(iH))$. Using \Cref{lem:isolation} again, we get that $\mathcal{V}_{G}(\mathcal{V}_G^{(t)}(i\mathbb{Q})) \subset \langle\mathcal{V}_G^{(t)}(i\mathbb{Q})\rangle_{\textrm{Lie}}$, which, by assumption, is contained in $\langle i\mathbb{Q}\cup \{iH\}\rangle_{\textrm{Lie}}$. Thus $\mathcal{V}_G^{(t+1)}(i\mathbb{Q}) \subset \langle i\mathbb{Q}\cup \{iH\}\rangle_{\textrm{Lie}}$, from which it follows that $\langle \mathcal{V}_G^{(t+1)}(i\mathbb{Q}) \rangle_{\textrm{Lie}} \subseteq \langle i\mathbb{Q}\cup \{iH\}\rangle_{\textrm{Lie}}$, completing the induction step and the first part of the proof.

Next, suppose that, for some $r\in \mathbb{N}$, $\mathcal{V}_G^{(r)}(i\mathbb{Q}) = i\mathbb{Q}\cup i\mathbb{P}(iH)$, and that $\langle i\mathbb{Q}\cup \{iH\}\rangle_{\textrm{Lie}}=\mathfrak{su}(2^n)$. By definition of $i\mathbb{P}(iH)$, $iH \in \textrm{span}_{\mathbb{R}}i\mathbb{P}(iH)$, meaning that $i\mathbb{Q} \cup \{iH\} \subset \langle i\mathbb{Q}\cup i\mathbb{P}(iH)\rangle_{\textrm{Lie}}$ and hence also that $\langle i\mathbb{Q} \cup \{iH\}\rangle_{\textrm{Lie}} \subset \langle i\mathbb{Q}\cup i\mathbb{P}(iH)\rangle_{\textrm{Lie}}$. Using the assumptions, we have that
\begin{align}
    \mathfrak{su}(2^n) &= \langle i\mathbb{Q}\cup \{iH\}\rangle_{\textrm{Lie}} \subseteq \langle i\mathbb{Q}\cup i\mathbb{P}(iH)\rangle_{\textrm{Lie}} = \langle \mathcal{V}_G^{(r)}(i\mathbb{Q})\rangle_{\textrm{Lie}},
\end{align}
and since $\langle \mathcal{V}_G^{(r)}(i\mathbb{Q})\rangle_{\textrm{Lie}} \subseteq \mathfrak{su}(2^{n})$ is guaranteed since $\mathcal{V}_G^{(r)}(i\mathbb{Q}) \subseteq i\mathcal{P}_{n}^*$, the result follows.
\end{proof}

\section{Proof of \Cref{cor:local_control}} \label{app:cor_local_control}

In this section, we prove \Cref{cor:local_control}, which we restate below for convenience. Beforehand, let us establish some further results used in the proof.

Let us define $i\mathbb{P}_{\text{s.q.}} \subseteq i\mathcal{P}_{n}^*$ the set of all weight $1$ Pauli strings with either a Pauli-$X$ or Pauli-$Z$ as the non-trivial factor, i.e.,
\begin{align}
i\mathbb{P}_{\text{s.q.}} := \left\{ iX_{1} \bigotimes_{j = 2}^n I_{j}, \  iZ_{1}  \bigotimes_{j =2}^2 I_{j}, \ iI_{1} \otimes X_{2}  \bigotimes_{j = 3}^n I_{j}, \ \dots, \ i \bigotimes_{j = 1}^{n-1} I_{j} \otimes X_{n}, \ i\bigotimes_{j = 1}^{n-1} I_{j} \otimes Z_{n}\right\}.
\end{align}
We have the following:
\begin{Lemma} \label{lem:uniq_nhbd_all_sq_control} If $iP, iQ \in i\mathcal{P}_{n}^*$ are such that $iP \neq iQ$, then 
\begin{align}
    \{iR \in i\mathbb{P}_{\text{s.q.}} | [iP, iR] \neq 0\} \neq \{iR \in i\mathbb{P}_{\text{s.q.}} | [iQ, iR] \neq 0\}.
\end{align}
\end{Lemma}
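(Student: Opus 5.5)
We prove the contrapositive-free direct statement by showing that the set of single-qubit strings in $i\mathbb{P}_{\text{s.q.}}$ anti-commuting with a Pauli string completely determines that string. The key point is that commutation with a single-qubit Pauli string only depends on one tensor factor. Concretely, for $iP = i\bigotimes_{j} p_j$ and $j\in[n]$, the string $iX_{\{j\}}$ anti-commutes with $iP$ if and only if $p_j$ anti-commutes with $X$, i.e. $p_j\in\{Y,Z\}$. Likewise, $iZ_{\{j\}}$ anti-commutes with $iP$ if and only if $p_j\in\{X,Y\}$. This follows because all factors $l\neq j$ of $iX_{\{j\}}$ or $iZ_{\{j\}}$ are identities and hence commute with $p_l$, so the overall sign is that of the single factor at position $j$.

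From this, we build the map $p\mapsto (\chi_X(p),\chi_Z(p))\in\{0,1\}^2$, where $\chi_X(p)=1$ iff $p$ anti-commutes with $X$, and similarly for $\chi_Z$. It takes the values
\begin{align*}
I\mapsto(0,0),\quad X\mapsto(0,1),\quad Z\mapsto(1,0),\quad Y\mapsto(1,1).
\end{align*}
This map is injective on $\{I,X,Y,Z\}$. Hence the set $\{iR\in i\mathbb{P}_{\text{s.q.}}\mid [iP,iR]\neq 0\}$ encodes, for every $j\in[n]$, the pair $(\chi_X(p_j),\chi_Z(p_j))$, and therefore determines each $p_j$ and thus $iP$ uniquely. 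This is essentially the standard symplectic (binary vector) representation of Pauli strings.

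To finish, suppose $iP\neq iQ$ as elements of $i\mathcal{P}_n^*$. Then there is some index $j$ with $p_j\neq q_j$. By injectivity, either $\chi_X(p_j)\neq\chi_X(q_j)$ or $\chi_Z(p_j)\neq\chi_Z(q_j)$. Accordingly, $iX_{\{j\}}$ or $iZ_{\{j\}}$ lies in exactly one of the two sets, so the sets differ.

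No step is a real obstacle; the only care needed is in the first step, checking that commutation of Pauli strings is decided factor-wise (the overall sign is the product of the factor-wise signs). This holds because $(\bigotimes a_j)(\bigotimes b_j)=\bigotimes a_jb_j$.
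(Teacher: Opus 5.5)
Your proof is correct and follows the same idea as the paper's: pick a site $k$ with $p_k\neq q_k$ and exhibit a single-qubit element supported on $k$ that anti-commutes with exactly one of $iP$, $iQ$. The local step differs. Your injectivity of $p\mapsto(\chi_X(p),\chi_Z(p))$ replaces the paper's case split on whether $p_k$ and $q_k$ commute, and it is slightly tighter: the paper's choice $r_k=p_k$ or $r_k\in\{X,Y,Z\}\setminus\{q_k\}$ can land on $Y$, which is not in $i\mathbb{P}_{\text{s.q.}}$, whereas you only ever use $iX_{\{j\}}$ and $iZ_{\{j\}}$.
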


\begin{proof} Let $iP, iQ \in i\mathcal{P}_{n}^*$ be such that $iP \neq iQ$. Writing $iP = i\bigotimes_{j=1}^n p_{j}$ and $iQ = i\bigotimes_{j=1}^n q_{j}$ with the $p_{j},q_{k} \in \{I,X,Y,Z\}$, it must be the case that there exists some $k \in \{1,\dots, n\}$ such that $p_{k} \neq q_{k}$. If $[p_{k}, q_{k}] \neq 0$, let $r_{k} = p_{k}$ and define $iR = i r_{k} \bigotimes_{j \in [n]\setminus \{k\}} I_{j} \in i\mathbb{P}_{\text{s.q.}}$. We thus have that $[iP, iR] = 0$ but $[iQ, iR] = 0$, which proves the result in this case. If $[p_{k}, q_{k}] = 0$, it must be that either $p_{k} = I$ or $q_{k} = I$ (they cannot both be $I$ since $p_{k} \neq q_{k}$). Without loss of generality, suppose that $p_{k} = I$, in which case, let $r_{k} \in \{X,Y,Z\} \setminus\{q_{k}\}$. We thus have that $[p_{k}, r_{k}] = 0$ and $[q_{k}, r_{k}] \neq 0$, so defining $iR$ in an analogous manner to previously also proves the result in this case as well.
\end{proof}
Furthermore, the following was shown in Ref.~\cite{minimallyPQC} (see also Ref.~\cite{bremner2004fungible}):
\begin{Proposition} \label{prop:even_weight} Let $iP \in i\mathcal{P}_{k}^*$, $k \ge 2$, be such that $iP$ has full support, i.e., $\textrm{supp}(iP) = [k]$. Then $i\mathbb{P}_{\text{s.q.}} \cup \{iP\}$ is a universal generating set for $\mathfrak{su}(2^k)$ if and only if $k$ is even.
\end{Proposition}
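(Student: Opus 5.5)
The plan is to treat the two directions separately, and first to reduce to a canonical form of $iP$. Write $iP=i\bigotimes_{j=1}^k p_j$ with every $p_j\neq I$. Conjugation by a product of single-qubit Clifford unitaries is a Lie algebra automorphism of $\mathfrak{su}(2^k)$. It maps $\textrm{span}_{\mathbb{R}}\, i\mathbb{P}_{\text{s.q.}}=\bigotimes_{j\in[k]}\mathfrak{su}(2)$ to itself, so it preserves the Lie closure of $i\mathbb{P}_{\text{s.q.}}$. Hence universality of $i\mathbb{P}_{\text{s.q.}}\cup\{iP\}$ depends only on the weight $k$, and we may take any full-support string we like. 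In practice we will not even need this reduction: $\textrm{ad}_{iX_{\{j\}}}$ and $\textrm{ad}_{iZ_{\{j\}}}$ change the letter at site $j$ among $\{X,Y,Z\}$. Therefore $\langle i\mathbb{P}_{\text{s.q.}}\cup\{iP\}\rangle_{[\cdot,\cdot]}$ contains, up to sign, \emph{every} full-support string on $k$ qubits, together with all weight-one strings.

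\emph{Necessity ($k$ odd $\Rightarrow$ not universal).} Let $J:=Y^{\otimes k}$ and define the proper subalgebra $\mathfrak{g}_J:=\{A\in\mathfrak{su}(2^k)\mid JA^{T}J=-A\}$. This is a symplectic or orthogonal subalgebra. It is closed under brackets by the standard computation $J[A,B]^TJ=-[JA^TJ,JB^TJ]$, using $J^2=1$. Since $Y\sigma^TY=-\sigma$ for each $\sigma\in\{X,Y,Z\}$, a Pauli string $iR$ of weight $w$ satisfies $J(iR)^TJ=(-1)^{w}iR$. Thus $iR\in\mathfrak{g}_J$ exactly when $w$ is odd. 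All generators have odd weight when $k$ is odd: the single-qubit strings have weight $1$ and $iP$ has weight $k$. Hence the Lie closure lies in $\mathfrak{g}_J$. This subalgebra excludes every even-weight string, for instance $iX_1X_2$, so it is proper.

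\emph{Sufficiency ($k$ even $\Rightarrow$ universal).} For $k=2$ I would verify directly that the commutators of $\{iX_1,iZ_1,iX_2,iZ_2\}$ with the full-support strings generated from $iP$ produce all $15$ elements of $i\mathcal{P}_2^*$; this is a finite check. For $k\ge4$ the aim is to produce a weight-two string and then invoke \Cref{thm:nec_suf_pauli}.

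The key observation concerns two strings that differ exactly on a set $S$ of sites where both letters are non-identity. They anticommute iff $|S|$ is odd, and in that case their product has support $S$ together with the sites where exactly one of them is the identity. Take two full-support strings differing at exactly three sites: their bracket is a weight-$3$ string. Then take a weight-$3$ string on $\{1,2,3\}$ and a full-support string that agrees with it at sites $1,2$ and differs at site $3$. The bracket has support $\{3,4,\dots,k\}$, of weight $k-2$, which is again even. Iterating this weight reduction (or using it directly when $k=4$) yields a weight-two string on some pair of sites, say $\{1,2\}$. Combined with the single-qubit strings on those two sites, this gives a set $i\mathbb{Q}$ that is universal on $\{1,2\}$ by the $k=2$ case, so condition \ref{item:thm_1} of \Cref{thm:nec_suf_pauli} holds. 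Condition \ref{item:thm_2} holds because the single-qubit $X,Z$ strings on the remaining sites are product universal. For condition \ref{item:thm_3}, every single-qubit string anticommutes with $iP$, so the anti-commutation graph is connected.

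The main obstacle I expect is making the weight-reduction step rigorous for general even $k$. In particular, I need to show that every intermediate string lies in $\langle\cdot\rangle_{[\cdot,\cdot]}$ rather than merely in $\langle\cdot\rangle_\times$, and to track where the parity of $k$ enters: for odd $k$ the same moves never leave odd weight, consistent with the necessity argument.
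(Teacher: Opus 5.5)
The paper gives no proof of this proposition; it imports it from Ref.~\cite{minimallyPQC} (see also \cite{bremner2004fungible}). Your argument is a correct, self-contained alternative.

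For necessity, the invariant-form argument with $J=Y^{\otimes k}$ is sound. It exhibits the obstruction as an explicit proper subalgebra, of symplectic type for odd $k$. Within the paper's own toolkit, the same conclusion follows from Lemma~\ref{lem:odd_support} together with Proposition~\ref{prop:universal_pauli_commutator}, since brackets of odd-weight strings never produce an even-weight string.

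For sufficiency, invoking Theorem~\ref{thm:nec_suf_pauli} is legitimate and non-circular. Its proof never uses this proposition; only Corollary~\ref{cor:local_control} does. Also, $k\ge 4$ satisfies its hypothesis $n\ge 3$. Your route makes the paper self-contained and shows exactly where parity enters; the citation buys only brevity.

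The obstacle you anticipate is not real. $\langle\cdot\rangle_{\mathrm{Lie}}$ is closed under brackets. A Pauli string lying in the real span of the signed Pauli strings of $\langle\cdot\rangle_{[\cdot,\cdot]}$ must be $\pm$ one of them, by linear independence. This is the same argument used in the necessity part of Theorem~\ref{thm:nec_suf_pauli}, so bracketing two non-generator strings is fine.

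You can also drop the iteration. Two full-support strings differing on $k-1$ sites anticommute precisely because $k-1$ is odd, and their bracket is a weight-$(k-1)$ string. Bracketing that with a full-support string that agrees with it on all but one of its support sites gives weight $1+k-(k-1)=2$.

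There are two small slips. First, it is false that every single-qubit generator anticommutes with $iP$: if $p_j=X$, then $iX_{\{j\}}$ commutes with it. Connectivity still holds, because for each $j$ at least one of $iX_{\{j\}},iZ_{\{j\}}$ anticommutes with $iP$, and the two anticommute with each other. Alternatively, use your Clifford reduction to $Y^{\otimes k}$. Second, $\mathrm{span}_{\mathbb{R}}\, i\mathbb{P}_{\text{s.q.}}$ misses the $iY_{\{j\}}$. It is the Lie closure that equals $\bigotimes_{j}\mathfrak{su}(2)$, and that is all your reduction needs.
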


We will also require the following:
\begin{Lemma} \label{lem:odd_support} Let $iP,iQ \in i\mathcal{P}_{n}^{*}$ be such that $[iP, iQ]\neq 0$ and $\textrm{supp}(iP)$ and $\textrm{supp}(iQ)$ are both odd. Then $\textrm{supp}([iP, iQ])$ is also odd.
\end{Lemma}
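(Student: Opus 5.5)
The plan is a direct parity count on the qubit positions, using the fact that $\textrm{supp}([iP,iQ])$ can be read off from the single-qubit factors. Since $iP$ and $iQ$ anti-commute, $[iP,iQ] = 2(iP)(iQ)$. The product $(iP)(iQ)$ is, up to a phase in $\{\pm 1,\pm i\}$, the Pauli string with factors $p_jq_j$, again up to phase. Moreover $p_jq_j\propto I$ exactly when $p_j=q_j$. Hence
\begin{align}
\textrm{supp}([iP,iQ]) = \{j\in[n] \mid p_j\neq q_j\}.
\end{align}

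Next I partition $[n]$ according to the pair $(p_j,q_j)$ and record the sizes of the relevant classes:
\begin{itemize}
\item $a$ is the number of $j$ with $p_j=q_j\neq I$;
\item $b$ is the number of $j$ with $p_j,q_j\neq I$ and $p_j\neq q_j$;
\item $c$ is the number of $j$ with $p_j\neq I=q_j$;
\item $d$ is the number of $j$ with $q_j\neq I=p_j$.
\end{itemize}
Then $|\textrm{supp}(iP)|=a+b+c$, $|\textrm{supp}(iQ)|=a+b+d$, and $|\textrm{supp}([iP,iQ])|=b+c+d$.

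The key observation is that anti-commutation is decided by $b$. Single-qubit factors anti-commute exactly at the positions counted by $b$, and commute everywhere else. Therefore $iP$ and $iQ$ anti-commute if and only if $b$ is odd, so the hypothesis $[iP,iQ]\neq 0$ gives $b\equiv 1 \pmod 2$.

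Finally, I combine the parities. Adding the two support sizes gives $2a+2b+c+d$, which is a sum of two odd numbers and hence even. This forces $c+d$ to be even. Consequently $|\textrm{supp}([iP,iQ])| = b+c+d \equiv b \equiv 1 \pmod 2$, which is what the lemma claims.

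There is no real obstacle here. The only point needing care is the claim that anti-commutation of Pauli strings is equivalent to an odd number of anti-commuting tensor factors. This follows by commuting the two tensor products factor by factor, with each anti-commuting position contributing a sign of $-1$.
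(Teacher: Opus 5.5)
Your proof is correct and follows essentially the same route as the paper's. Both arguments split $\textrm{supp}([iP,iQ])$ into two parts: the symmetric difference of the two supports (your $c+d$) and the overlap positions where the factors differ (your $b$). Anti-commutation makes the second count odd, and the fact that $|\textrm{supp}(iP)|+|\textrm{supp}(iQ)|$ is even makes the symmetric difference even, which gives an odd total.
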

\begin{proof} Suppose that $iP,iQ \in i\mathcal{P}_{n}^{*}$ be such that $[iP, iQ]\neq 0$ and $\textrm{supp}(iP)$ and $\textrm{supp}(iQ)$ are both odd. Writing $iP := i\bigotimes_{j \in [n]} p_{j}$ and $iQ =i \bigotimes_{j \in [n]}q_{j}$ where $p_{j} = I$ if and only if $j \notin \textrm{supp}(iP)$ and $q_{j} = I$ if and only if $j \notin \textrm{supp}(iQ)$ by definition. We can write
\begin{align}
    [iP,iQ] = \bigotimes_{j \notin \textrm{supp}(iP) \cup \textrm{supp}(iQ)} I_{j} \bigotimes_{j \in \textrm{supp}(iP) \setminus \textrm{supp}(iQ)} p_{j}  \bigotimes_{j \in \textrm{supp}(iQ)\setminus \textrm{supp}(iP)}q_{j} \bigotimes_{j \in \textrm{supp}(iP) \cap \textrm{supp}(iQ)} p_{j}q_{j}.
\end{align}
The support of $[iP, iQ]$ is then given by 
\begin{align}
\textrm{supp}([iP, iQ]) = \textrm{supp}(iP) \setminus \textrm{supp}(iQ) \cup \textrm{supp}(iQ)\setminus \textrm{supp}(iP) \cup \{j \in  \textrm{supp}(iP) \cap \textrm{supp}(iQ): p_{j} \neq q_{j}\}.
\end{align}
Since $[iP, iQ] \neq 0$ by assumption, it must be the case that $|\{j \in  \textrm{supp}(iP) \cap \textrm{supp}(iQ): p_{j} \neq q_{j}\}|$ is odd. Since 
\begin{align}
|\textrm{supp}(iP) \setminus \textrm{supp}(iQ) \cup \textrm{supp}(iQ)\setminus \textrm{supp}(iP)| = |\textrm{supp}(iP)| + |\textrm{supp}(iQ)| - 2|\textrm{supp}(iP) \cap \textrm{supp}(iQ)|,
\end{align}
which is necessarily even (or zero), it follows that $|\textrm{supp}([iP, iQ])|$ must be odd.
\end{proof}

Let us now turn to the corollary: 
\begin{Corollary} \label{cor:local_control_app}  Let $i\mathbb{H}_{\text{s.q.}}\subseteq \mathfrak{su}(2^{n})$ be such that $\langle i\mathbb{H}_{\text{s.q.}} \rangle_{\textrm{Lie}} = \bigotimes_{j \in [n]}\mathfrak{su}(2)$. Let $iH\in\mathfrak{su}(2^n)$ and define $i\mathbb{H} := i\mathbb{H}_{\text{s.q.}} \cup \{iH\}$. The set $i\mathbb{H}$ is a universal generating set for $\mathfrak{su}(2^{n})$ if and only if the following hold:
\begin{enumerate}[label=(\arabic*)]
    \item \label{item:cor_lc_1} there exists an element $iP \in i\mathbb{P}(iH)$ such that $|\textrm{supp}(iP)|$ is even,
    \item \label{item:cor_lc_3} the anti-commutation graph $G(i\mathbb{P}(iH)\cup i\mathbb{P}_{\text{s.q.}})$ is connected.
\end{enumerate} 
\end{Corollary}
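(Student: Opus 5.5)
First I reduce to Pauli strings. Since $\langle i\mathbb{H}_{\text{s.q.}}\rangle_{\textrm{Lie}} = \bigotimes_{j\in[n]}\mathfrak{su}(2) = \langle i\mathbb{P}_{\text{s.q.}}\rangle_{\textrm{Lie}}$, we have
\begin{align}
\langle i\mathbb{H}\rangle_{\textrm{Lie}} = \langle i\mathbb{P}_{\text{s.q.}}\cup\{iH\}\rangle_{\textrm{Lie}}.
\end{align}
This holds because each side's Lie closure contains the other's generators, so I may replace $i\mathbb{H}_{\text{s.q.}}$ by $i\mathbb{Q}:=i\mathbb{P}_{\text{s.q.}}$. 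I then apply \Cref{thm:suf_ham} with $G:=G(i\mathbb{P}_{\text{s.q.}}\cup i\mathbb{P}(iH))$.

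By \Cref{lem:uniq_nhbd_all_sq_control}, distinct Pauli strings have distinct sets of anti-commuting neighbors in $i\mathbb{P}_{\text{s.q.}}$. Any non-identity string anti-commutes with some element of $i\mathbb{P}_{\text{s.q.}}$, so its neighborhood in $i\mathbb{Q}$ is nonempty. Hence every $iP\in i\mathbb{P}(iH)\setminus i\mathbb{P}_{\text{s.q.}}$ has a unique neighborhood in $i\mathbb{Q}$, and $\mathcal{V}_G(i\mathbb{Q}) = i\mathbb{P}_{\text{s.q.}}\cup i\mathbb{P}(iH)$ already for $r=1$. Both directions of \Cref{thm:suf_ham} then apply, so $i\mathbb{H}$ is universal if and only if $i\mathbb{P}' := i\mathbb{P}_{\text{s.q.}}\cup i\mathbb{P}(iH)$ is universal. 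Universality of this set of Pauli strings I will decide using \Cref{thm:nec_suf_pauli} when $n\ge 3$; the case $n\le 2$ is handled by direct inspection or \Cref{prop:even_weight}.

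\textbf{Necessity.} Connectedness of $G(i\mathbb{P}')$ follows from \Cref{item:thm_3} of \Cref{thm:nec_suf_pauli}. If every element of $i\mathbb{P}(iH)$ has odd support, then every element of $i\mathbb{P}'$ has odd support, since single-qubit strings have support $1$. By \Cref{lem:odd_support} and induction over nested adjoints, every element of $\langle i\mathbb{P}'\rangle_{[\cdot,\cdot]}$ then has odd support. So, for instance, $iX\otimes X\otimes I\cdots$ is never generated, and \Cref{prop:universal_pauli_commutator} shows the set is not universal. This establishes condition (1).

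\textbf{Sufficiency.} Pick $iP\in i\mathbb{P}(iH)$ with even support $K$, $|K|=k\ge 2$. Take $i\mathbb{Q}' := \{iP\}\cup\{iX_{\{j\}},iZ_{\{j\}} : j\in K\}$. Restricted to $K$, this is a full-support string together with single-qubit control on an even number of qubits, so \Cref{prop:even_weight} gives $\langle i\mathbb{Q}'_{|K}\rangle_{[\cdot,\cdot]}\propto i\mathcal{P}_k^*$ (using \Cref{prop:universal_pauli_commutator}). This is condition 1 of \Cref{thm:nec_suf_pauli} when $k<n$. Condition 2 holds trivially because the restricted single-qubit strings on $[n]\setminus K$ product-generate $i\mathcal{P}_{n-k}$. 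Condition 3 is the hypothesis.

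The main obstacle is the case $k=n$, which \Cref{thm:nec_suf_pauli} excludes. There I would instead directly invoke \Cref{prop:even_weight} on $\{iP\}\cup i\mathbb{P}_{\text{s.q.}}\subseteq i\mathbb{P}'$ to obtain universality. Alternatively, \Cref{thm:smith2025optimally} allows $k=n$ with $i\mathbb{S}=\emptyset$. I also need to double-check the small-$n$ edge cases and that the passage through \Cref{thm:suf_ham}'s reverse direction is legitimate.
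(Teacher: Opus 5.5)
Your proposal is correct and follows essentially the same route as the paper: reduce to the Pauli set $i\mathbb{P}_{\text{s.q.}}\cup i\mathbb{P}(iH)$ via \Cref{lem:uniq_nhbd_all_sq_control} and \Cref{lem:isolation}, then apply \Cref{thm:nec_suf_pauli}, using \Cref{prop:even_weight} for sufficiency and \Cref{lem:odd_support} for the necessity of the even-support condition. Your direct necessity argument (no weight-two string is ever produced) and your separate handling of the full-support case $k=n$ (excluded by \Cref{thm:nec_suf_pauli}, and passed over in the paper's proof) are small improvements; note only that $n=1$ cannot be ``handled by direct inspection'', because there $i\mathbb{H}_{\text{s.q.}}$ alone already generates $\mathfrak{su}(2)$ while condition (1) can never hold, so the statement tacitly needs $n\ge 2$.
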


\begin{proof} Let us first show that $\langle i\mathbb{P}_{\text{s.q.}} \cup i\mathbb{P}(iH)\rangle_{\textrm{Lie}} = \langle i\mathbb{H} \rangle_{\textrm{Lie}}$. To show that $\langle i\mathbb{P}_{\text{s.q.}} \cup i\mathbb{P}(iH)\rangle_{\textrm{Lie}} \subseteq \langle i\mathbb{H} \rangle_{\textrm{Lie}}$, let us note that the assumption that $\langle i\mathbb{H}_{\textrm{s.q.}}\rangle_{\textrm{Lie}} = \bigotimes_{j=1}^{n} \mathfrak{su}(2)$ ensures that $i\mathbb{P}_{\text{s.q.}} \subseteq \langle i\mathbb{H}_{\textrm{s.q.}}\rangle_{\textrm{Lie}}$. Accordingly, we have that $i\mathbb{P}_{\text{s.q.}} \cup \{ iH\} \subseteq \langle i\mathbb{H}\rangle_{\textrm{Lie}}$. Let $G'$ denote the anti-commutation graph of $i\mathbb{P}_{\text{s.q.}} \cup i\mathbb{P}(iH)$. By \Cref{lem:uniq_nhbd_all_sq_control}, we have that every element of $i\mathbb{P}(iH)$ has a unique neighborhood in $i\mathbb{P}_{\text{s.q.}}$, meaning that $\mathcal{V}_{G'}(i\mathbb{P}_{\text{s.q.}}) = i\mathbb{P}_{\text{s.q.}} \cup i\mathbb{P}(iH)$. By \Cref{lem:isolation}, it follows that $i\mathbb{P}' := i\mathbb{P}_{\text{s.q.}} \cup i\mathbb{P}(iH) \subseteq \langle i\mathbb{H} \rangle_{\textrm{Lie}}$, from which $\langle i\mathbb{P}_{\text{s.q.}} \cup i\mathbb{P}(iH)\rangle_{\textrm{Lie}} \subseteq \langle i\mathbb{H} \rangle_{\textrm{Lie}}$ follows (recall that $\langle i\mathbb{H} \rangle_{\textrm{Lie}}$ is closed under taking Lie brackets and real linear combinations). 

To see that $\langle i\mathbb{H} \rangle_{\textrm{Lie}} \subseteq \langle i\mathbb{P}_{\text{s.q.}} \cup i\mathbb{P}(iH)\rangle_{\textrm{Lie}}$, it suffices to note that $\langle i\mathbb{P}_{\text{s.q.}}\rangle_{\textrm{Lie}} = \bigotimes_{j \in [n]}\mathfrak{su}(2)$, meaning that $i\mathbb{H}_{\textrm{s.q.}}\subset \langle i\mathbb{P}_{\text{s.q.}}\rangle_{\textrm{Lie}}$, and  that $iH\in \langle i\mathbb{P}(iH)\rangle_{\textrm{Lie}}$, allowing us to conclude that $i\mathbb{H}_{\text{s.q.}} \cup \{iH\} \subset \langle i\mathbb{P}_{\text{s.q.}} \cup i\mathbb{P}(iH)\rangle_{\textrm{Lie}}$. The fact that $\langle i\mathbb{H} \rangle_{\textrm{Lie}} \subseteq \langle i\mathbb{P}_{\text{s.q.}} \cup i\mathbb{P}(iH)\rangle_{\textrm{Lie}}$ then follows directly.

The above reasoning allows us to conclude that $\langle i\mathbb{H} \rangle_{\textrm{Lie}} = \mathfrak{su}(2^n)$ if and only if $\langle i\mathbb{P}_{\text{s.q.}} \cup i\mathbb{P}(iH)\rangle_{\textrm{Lie}} \subseteq \mathfrak{su}(2^n)$. The rest of the proof establishes that $\langle i\mathbb{P}_{\text{s.q.}} \cup i\mathbb{P}(iH)\rangle_{\textrm{Lie}} \subseteq \mathfrak{su}(2^n)$ if and only if 
\begin{enumerate}[label=(\arabic*)]
    \item \label{item:cor_lc_1_app} there exists an element $iP \in i\mathbb{P}(iH)$ such that $|\textrm{supp}(iP)|$ is even,
    \item \label{item:cor_lc_3_app} the anti-commutation graph $G(i\mathbb{P}(iH) \cup i\mathbb{P}_{\text{s.q.}})$ is connected,
\end{enumerate}
namely by appealing to \Cref{thm:nec_suf_pauli}. Since \Cref{item:cor_lc_3_app} is identical to \Cref{item:thm_3} of \Cref{thm:nec_suf_pauli}, we need only consider how \Cref{item:cor_lc_1_app} relates to \Cref{item:thm_1,item:thm_2} of the aforementioned theorem. Note that, since $i\mathbb{P}'$ contains $i\mathbb{P}_{\text{s.q.}}$, for any choice of subset $i\mathbb{Q} \subseteq i\mathbb{P}'$ for which $\textrm{supp}(i\mathbb{Q}) \subset [n]$, $i\mathbb{P}'_{|[n]\subseteq \textrm{supp}(i\mathbb{Q})}$ is product universal on $[n]\subseteq \textrm{supp}(i\mathbb{Q})$. Consequently, we in fact need only consider how \Cref{item:cor_lc_1_app} relates to \Cref{item:thm_1} of \Cref{thm:nec_suf_pauli}. By \Cref{prop:even_weight}, we know that, if $iP \in i\mathbb{P}(iH)$ is such that $|\textrm{supp}(iP)|$ is even, then $i\mathbb{Q} := \{iP\}\cup \{iQ \in i\mathbb{P}_{\text{s.q.}} : \textrm{supp}(iQ) \in \textrm{supp}(iP)\}$ is universal for $\mathfrak{su}(2^k)$ with $k = \textrm{supp}(i\mathbb{Q})$. This shows that if \Cref{item:cor_lc_1_app} holds, then \Cref{item:thm_1} of \Cref{thm:nec_suf_pauli} holds.

To complete the proof, we need to show the other implication also holds, which we do via the contrapositive, i.e., if \Cref{item:cor_lc_1_app} doesn't hold, then \Cref{item:thm_1} of \Cref{thm:nec_suf_pauli} also doesn't hold. Suppose that there does not exist an element of $i\mathbb{P}(iH)$ with even support. From \Cref{lem:odd_support}, we have that $\langle i\mathbb{P}'\rangle_{[\cdot, \cdot]}$ contains only elements of odd support and hence any choice of $i\mathbb{Q} \subseteq \langle i\mathbb{P}'\rangle_{[\cdot, \cdot]}$ does also. Fix such a choice of $i\mathbb{Q}$. For $i\mathbb{Q}$ to generate $\mathfrak{su}(2^k)$ for $k = |\textrm{supp}(i\mathbb{Q})|$, it must be the case that $\langle i\mathbb{Q}\rangle_{\textrm{Lie}}$ contains any universal generating set $i\mathbb{R}$ (recall \Cref{prop:univ_sets_in_Lie_closure_univ_sets}). If, in addition, $i\mathbb{R}\subseteq i\mathcal{P}_{n}^{*}$, then this is equivalent to requiring that $i\mathbb{R} \subseteq \langle i\mathbb{Q}\rangle_{[\cdot, \cdot]}$. Accordingly, by demonstrating that there exists a universal generating set $i\mathbb{R} \subseteq i\mathcal{P}_{n}^{*}$ such that $i\mathbb{R} \nsubseteq \langle i\mathbb{Q}\rangle_{[\cdot, \cdot]}$, we are finished.

To do so, let us order $\textrm{supp}(i\mathbb{Q})$ as $\{l_{1},...,l_{k}\}$ and let $iZ_{l_{a},l_{b}}$ be the Pauli string with Pauli-$Z$ operators at $l_{a}$ and $l_{b}$ and identities elsewhere. By taking 
\begin{align}
i\mathbb{R} := \{iQ \in i\mathbb{P}_{\text{s.q.}}: \textrm{supp}(iQ) \in \textrm{supp}(i\mathbb{Q})\} \cup \{iZ_{l_{j},l_{j+1}}: j \in [k-1]\}
\end{align}
By, e.g., Theorem $1$ in \cite{minimallyPQC}, $i\mathbb{R}$ is universal on $\textrm{supp}(i\mathbb{Q})$, however, by \Cref{lem:odd_support}, for all $j \in [k-1]$, $iZ_{l_{j},l_{j+1}} \notin  \langle i\mathbb{Q}\rangle_{[\cdot, \cdot]}$. We thus have that $i\mathbb{R} \nsubseteq \langle i\mathbb{Q}\rangle_{[\cdot, \cdot]}$ and hence that $i\mathbb{Q}$ itself cannot be universal. This shows that, in the case we are considering here, \Cref{item:thm_1} of \Cref{thm:nec_suf_pauli} holds if and only if $i\mathbb{P}(iH)$ contains an element with even support, completing the proof.
\end{proof}

\section{Proof of \Cref{cor:heisenberg}} \label{app:cor_heisenberg}
In this section, we prove \Cref{cor:heisenberg}, which we restate here for convenience.

\begin{Corollary}\label{cor:heisenberg_app}
    Let
    \begin{align*}
        iH=i\sum_{j=1}^{n-1} J_xX_jX_{j+1}+J_yY_jY_{j+1}+J_zZ_jZ_{j+1}
    \end{align*}
    for $J_x,J_y,J_z\in\mathbb{R}^*$ and let $i\mathbb{H}^{(k)}_{\text{s.q.}}\subseteq \mathfrak{su}(2^{n})$ be such that $\langle i\mathbb{H}^{(k)}_{\text{s.q.}} \rangle_{\textrm{Lie}} =\bigotimes_{j \in [k]}\mathfrak{su}(2)$ for some $k \in [n]$. Then, $\langle i\mathbb{H}^{(k)}_{\text{s.q.}} \cup \{iH\}\rangle_{\mathrm{Lie}}=\mathfrak{su}(2^n)$ if and only if $k \ge 2$.
\end{Corollary}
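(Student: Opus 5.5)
For the direction \emph{$k\ge 2$ implies universality}, I would first replace $i\mathbb{H}^{(k)}_{\text{s.q.}}$ by $i\mathbb{P}^{(k)}_{\text{s.q.}}$. This is legitimate because both generate $\bigotimes_{j\in[k]}\mathfrak{su}(2)$, so the Lie closures of $i\mathbb{H}^{(k)}_{\text{s.q.}}\cup\{iH\}$ and $i\mathbb{P}^{(k)}_{\text{s.q.}}\cup\{iH\}$ coincide. It also suffices to treat $k=2$, since more local control only enlarges the closure. Set $i\mathbb{Q}:=\{iX_1,iZ_1,iX_2,iZ_2\}$ and let $G:=G(i\mathbb{Q}\cup i\mathbb{P}(iH))$, where $i\mathbb{P}(iH)=\{iX_jX_{j+1},iY_jY_{j+1},iZ_jZ_{j+1}\}_{j\in[n-1]}$.

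The plan is to run the iterated unique neighbor expansion of \Cref{thm:suf_ham} down the chain. In the first step, the bond-$(1,2)$ terms have the following neighborhoods in $i\mathbb{Q}$:
\[
X_1X_2\mapsto\{Z_1,Z_2\},\qquad
Y_1Y_2\mapsto\{X_1,Z_1,X_2,Z_2\},\qquad
Z_1Z_2\mapsto\{X_1,X_2\}.
\]
The bond-$(2,3)$ terms have neighborhoods $\{Z_2\}$, $\{X_2,Z_2\}$ and $\{X_2\}$, and all further bonds have empty neighborhoods. These six sets are pairwise distinct, so $\mathcal{V}_G(i\mathbb{Q})$ contains all terms on bonds $(1,2)$ and $(2,3)$. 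In the next step the enlarged set contains $X_2X_3$, $Y_2Y_3$ and $Z_2Z_3$. Bond $(3,4)$ anticommutes with these in the patterns $\{Y_2Y_3,Z_2Z_3\}$, $\{X_2X_3,Z_2Z_3\}$ and $\{X_2X_3,Y_2Y_3\}$, which are distinct from one another. Bonds beyond $(3,4)$ are untouched. By induction, $\mathcal{V}^{(n-1)}_G(i\mathbb{Q})=i\mathbb{Q}\cup i\mathbb{P}(iH)$.

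It then remains to show that $i\mathbb{Q}\cup i\mathbb{P}(iH)$ is universal, via \Cref{thm:nec_suf_pauli} when $n\ge3$ (for $n=2$ use \Cref{prop:even_weight}, since $\{X_1,Z_1,X_2,Z_2,X_1X_2\}$ is universal). For item~1, take the subset of $i\mathbb{Q}\cup i\mathbb{P}(iH)$ given by $\{X_1,Z_1,X_2,Z_2,X_1X_2\}$; this is universal on two qubits by \Cref{prop:even_weight}. For item~2, the products of $Z_jZ_{j+1}$ and $X_jX_{j+1}$ restricted to qubits $3,\dots,n$ give $Z_3,X_3$ from bond $(2,3)$ and then every $Z_jZ_{j+1}$, $X_jX_{j+1}$. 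These generate all Pauli strings on qubits $3,\dots,n$ by telescoping. For item~3, the graph is connected by the chain of anticommutations exhibited above.

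For the converse, $k\le1$, I expect the main obstacle. The natural plan is to exhibit a nontrivial element commuting with every generator. For $k=0$ this is easy: $H$ commutes with the total spin components, for instance with $\prod_j X_j$ (parity symmetry), so the closure has a nontrivial commutant. For $k=1$, the symmetry argument of \cite{zeier2011symmetry} suggests using a reflection-type symmetry, but with local control on qubit $1$ there is no obvious commuting operator. I would instead try a dimension argument, namely that the closure lies in a proper subalgebra. The first attempt would be to map the system via Jordan--Wigner to free fermions; this works when $J_x=J_y$ but not generally. Failing that, I would work over $i\mathbb{P}^{(1)}_{\text{s.q.}}\cup i\mathbb{P}(iH)$ and show that its Lie closure misses some string. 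I am not confident this holds for generic XYZ couplings, and I would have to verify the claim carefully before relying on it.
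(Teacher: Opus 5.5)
Your $k\ge 2$ direction is the paper's argument in all essentials. You replace the local controls by Pauli strings, use iterated unique-neighbour expansion to isolate every bond term, and then check the three conditions of \Cref{thm:nec_suf_pauli}. Seeding with $\{iX_1,iZ_1,iX_2,iZ_2\}$ and using $iX_1X_2$ instead of $iZ_1Z_2$ makes no difference. One caveat when you invoke \Cref{lem:isolation}: the neighbourhood $\{iZ_1,iZ_2\}$ of $iX_1X_2$ is strictly contained in that of $iY_1Y_2$. The product of double adjoints in the paper's proof of that lemma therefore returns $J_x\,iX_1X_2+J_y\,iY_1Y_2$, not $iX_1X_2$ alone. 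You must subtract the separately isolated $J_y\,iY_1Y_2$, which is legitimate because the Lie closure is a vector space.

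The genuine gap is the converse, and it cannot be filled. Your suspicion is right: $k=1$ already gives universality.

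\emph{Isolating the bond terms.} On Pauli strings, $-\textrm{ad}_{iA}^2$ is the projector onto the strings anticommuting with $iA$. Hence $\textrm{ad}^2_{iX_1}\circ\textrm{ad}^2_{iZ_1}(iH)=J_y\,iY_1Y_2$, $-\textrm{ad}^2_{iZ_1}(iH)-J_y\,iY_1Y_2=J_x\,iX_1X_2$, and $-\textrm{ad}^2_{iX_1}(iH)-J_y\,iY_1Y_2=J_z\,iZ_1Z_2$. Next, $\textrm{ad}^2_{iY_1Y_2}\circ\textrm{ad}^2_{iZ_1Z_2}(iH)=J_x\,iX_2X_3$, and continuing bond by bond isolates all of $i\mathbb{P}(iH)$. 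This is exactly the expansion $\mathcal{V}^{(n-1)}_G(\{iX_1,iZ_1\})$ that the paper itself computes.

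\emph{Spreading local control.} We have $\textrm{ad}_{iX_1X_2}\circ\textrm{ad}_{iZ_1}(iY_1Y_2)\propto iZ_2$ and $\textrm{ad}_{iZ_1Z_2}\circ\textrm{ad}_{iX_1}(iY_1Y_2)\propto iX_2$, so full single-qubit control passes to qubit $2$. The same two commutators, shifted along the chain, pass it to every qubit. The $k=1$ closure therefore contains $i\mathbb{P}_{\text{s.q.}}\cup\{iH\}$, and so equals $\mathfrak{su}(2^n)$ by your own $k\ge 2$ argument (or by \Cref{cor:local_control}).

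\emph{Where the paper's converse fails.} It claims that $iY_jY_{j+1}$ lies in the closure of $\{iX_1,iZ_1\}\cup\{iX_jX_{j+1},iZ_jZ_{j+1}\}_{j}$. The displayed expression $\textrm{ad}_{iZ_1}\circ\textrm{ad}_{iZ_1Z_2}\circ\textrm{ad}_{iZ_1}(iX_1X_2)$ is zero, because $iY_1X_2$ commutes with $iZ_1Z_2$. More fundamentally, that $2n$-element set has a path as its anticommutation graph. It generates the free-fermion algebra $\mathfrak{so}(2n+1)$: Majorana monomials of degree at most two, for a Jordan--Wigner map built on $Y$-strings. That algebra does not contain the quartic $iY_jY_{j+1}$. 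So the reduction to $2n$ generators collapses, and with it the appeal to the $2n+1$ lower bound. The correct statement is universality for every $k\in[n]$.

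A few smaller slips in your sketch:
\begin{itemize}
\item $k=0$ lies outside the hypothesis $k\in[n]$.
\item With pairwise distinct couplings, the XYZ chain conserves parities such as $\prod_jX_j$ but no total spin component.
\item Jordan--Wigner yields free fermions when one coupling vanishes, not when $J_x=J_y$.
\end{itemize}
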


\begin{proof}
    The proof proceeds by showing (i) that $\langle i\mathbb{H}^{(k)}_{\text{s.q.}} \cup \{iH\}\rangle_{\mathrm{Lie}}  = \mathfrak{su}(2^n)$ if and only if $\langle i\mathbb{P}_{\text{s.q.}}^{(k)} \cup i\mathbb{P}(iH)\rangle_{\textrm{Lie}}  = \mathfrak{su}(2^n)$, and (ii) that $\langle i\mathbb{P}_{\text{s.q.}}^{(k)} \cup i\mathbb{P}(iH)\rangle_{\textrm{Lie}} = \mathfrak{su}(2^n)$ if and only if $k \ge 2$.

    For (i), the proof proceeds similarly to the first part of the proof of \Cref{cor:local_control_app}. Since every element of $i\mathbb{H}_{\textrm{s.q.}}^{(k)}$ is a linear combination of elements of $i\mathbb{P}_{\textrm{s.q.}}^{(k)}$, we have that $\langle i\mathbb{H}^{(k)}_{\text{s.q.}} \cup \{iH\}\rangle_{\mathrm{Lie}} \subseteq \langle i\mathbb{P}^{(k)}_{\text{s.q.}} \cup \{iH\}\rangle_{\mathrm{Lie}}$. By the assumption that $\langle i\mathbb{H}^{(k)}_{\text{s.q.}} \rangle_{\textrm{Lie}} =\bigotimes_{j \in [k]}\mathfrak{su}(2)$, we also have that $i\mathbb{P}_{\textrm{s.q.}}^{(k)} \subseteq \langle i\mathbb{H}^{(k)}_{\text{s.q.}} \rangle_{\textrm{Lie}}$, meaning that $\langle i\mathbb{P}^{(k)}_{\text{s.q.}} \cup \{iH\}\rangle_{\mathrm{Lie}} \subseteq \langle i\mathbb{H}^{(k)}_{\text{s.q.}} \cup \{iH\}\rangle_{\mathrm{Lie}}$ also, and thus that the two algebras are equal.

    Next, letting $iX_{1}$ denote the unit support Pauli string with a Pauli-$X$ in the first tensor factor and similarly for $iZ_{1}$, we get that $iX_{1},iZ_{1} \in i\mathbb{P}^{(k)}_{\text{s.q.}}$ for all $k \in [n]$. Let $G$ be anti-commutation graph $G:=G(\{iX_1,iZ_1\}\cup i\mathbb{P}(iH))$. We have that 
    \begin{align}
        \mathcal{V}_G^{(1)}(\{iX_1,iZ_1\})=\{iX_1,iZ_1,X_1X_2,Y_1Y_2,Z_1Z_2\}
    \end{align}
    and for each $l \in \{2,...,n-1\}$, 
    \begin{align}
         \mathcal{V}_G^{(l)}(\{iX_1,iZ_1\})&= \mathcal{V}_G^{(l-1)}(\{iX_1,iZ_1\})\cup \{iX_lX_{l+1},iY_lY_{l+1},iZ_lZ_{l+1}\}.
    \end{align}
    Accordingly, we have that $\mathcal{V}_G^{(n-1)}(\{iX_1,iZ_1\}) = \{iX_1,iZ_1\}\cup i\mathbb{P}(iH)$ and hence also that $\mathcal{V}_G^{(n-1)}(i\mathbb{P}_{\textrm{s.q.}}^{(k)}) = i\mathbb{P}_{\textrm{s.q.}}^{(k)}\cup i\mathbb{P}(iH)$ for all $k \in [n]$. By \Cref{thm:suf_ham}, we thus have that $i\mathbb{H}^{(k)}_{\text{s.q.}} \cup \{iH\}$ is universal if and only if $i\mathbb{P}_{\textrm{s.q.}}^{(k)}\cup i\mathbb{P}(iH)$ is universal.

    Since $\{iX_1,iZ_1\}\subseteq \langle i\mathbb{H}^{(k)}_{\textrm{s.q.}}\rangle_{\textrm{Lie}}$ for any $k\geq 1$ and $\mathcal{V}_G^{n-1}(\{iX_1,iZ_1\})=\{iX_1,iZ_1\}\cup i\mathbb{P}(iH)$, \Cref{thm:suf_ham} implies that $\langle i\mathbb{H}^{(k)}_{\text{s.q.}} \cup \{iH\}\rangle_{\mathrm{Lie}}=\mathfrak{su}(2^n)$ if and only if $\langle i\mathbb{H}^{(k)}_{\text{s.q.}} \cup i\mathbb{P}(iH)\rangle_{\mathrm{Lie}}=\mathfrak{su}(2^n)$. That is, we can rely on \Cref{thm:nec_suf_pauli} to prove universality.

    For (ii), we show that if $k\geq 2$, then $\langle i\mathbb{P}_{\text{s.q.}}^{(k)} \cup i\mathbb{P}(iH)\rangle_{\textrm{Lie}} = \mathfrak{su}(2^n)$, while if $k = 1$, $\langle i\mathbb{P}_{\text{s.q.}}^{(k)} \cup i\mathbb{P}(iH)\rangle_{\textrm{Lie}} \neq \mathfrak{su}(2^n)$. For the latter, first note that 
    \begin{align}
        \langle i\mathbb{H}^{(1)}_{\text{s.q.}} \cup i\mathbb{P}(iH)\rangle_{\mathrm{Lie}}\subseteq \langle \{iX_1, iZ_1\} \cup \{iX_jX_{j+1},iY_jY_{j+1},iZ_jZ_{j+1}\}_{j\in [n-1]} \rangle_{\mathrm{Lie}}.
    \end{align}
    That is, it is sufficient to show that 
    \begin{align}
        \langle \{iX_1, iZ_1\} \cup \{iX_jX_{j+1},iY_jY_{j+1},iZ_jZ_{j+1}\}_{j\in [n-1]} \rangle_{[\cdot,\cdot]}\neq i\mathcal{P}_n.
    \end{align}

    Next we note that 
    \begin{align}
        \langle \{iX_1, iZ_1\} \cup \{iX_jX_{j+1},iY_jY_{j+1},iZ_jZ_{j+1}\}_{j\in [n-1]} \rangle_{[\cdot,\cdot]}\subseteq \langle \{iX_1, iZ_1\} \cup \{iX_jX_{j+1},iZ_jZ_{j+1}\}_{j\in [n-1]} \rangle_{[\cdot,\cdot]}
    \end{align}
    by the following correspondence:
    \begin{equation}
    \begin{gathered}
        iY_1Y_2 = ad_{iZ_1}\circ ad_{iZ_1Z_2}\circ ad_{iZ_1}(iX_1X_2)\\
         iY_jY_{j+1} = ad_{iZ_{j-1}Z_j}\circ ad_{iX_{j-1}X_j}\circ ad_{iZ_jZ_{j+1}}\circ ad_{iY_{j-1}Y_j}(iX_jX_{j+1}), \quad j \in \{2,...,n-1\}.
    \end{gathered}
    \end{equation}
    Now we find that
    \begin{align}
        |\{iX_1, iZ_1\} \cup \{iX_jX_{j+1},iZ_jZ_{j+1}\}_{j\in [n-1]}| = 2n.
    \end{align}
    However, in Ref.~\cite{smith2025optimally} it was shown that any set of Pauli operators needs to contain at least $2n+1$ elements to be universal. That is, we have shown 
    \begin{align}
        \langle i\mathbb{H}^{(1)}_{\text{s.q.}} \cup i\mathbb{P}(iH)\rangle_{\mathrm{Lie}}\subseteq \langle \{iX_1, iZ_1\} \cup \{iX_jX_{j+1},iZ_jZ_{j+1}\}_{j\in [n-1]}\rangle_{\mathrm{Lie}}\neq \mathfrak{su}(2^n).
    \end{align}
    
    To show that if $k\geq 2$, then $\langle i\mathbb{P}_{\text{s.q.}}^{(k)} \cup i\mathbb{P}(iH)\rangle_{\textrm{Lie}} = \mathfrak{su}(2^n)$, it suffices to consider the case where $k = 2$, since for all $k \ge 2$, $\langle i\mathbb{P}_{\text{s.q.}}^{(2)} \cup i\mathbb{P}(iH)\rangle_{\textrm{Lie}} \subseteq \langle i\mathbb{P}_{\text{s.q.}}^{(k)} \cup i\mathbb{P}(iH)\rangle_{\textrm{Lie}}$. Let $i\mathbb{Q} := \{iX_{1}, iZ_{1}, iX_{2}, iZ_{2}, iZ_{1}Z_{2}\}$ which is a direct subset of $i\mathbb{P}_{\text{s.q.}}^{(2)} \cup i\mathbb{P}(iH)$ and hence is contained in $\langle i\mathbb{P}_{\text{s.q.}}^{(2)} \cup i\mathbb{P}(iH)\rangle_{[\cdot, \cdot]}$. Furthermore, $i\mathbb{Q}$ generates $\mathfrak{su}(4)$ on its support, and hence satisfies \Cref{item:thm_1} in \Cref{thm:nec_suf_pauli}. Next, the restriction of $i\mathbb{P}_{\text{s.q.}}^{(2)} \cup i\mathbb{P}(iH)$ to $[n]\setminus \textrm{supp}(i\mathbb{Q}) = \{3,...,n\}$ is $\{iX_{3}, iZ_{3}\}$ if $n = 3$ and
    \begin{align}
        \{iX_{3}, iZ_{3}, iX_{j}X_{j+1}, iZ_{j}Z_{j+1}\}_{j=3}^{n-1}
    \end{align}
otherwise. In either case, this set is product universal on $\{3,...,n\}$, so \Cref{item:thm_2} of \Cref{thm:nec_suf_pauli} is also satisfied. Finally, the anti-commutation graph of $i\mathbb{P}_{\text{s.q.}}^{(2)} \cup i\mathbb{P}(iH)$ is readily seen to be connected, meaning that all criteria of \Cref{thm:nec_suf_pauli} are satisfied, and thus $\langle i\mathbb{P}_{\text{s.q.}}^{(2)} \cup i\mathbb{P}(iH)\rangle_{\textrm{Lie}} = \mathfrak{su}(2^n)$. As outlined above, this completes the proof that $\langle i\mathbb{H}^{(k)}_{\text{s.q.}} \cup \{iH\}\rangle_{\mathrm{Lie}}  = \mathfrak{su}(2^n)$ if and only if $k \ge 2$.
\end{proof}

\section{Nontrivial, non-universal example}\label{app:counterexample}

As part of the discussion in \cref{sec:discussion} about generalizing Result~\ref{res:universality_H}, we consider some nontrivial counterexamples.
To corroborate this discussion, we prove the following statement:
\begin{Proposition}
    Let
    \begin{align}
    A:=iH_1^{(2)}&=\alpha\, iIX,\\
    B:=iH_2^{(2)}&=\beta_1\, iIY+\beta_2\, iXZ+\beta_3\, iYZ+\beta_4\, iZZ,
    \end{align}
    where $\alpha,\beta_1,\beta_2,\beta_3,\beta_4\in\mathbb{R}^*$.
    Then,
    \begin{align}
    \langle A,B\rangle_{\mathrm{Lie}}
    =
    \operatorname{span}_{\mathbb R}\{A,B,[A,B]\},
    \end{align}
    and consequently,
    \begin{align}
    \dim \langle A,B\rangle_{\mathrm{Lie}}=3.
    \end{align}
    In particular,
    \begin{align}
    \langle A,B\rangle_{\mathrm{Lie}}\subsetneq \mathfrak{su}(4).
    \end{align}
\end{Proposition}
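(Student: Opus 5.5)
The plan is to compute the three brackets $[A,B]$, $[A,[A,B]]$ and $[B,[A,B]]$ directly, and to show that each lies in $\operatorname{span}_{\mathbb R}\{A,B,[A,B]\}$. Once this holds, the span is closed under brackets with the generators, hence under all brackets by the Jacobi identity, so it equals $\langle A,B\rangle_{\mathrm{Lie}}$. Linear independence of $A$, $B$ and $C:=\textrm{ad}_A(B)$ then follows from the fact that their Pauli sets are pairwise disjoint, as computed below. This gives dimension $3<15=\dim\mathfrak{su}(4)$.

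To make the computation transparent, I will factor $B$ as
\begin{align}
B = i\,\beta_1\, I\otimes Y + i\, W\otimes Z, \qquad W:=\beta_2 X+\beta_3 Y+\beta_4 Z .
\end{align}
The key algebraic facts are $W^2=w^2 I$ with $w^2:=\beta_2^2+\beta_3^2+\beta_4^2>0$, and that $W$ commutes with $I$ and with itself. Consequently, every operator appearing below has the form $I\otimes\sigma$ or $W\otimes\sigma$, and such operators multiply like those of an $\mathfrak{su}(2)$ acting on the second qubit.

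The first step computes $C$. Each of the Pauli strings $IY, XZ, YZ, ZZ$ anti-commutes with $IX$, so, exactly as in the proof of \Cref{lem:isolation}, $C=(\alpha\,iIX)B$. This gives
\begin{align}
C \propto \alpha\bigl(\beta_1\, i\,I\otimes Z - i\,W\otimes Y\bigr),
\end{align}
up to signs that I will fix in the write-up. In particular $i\mathbb{P}(C)=\{IZ,XY,YY,ZY\}$, which is disjoint from $i\mathbb{P}(A)$ and $i\mathbb{P}(B)$.

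The second step uses the same reasoning as the double application of $\textrm{ad}_{iQ}$ in the proof of \Cref{lem:isolation}. Since every term of $B$ anti-commutes with $IX$, we get $\textrm{ad}_A\circ\textrm{ad}_A(B)=-\alpha^2 B$, so $[A,C]\propto B$.

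The third step, $[B,C]$, is the only real obstacle, since here the Pauli terms do not all pairwise anti-commute. I will expand the bracket in the $I\otimes\sigma$ / $W\otimes\sigma$ notation. The cross terms $[I\otimes Y,\,W\otimes Y]$ and $[W\otimes Z,\,I\otimes Z]$ vanish. The remaining terms are
\begin{align}
[I\otimes Y,\, I\otimes Z]\propto I\otimes X, \qquad [W\otimes Z,\, W\otimes Y]=W^2\otimes[Z,Y]\propto w^2\, I\otimes X .
\end{align}
The two contributions carry coefficients $\beta_1^2$ and $w^2$ with the same sign, so $[B,C]\propto \alpha(\beta_1^2+w^2)\, iIX$, which is a nonzero multiple of $A$. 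I will check the signs carefully to confirm that they add rather than cancel. As a consistency check on this sign question, I will restrict to the eigenspaces $W=\pm w$ of $W\otimes I$, which commutes with $A$, $B$ and $C$. On each eigenspace the algebra becomes the $\mathfrak{su}(2)$ generated by $X$ and a vector in the $Y$–$Z$ plane, and these two copies are locked together, which explains why the dimension is $3$ and not $6$.

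This closes the span, and all three claims of the proposition follow.
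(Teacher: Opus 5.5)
Your proposal is correct and follows the same skeleton as the paper's proof: compute $C$, show $[A,C]\propto B$ and $[B,C]\propto A$ so that $\operatorname{span}_{\mathbb R}\{A,B,C\}$ is closed, then use linear independence (which your disjoint-Pauli-set argument makes more explicit than the paper does). The step you flag as the real obstacle is handled more cheaply in the paper. The four Pauli terms of $B$ pairwise anti-commute, so $B^2=-sI$ with $s=\beta_1^2+w^2$, and together with $C=[A,B]=2AB$ this gives $[B,C]=2(BAB-AB^2)=4sA$ in one line. So the signs do add, and the sign question is in any case irrelevant for closure, since both of your contributions are multiples of $iIX$.
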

\begin{proof}
    We will show that the real vector space
    \begin{align}
        L:= \mathrm{span}_\mathbb{R}(A,B,C),
    \end{align}
    where $C:=[A,B]$, is a 3-dimensional subalgebra of $\mathfrak{su}(4)$ containing $A$ and $B$. This will imply,
    \begin{align}
        \langle A,B\rangle_{\mathrm{Lie}} = L.
    \end{align}

    Consider the following operator,
    \begin{align}
        C&:= [A,B]\nonumber\\
        &=2AB\nonumber\\
        &= 2\alpha \left(-\beta_1 iIZ + \beta_2 iXY + \beta_3 i YY + \beta_4 iZY \right).
    \end{align}
    The second line can be obtained from realizing that $A$ anticommutes with $B$.
    We note further that each term in $B$ anticommutes with each other term in $B$, such that,
    \begin{align}
        B^2=-sI_4
    \end{align}
    where $s:=\beta_1^2+\beta_2^2+\beta_3^2+\beta_4^2>0$ and $I_4$ is the identity on all $4$ qubits. 
    Trivially, we also have that
    \begin{align}
        A^2 = -\alpha^2 I_4.
    \end{align}

    With the above we can show,
    \begin{align}
        [A,C] &= [A, 2AB] = 2(A^2B - ABA)\nonumber\\
        &= 2(-\alpha^2 B + A^2B)\nonumber\\
        &= -4\alpha^2 B \\
        [B,C] &= [B,2AB] = 2(BAB - AB^2)\nonumber\\
        &= 2(-B^2A + s A)\nonumber\\
        &= 4s A.
    \end{align}

    Since $[A,A]=[B,B]=[C,C]=0$, it follows from biliearity and antisymmetry of the commutator that $L$ is closed under commutator, i.e., $[L,L]\subseteq L$. Therefore, $L$ is a proper subalgebra of $\mathfrak{su}(4)$.

    Since $A,B\in L$ and $C\in \langle A,B\rangle_\mathrm{Lie}$, we find 
    \begin{align}
        L= \langle A,B\rangle_\mathrm{Lie}.
    \end{align}

    Since $A,B,C$ are linearly independent, we have $\dim(L)=3$.
    Since $\dim(\mathfrak{su}(4))=15$, it follows that
    \begin{align}
        \langle A,B\rangle_\mathrm{Lie} = L \subsetneq \mathfrak{su}(4). 
    \end{align}
    
\end{proof}
\end{document}